\definecolor{ThCSdarkblue}{RGB}{25,34,64}
\newcommand{\RestateRemark}[1]{{\normalfont\bfseries #1}}
\newcommand{\RestateInit}[1]{\newcommand{#1}{}}
\newcommand{\RestateGo}[1]{\renewcommand{#1}{(Restated)}}
\definecolor{darkblue}{RGB}{0,0,92}
\definecolor{nicecyan}{HTML}{006165}
\definecolor{nicered}{HTML}{DB3A34}
\definecolor{nicegreen}{HTML}{6D972E}
\tikzset{every node/.style={font=\vphantom{Ag}, minimum size=1mm, inner sep=0pt, outer sep=0pt}}
\tikzset{every path/.style={line width=0.5pt, shorten >=3pt, shorten <=3pt}} 
\tikzstyle{dot}=[draw, shape=circle, fill, style={font=\vphantom{}}]
\tikzstyle{bluebox}=[shape=rectangle, fill=cyan!50, minimum size=4mm, inner sep=2pt, outer sep=0pt]
\tikzstyle{ubrace} = [draw, thick, decoration={brace, mirror, raise=0.0cm}, decorate, 
\tikzstyle{outline} = [preaction={-, draw, shorten >=2pt, shorten <=2pt, line width=2pt, white}]
\tikzstyle{arrow} = [arrows = {->[bend,round,scale=1.5]}]
\tikzstyle{directededge} = [arrows = {-latex'}]
\tikzstyle{over} = [preaction={-, draw, shorten >=1pt, shorten <=1pt, line width=5pt, white}]
\tikzstyle{partialarrow} = [arrows = {->[bend,round,scale=1.5,harpoon]}]
\tikzstyle{surjectivearrow} = [arrows = {-To[bend,round,scale=1.5]To[bend,round,scale=1.5]}]
\tikzstyle{injectivearrow} = [arrows = {>[bend,round,scale=1.5]->[bend,round,scale=1.5]}]
\tikzstyle{mapsto} = [|->]
\tikzstyle{loop above} = [loop, looseness=15, out=50, in=130]
\tikzstyle{loop below} = [loop, looseness=15, out=230, in=310]
\tikzstyle{loop left} = [loop, looseness=15, out=140, in=220]
\tikzstyle{loop right} = [loop, looseness=15, out=-40, in=40]
\NewDocumentCommand{\mygrid}{}{
	\begin{pgfonlayer}{background}
		\draw [help lines, red!5, step=0.25] (current bounding box.south west) grid (current bounding box.north east);
		\draw [help lines, red!25, step=1] (current bounding box.south west) grid (current bounding box.north east);
		\node [help lines, red, dot] (0,0) {};
	\end{pgfonlayer}
}
\NewDocumentCommand{\myclip}{s m m}{
	\tikzset{every path/.style={}}
	\clip (#2) rectangle (#3);
	\IfBooleanTF{#1}{
		\draw [help lines, red] (#2) rectangle (#3);
		\mygrid
	}{}
	\tikzset{every path/.style={line width=0.5pt, shorten >=3pt, shorten <=3pt}}
}
\newcommand{\sst}{T}
\newcommand{\var}{\mathcal{X}}
\newcommand{\alp}{\Sigma}
\NewDocumentCommand{\out}{s m}{\IfBooleanTF{#1}{
	 \operatorname{out}(#2)}{
	"\operatorname{out}(#2)@output"}}
\newcommand{\finalupd}{O}
\let\oldotimes\otimes
\RenewDocumentCommand{\otimes}{s}{\IfBooleanTF{#1}{
	 \mathbin{\oldotimes}}{
	 \mathbin{"\oldotimes @convolution"}}}
\NewDocumentCommand{\inp}{s m}{\IfBooleanTF{#1}{
	 \operatorname{in}(#2)}{
	 "\operatorname{in}(#2) @input"}}
\NewDocumentCommand{\rel}{s m}{\IfBooleanTF{#1}{
	 \mathscr{R}(#2)}{
	 "\mathscr{R}(#2) @relation"}}
\NewDocumentCommand{\rt}{s m}{\IfBooleanTF{#1}{
	 \operatorname{root}(#2)}{
	"\operatorname{root}(#2) @root"}}
\NewDocumentCommand{\Cover}{s O{C,D} m}{\IfBooleanTF{#1}{
	 \operatorname{Cover}_{#2}(#3)}{
	"\operatorname{Cover}_{#2}(#3) @cover"}}
\NewDocumentCommand{\lrun}{s O{P} m}{\IfBooleanTF{#1}{
	 \operatorname{lft}\nolimits_{#2}^{#3}}{
	"\operatorname{lft}\nolimits_{#2}^{#3} @left run"}}
\NewDocumentCommand{\mrun}{s O{P} m}{\IfBooleanTF{#1}{
	 \operatorname{mid}\nolimits_{#2}^{#3}}{
	"\operatorname{mid}\nolimits_{#2}^{#3} @mid run"}}
\NewDocumentCommand{\rrun}{s O{P} m}{\IfBooleanTF{#1}{
	 \operatorname{rgt}\nolimits_{#2}^{#3}}{
	"\operatorname{rgt}\nolimits_{#2}^{#3} @right run"}}
\NewDocumentCommand{\wrun}{s O{P} m}{\IfBooleanTF{#1}{
	 \operatorname{run}\nolimits_{#2}({#3})}{
	"\operatorname{run}\nolimits_{#2}({#3}) @W-run"}}
\NewDocumentCommand{\seqprj}{s m}{\IfBooleanTF{#1}{
	 #2^\star}{
	"#2^\star @unmarked sequence"}}
\NewDocumentCommand{\pairprj}{s m}{\IfBooleanTF{#1}{
	 #2^\star}{
	"#2^\star @pair norm"}}
\NewDocumentCommand{\flow}{s m}{\IfBooleanTF{#1}{
	 \operatorname{flow}(#2)}{
	"\operatorname{flow}(#2)@flow"}}
\NewDocumentCommand{\pump}{s D<>{2} O{L} m}{\IfBooleanTF{#1}{
	 \operatorname{pump}^{#2}_{#3}(#4)}{
  	"\operatorname{pump}^{#2}_{#3}(#4) @pump"}}
\NewDocumentCommand{\formal}{m}{\mathtt{#1}}
\NewDocumentCommand{\sol}{s m}{\IfBooleanTF{#1}{
	 \operatorname{Sols}(#2)}{
	"\operatorname{Sols}(#2)@solutions"}}
\NewDocumentCommand{\plex}{s m}{\IfBooleanTF{#1}{
	 \le_{#2}}{
	\mathrel{"\le_{#2}@variable order"}}}
\NewDocumentCommand{\pgex}{s m}{\IfBooleanTF{#1}{
	 \ge_{#2}}{
	\mathrel{"\ge_{#2}@variable order"}}}
\NewDocumentCommand{\pleY}{s m}{\IfBooleanTF{#1}{
	 \le_{#2}}{
	\mathrel{"\le_{#2}@variant of pointwise order"}}}
\NewDocumentCommand{\pgeY}{s m}{\IfBooleanTF{#1}{
	 \ge_{#2}}{
	\mathrel{"\ge_{#2}@variant of pointwise order"}}}
\NewDocumentCommand{\ple}{s}{\IfBooleanTF{#1}{
	 \le}{
	\mathrel{"\le@pointwise order"}}}
\NewDocumentCommand{\pge}{s}{\IfBooleanTF{#1}{
	 \ge}{
	\mathrel{"\ge@pointwise order"}}}
\NewDocumentCommand{\pdot}{s}{\IfBooleanTF{#1}{
	 \cdot}{
	\mathbin{"\cdot@pointwise product"}}}
\NewDocumentCommand{\pdotY}{s m}{\IfBooleanTF{#1}{
	 \cdot_{#2}}{
	\mathbin{"\cdot_{#2}@variant of pointwise product"}}}
\NewDocumentCommand{\weight}{s O{u} O{v} m}{\IfBooleanTF{#1}{
	 \operatorname{weight}^{#2}_{#3}(#4)}{
  	"\operatorname{weight}^{#2}_{#3}(#4)@weight"}}
\NewDocumentCommand{\period}{s m}{\IfBooleanTF{#1}{
	 \operatorname{period}(#2)}{
  	"\operatorname{period}(#2)@period"}}
\NewDocumentCommand{\cuts}{s D<>{C} m}{\IfBooleanTF{#1}{
	 {#2}\mspace{-3mu}\operatorname{-cuts}(#3)}{
  	"{#2}\mspace{-3mu}\operatorname{-cuts}(#3)@set of cuts"}}
\NewDocumentCommand{\delay}{s D<>{C} m}{\IfBooleanTF{#1}{
	 {#2}\mspace{-3mu}\operatorname{-delay}(#3)}{
  	"{#2}\mspace{-3mu}\operatorname{-delay}(#3)@delay"}}
\let\oldiota\iota
\RenewDocumentCommand{\iota}{s}{\IfBooleanTF{#1}{
	 \oldiota}{
  	"\oldiota @initial update"}}
\NewDocumentCommand{\Sep}{s O{C,D} m}{\IfBooleanTF{#1}{
	 \operatorname{Sep}_{#2}(#3)}{
	"\operatorname{Sep}_{#2}(#3) @separation"}}
\title{Finite-valued Streaming String Transducers}
\begin{document}

\maketitle

\begin{abstract}
       A  transducer is finite-valued if there exists a bound
    $k$ such that any given
    input maps to at most $k$ outputs. For classical
    one-way transducers, it is known since the 1980s that
    finite valuedness entails decidability of the equivalence
    problem. This result is in contrast with undecidability in the
    general case, making finite-valued transducers
    very appealing. For one-way transducers it is also known that
    finite valuedness itself is decidable and that any
    $k$-valued finite transducer can be decomposed into a union of $k$
    single-valued finite transducers.

    In this article, we extend the above results to copyless streaming string
    transducers (SSTs), addressing open questions raised by Alur and
    Deshmukh in 2011. SSTs strictly extend the expressiveness of
    one-way transducers via additional variables that store
    partial outputs. We prove that any $k$-valued SST can be
    effectively decomposed into a union of $k$ (single-valued)
    deterministic SSTs. As a corollary, we establish the equivalence
    between SSTs and two-way transducers in the finite-valued case, even though these models are generally incomparable. Another corollary provides an elementary upper bound for
    deciding equivalence of finite-valued SSTs. The equivalence problem was already
    known to be decidable, but the proof complexity relied
    on Ehrenfeucht's
    conjecture. Lastly, our main contribution shows that finite valuedness of SSTs is
    decidable, with the complexity  being {\upshape\sc PSpace} in general, and  {\upshape\sc PTime} when the
    number of variables is fixed.  
\end{abstract}

\section{Introduction}\label{sec:intro}
\enlargethispage{-\baselineskip}
Finite-state word transducers are simple devices that allow 
effective reasoning about data transformations. 
In their most basic form, they transform words using finite control. 
\AP
For example, the oldest transducer model, 
known as ""generalized sequential machine"", 
extends deterministic finite state automata by associating each "input" with a corresponding "output"
that is generated by appending finite words specified along the transitions.
This rather simple model of transducer is capable of representing 
basic partial functions between words, e.g.~the left 
rotating function $a_1\: a_2\dots a_n\mapsto a_2\dots a_n \: a_1$.
\AP
Like automata, transducers can also be enhanced with non-determinism,
as well as the ability of scanning the input several times (""two-wayness"").
\AP
For example, the non-deterministic counterpart of 
"generalized sequential machines", called here ""one-way transducers"", 
can be used to represent the right rotating function 
$a_1\dots a_{n-1}\: a_n\mapsto a_n\:a_1\dots a_{n-1}$,
but also word relations that are not partial functions, like for 
instance the relation that associates an input $a_1\dots a_n$ 
with any output from $\{a_1\}^*\dots \{a_n\}^*$.
Similarly, deterministic "two-way transducers" can compute 
the mirror function $a_1\dots a_n \mapsto a_n\dots a_1$,
the squaring function $w \mapsto w w$, etc.

\AP
Inspired by a logic-based approach applicable to arbitrary relational structures~\cite{CE12}, 
""MSO-definable word transductions"" were considered by Engelfriet and
Hoogeboom~\cite{eh01} and shown to be "equivalent" to deterministic 
"two-way transducers".
\AP
Ten years later Alur and Cern{\'y}~\cite{DBLP:conf/fsttcs/AlurC10} proposed 
\reintro{streaming string transducers} (\reintro{SSTs} for short), 
a "one-way" model that uses write-only variables as additional storage. 
In "SSTs", variables store strings and  can be updated by appending or prepending strings, or 
concatenated together, but not duplicated (they are "copyless"). 
Alur and Cern{\'y} also showed that, in the "functional" case, 
that is, when restricting to transducers that represent partial functions,
"SSTs" are "equivalent" to the model studied in~\cite{eh01},
and thus in particular to "two-way transducers".
These "equivalences" between transducer models motivate nowadays 
the use of the term ``regular'' word function, in the spirit of
classical results on regular word languages from automata theory 
and logics due to B\"uchi, Elgot, Trakhtenbrot, Rabin, and others. 

While transducers inherit features like
non-determinism and "two-wayness" from automata,
these characteristics have an
impact on their expressive power compared to automata. 
In the case of automata it is known that adding non-determinism and 
"two-wayness" does not affect the expressive power,
as it only makes them more succinct in terms of number of states.
It does not affect decidability of fundamental problems either,
though succinctness makes some problems 
computationally harder. 
In contrast, for transducers, non-determinism 
and/or "two-wayness"
significantly  change the expressive power.
For instance, non-deterministic "one-way transducers" may capture 
relations that are not partial functions,
and thus not computable by "generalized sequential machines"\footnote{Even if a
non-deterministic one-way transducer computes a partial function, there may be
no equivalent deterministic one-way transducer. However, this question can be decided in {\upshape\sc PTime}~\cite{BealC02,AllauzenM03}.}.
This difference is also apparent at the level of decidability results. 
For example, the "equivalence problem" is in {\upshape\sc NLogSpace}
for "generalized sequential machines",
and undecidable for "one-way transducers" \cite{FischerR68,iba78siam,many_facets}.
We should also mention that in the "functional" case it is 
possible to convert one transducer model to another
(e.g.~convert an "SST" to an "equivalent" "two-way transducer").
In the non-"functional" case, the picture is less satisfactory.
In particular, non-deterministic "SSTs" and non-deterministic 
"two-way transducers" turn out to be incomparable:
for example, the relation
$\{(u\,v, \, v\,u) \::\: u,v\in\Sigma^*\}$ can be represented 
in the former model but not in the latter, while the relation
$\{(w, \, w^n) \::\: w\in\Sigma^*, n\in\mathbb{N}\}$ can be
represented in the latter model but not in the former.
\AP
However, "SSTs" can still be converted to "equivalent"
""non-deterministic MSO transductions""~\cite{DBLP:conf/icalp/AlurD11}, 
which extend the original "MSO transductions" 
by existentially quantified monadic parameters.

\AP
There is however a class of relations that is close
to (regular) word functions in terms of good behavior: 
the class of \reintro{finite-valued} relations.
These are relations that associate a uniformly bounded
number of "outputs" with each "input".
\AP
The concept of bounding the number of outputs associated 
with each "input" in a transducer is closely related to 
the notion of  \reintro{finite ambiguity}, which refers to bounding the number of "accepting" "runs".
 "Ambiguity" has been intensively studied 
in the context of formal languages, where it is shown,
for instance, that "equivalence" of "unambiguous"
automata is decidable in {\upshape\sc PTime} \cite{Stearns1985OnTE}.
In the context of relations, "$k$-valuedness" and "$k$-ambiguity"
were initially considered in the setting of "one-way transducers". 
For example, \cite{GI83} showed that, for fixed $k$,
one can decide in {\upshape\sc PTime} whether a given "one-way transducer"
is "$k$-valued". Similarly, "$k$-valuedness" for fixed $k$ can be decided
in {\upshape\sc PSpace} for "two-way transducers" and "SSTs"~\cite{DBLP:conf/icalp/AlurD11}.

It is also clear that every "$k$-ambiguous"
"one-way transducer" is "$k$-valued". 
Conversely, it was shown that every "$k$-valued"
"one-way transducer" can be converted to an 
"equivalent", "$k$-ambiguous" one 
\cite{Weber93,web96,DBLP:journals/mst/SakarovitchS10}.
This result, even if it deals with a rather simple model
of transducer, already uses advanced normalization techniques 
from automata theory and involves an exponential
blow up in the number of states, as shown in the 
example below.

\begin{example}\label{ex:finitevalued}
    Fix $k\in\mathbb{N}$ and consider the relation
    \[
      R_k ~=~ 
      \big\{ 
        (w_1\dots w_n, \, w_i) \::\:
        n\in\mathbb{N}, \: 1\leq i\leq n, \:
        w_1,\dots,w_n \in \{0,1\}^k
      \big\}.
    \]
    Examples of pairs in this relation, for $k=2$, are 
    $(00\,10\,11,\,00)$, 
    $(00\,10\,11,\,10)$, and 
    $(00\,10\,11,\,11)$. 
    \AP
    For arbitrary $k$, the relation $R_k$ 
    associates at most $2^k$ "outputs" with each "input"
    (we say it is \reintro{$2^k$-valued}).
    This relation can be realized by "one-way transducers"
    that exploit non-determinism to guess which block from
    the "input" becomes the "output". 
    For instance, a possible "transducer@SST" $T_k$ that realizes $R_k$
    repeatedly consumes blocks of $k$ bits from the input, 
    without outputting anything, until it non-deterministically 
    decides to copy the next block, and after that it 
    continues consuming the remaining blocks without output.
    Note that this transducer $T_k$ has $\mathcal{O}(k)$
    states, it is "finite-valued", but not "finite-ambiguous",
    since the number of "accepting" "runs" per "input" 
    depends on the number $n$ of blocks in the input and it 
    is thus unbounded. 
    A "finite-ambiguous" transducer realizing the same relation
    $R_k$ can be obtained at the cost of an exponential blow-up
    in the number of states, 
    for instance by initially guessing and outputting a $k$-bit
    word $w$ (this requires at least $2^k$ states), 
    and then verifying that $w$ occurs 
    as a block of the input. 
\end{example}

Since "$k$-ambiguous" automata can be easily 
decomposed into a  union of $k$ "unambiguous" automata,
the possibility of converting a "$k$-valued" "one-way transducer" 
to a "$k$-ambiguous" one   entails a decomposition result of the 
following form: every "$k$-valued" "one-way transducer" is equivalent
to a finite union of "functional" "one-way transducers".
One advantage of this type of decomposition is that it allows to generalize
the decidability of the "equivalence" problem from "functional" 
to "$k$-valued" "one-way transducers",
which brings us back to the original motivation for considering 
classes of "finite-valued" relations.
Decidability of the "equivalence problem" for
"$k$-valued" "one-way transducers" was independently established
in~\cite{ck86}.
The latter work also states that the same techniques can be adapted 
to show decidability of "equivalence" for "$k$-valued" 
"two-way transducers" as well.
Inspired by~\cite{ck86}, the "equivalence problem" was later shown 
to be decidable also for "$k$-valued" "SSTs"~\cite{equivalence-finite-valued}.
However, the decidability results from~\cite{ck86} and~\cite{equivalence-finite-valued} 
rely on the Ehrenfeucht conjecture~\cite{AL85,Guba86} and therefore provide no elementary
upper bounds on the complexity.

Decomposing "finite-valued" "SSTs" and deciding
"finite valuedness" for "SSTs" were listed as open problems 
in~\cite{DBLP:conf/icalp/AlurD11}, more than 10 years ago,
and represent our main contributions.
Compared to "one-way transducers", new challenges arise with
"SSTs", due to the extra power they enjoy to produce "outputs". 
For example, consider the relation consisting of all pairs
of the form $(w, 0^{n_0}1^{n_1})$ or $(w, 1^{n_1}0^{n_0})$, 
where $w\in\{0,1\}^*$, and $n_b$ ($b=0,1$) is the number of 
occurrences of $b$ in $w$. 
This relation is "$2$-valued", and is not realizable by any "one-way transducer". 
On the other hand, the relation is realized by an "SST" $T$ with a single state 
and two variables, denoted $X_0,X_1$ and both initially empty: 
whenever $T$ reads $b\in\{0,1\}$,
it non-deterministically applies the "update" $X_b := b \,X_b$ or 
$X_b := X_b\,b$ (while leaving $X_{1-b}$ unchanged); at the end of
the input, $T$ "outputs" either $X_0 \, X_1$ or $X_1 \, X_0$. 
The ability of "SSTs" to generate "outputs" in a non-linear way
makes their study challenging and intriguing. To illustrate this, 
consider a slight modification of $T$ where
$X_0$ is initialized with $1$, instead of the empty word:
the new "SST" is not "finite-valued" anymore, because upon 
reading $0^n$ it could "output" any word of the form 
$0^i \,1 \, 0^j$, with $i,j\in\mathbb{N}$ such that $i+j = n$.

Another open problem was to compare the expressive power of "SSTs" and 
"two-way transducers" in the "finite-valued" case.
It is  not hard to see that the standard translation from deterministic
"two-way transducers" to "deterministic" "SSTs" also applies to the
"finite-valued" case (cf.~second part of the proof of Theorem~\ref{thm:finval-SST-2FT}).
The converse translation, however, is far more complicated and relies on
the decomposition theorem for "SSTs" which we establish in this article.

\medskip
\subparagraph*{\bf Contributions.} 
The results presented in this article  draw a
rather complete picture about 
"finite-valued" "SSTs", answering several open problems from~\cite{DBLP:conf/icalp/AlurD11}.
First, we show that "$k$-valued" "SSTs" enjoy the same decomposition property as
"one-way transducers":  
\RestateInit{\restateDecomposition}
\begin{restatable}{theorem}{Decomposition}
\label{thm:FiniteValuedToDecomposition}\RestateRemark{\restateDecomposition}
    For all $k \in \mathbb{N}$, 
    every $k$-valued "SST" can be effectively decomposed into a union of
    $k$ "single-valued" (or even "deterministic") "SSTs".
    The complexity of the construction is elementary.
\end{restatable}

A first consequence of the above theorem is the "equivalence" of
"SSTs" and "two-way transducers" in the "finite-valued" setting:

\begin{theorem}\label{thm:finval-SST-2FT}
  Let $R\subseteq \alp^*\times \alp^*$ be a "finite-valued" relation.
  If $R$ can be realized by an "SST", then an "equivalent" 
  "two-way transducer" can be effectively constructed, and vice-versa.
\end{theorem}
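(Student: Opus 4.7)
The plan is to prove both implications separately. For the direction \emph{from "SST" to two-way transducer}, I would apply Theorem~\ref{thm:FiniteValuedToDecomposition} to decompose the given $k$-valued "SST" into a union of $k$ "deterministic" "SSTs" $T_1,\dots,T_k$. Each deterministic copyless "SST" translates effectively into an equivalent deterministic two-way transducer by the classical Alur--Cerny / Engelfriet--Hoogeboom correspondence. The disjoint union of these $k$ deterministic two-way transducers then yields a non-deterministic two-way transducer realizing $R$, whose valuedness is bounded by $k$.

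For the direction \emph{from two-way transducer to "SST"}, the idea is to adapt the classical construction that translates a deterministic two-way transducer into a deterministic "SST". That construction keeps one register per relevant ``crossing behavior'' of the two-way transducer on a prefix, and updates registers symbol-by-symbol according to the deterministic transitions taken. In the non-deterministic case, the "SST" uses its own non-determinism to guess, at each step, which transition of the two-way transducer to follow, while still tracking the corresponding output segments in registers. The "finite-valued"-ness hypothesis is essential here: it caps the number of distinct accepting runs the construction has to account for, which makes it possible to carry the adaptation out while preserving the copyless restriction on updates.

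The SST-to-2WT direction is essentially an immediate corollary of Theorem~\ref{thm:FiniteValuedToDecomposition} combined with well-known translations, so the substantive content sits entirely in Theorem~\ref{thm:FiniteValuedToDecomposition} itself. The harder step on the other side is to verify that the standard register-tracking construction lifts cleanly to non-determinism while still producing copyless updates; one must argue that, under finite valuedness, the guessing of transitions by the "SST" remains compatible with assigning each segment of the simulated run to a dedicated register without duplication. This is the only place where I expect real care to be required; the rest of the argument is modular composition of known constructions with Theorem~\ref{thm:FiniteValuedToDecomposition}.
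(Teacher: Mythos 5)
Your SST-to-two-way direction is the same as the paper's: decompose via Theorem~\ref{thm:FiniteValuedToDecomposition}, convert each deterministic (or unambiguous) "SST" to a deterministic two-way transducer by the classical functional correspondence, and take the union. No issues there.

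For the two-way-to-SST direction, the construction you sketch (one register per crossing-sequence slot, updated symbol by symbol, with the "SST" guessing the two-way transitions) is the right one, but your justification for why "finite valuedness" makes it go through is not correct. You write that finite valuedness ``caps the number of distinct accepting runs the construction has to account for.'' That is false in general: a "finite-valued" two-way transducer can have \emph{unboundedly many} accepting runs on a fixed input (for instance, runs that idle back and forth producing nothing before a final deterministic phase); finite valuedness bounds the number of distinct \emph{outputs}, not the number of runs. The quantity that must be bounded for the register construction to be copyless and to use a fixed number of registers is instead the \emph{length of crossing sequences}. The actual argument the paper uses is: if a crossing sequence of an accepting run contains a loop, pumping that loop yields arbitrarily many distinct outputs unless the loop is silent, so finite valuedness forces every such loop to produce the empty output; one can then normalize the transducer by collapsing silent loops in crossing sequences, after which crossing-sequence lengths are bounded linearly in the number of states. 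Only then does the standard register-tracking construction for deterministic two-way transducers lift to the non-deterministic case. So the intuition that finite valuedness is the key hypothesis is right, but you need to route it through bounded crossing sequences, not through a (non-existent) bound on the number of accepting runs.
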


\begin{proof} 
 If $R$ is realized by an "SST" $T$, then we can apply
  Theorem~\ref{thm:FiniteValuedToDecomposition} to obtain $k$
  "unambiguous" "SSTs" $T_1,\dots,T_k$ whose union is "equivalent"
  to $T$.
  From \cite{DBLP:conf/fsttcs/AlurC10} we know that in the functional
  case, "SSTs" and two-way transducers are "equivalent".
  Thus, every $T_i$ can be transformed effectively into an "equivalent",
  even deterministic,  two-way transducer.
  From this we obtain an "equivalent" "$k$-ambiguous" two-way transducer.

  For the converse we start with a "$k$-valued" "two-way transducer" $T$ and
  first observe that we can normalise $T$ in such a way that the
  crossing sequences%
\footnote{A crossing sequence is a standard notion
    for finite-state two-way machines~\cite{DBLP:journals/ibmrd/Shepherdson59},
    and it is defined as the sequence of
    states in which a given input position 
    is visited by an accepting run of the machine.}
  of "accepting" "runs" of $T$ are bounded by a
  constant linear in the size of $T$.
  Once we work with runs with bounded crossing sequences we can
  construct an "equivalent" "SST" in the same way as we do for
  deterministic two-way transducers.
  The idea is that during the run of the "SST" 
  the variables record the "outputs"
  generated by the pieces of runs situated to the left 
  of the current input position 
  (see e.g.~\cite{ledent13,DBLP:journals/ijfcs/DartoisJR18} for self-contained proofs). 
\end{proof}

A second consequence of Theorem~\ref{thm:FiniteValuedToDecomposition}
is an elementary upper bound for the 
"equivalence problem" of "finite-valued"
"SSTs"~\cite{equivalence-finite-valued}:

\begin{theorem}\label{thm:equiv-finval-SST}
  The "equivalence problem" for "$k$-valued" "SSTs" can be solved with
  elementary complexity.
\end{theorem}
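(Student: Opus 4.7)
The plan is to leverage the Decomposition Theorem~\ref{thm:FiniteValuedToDecomposition}. Given two $k$-valued SSTs $T_1$ and $T_2$ for which equivalence is to be decided, I would first apply Theorem~\ref{thm:FiniteValuedToDecomposition} to produce, in elementary time, decompositions $T_1 \equiv D_1 \cup \dots \cup D_k$ and $T_2 \equiv D'_1 \cup \dots \cup D'_k$ into unions of deterministic SSTs. The equivalence $T_1 \equiv T_2$ is then equivalent to equality of the unions $\bigcup_i R(D_i) = \bigcup_j R(D'_j)$, which in turn decomposes into the two families of mutual inclusions $R(D_i) \subseteq \bigcup_j R(D'_j)$ for every $i$, and symmetrically with the roles of the $D$'s and $D'$'s swapped.

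Hence it suffices to decide, in elementary time, inclusions of the form $R(D) \subseteq \bigcup_{j=1}^{k} R(D'_j)$ where $D$ and each $D'_j$ are deterministic SSTs. For every pair $(D, D'_j)$, I would argue that the ``agreement'' set
\[
    L_j \;=\; \{\, w \in \operatorname{dom}(D) \cap \operatorname{dom}(D'_j) \;:\; D(w) = D'_j(w) \,\}
\]
is effectively regular and a DFA for $L_j$ can be constructed in elementary time. This rests on the known elementary decision procedure for equivalence of functional (deterministic) SSTs: the product of $D$ and $D'_j$ is again deterministic and simulates both in parallel, and equality of its two outputs can be captured by a finite-state device of elementary size, using standard techniques for deterministic SSTs (equivalently, for deterministic two-way transducers, which are known to coincide with deterministic SSTs in the functional case). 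Once each $L_j$ is available, the desired inclusion $R(D) \subseteq \bigcup_j R(D'_j)$ becomes the regular-language inclusion $\operatorname{dom}(D) \subseteq \bigcup_j L_j$, which is decidable in \textsc{PSpace} and thus elementary.

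The main obstacle is already encapsulated in Theorem~\ref{thm:FiniteValuedToDecomposition}: producing the deterministic decomposition with elementary complexity is the substantive technical step, and is precisely what sidesteps the non-elementary, Ehrenfeucht-based proof of~\cite{equivalence-finite-valued}. Once the decomposition is in hand, reducing equivalence of two $k$-valued SSTs to finitely many regular-language inclusions via pairwise agreement languages is routine and preserves elementary complexity throughout.
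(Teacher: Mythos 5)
Your first step is the same as the paper's: apply Theorem~\ref{thm:FiniteValuedToDecomposition} to replace each $k$-valued SST by an equivalent union of $k$ deterministic SSTs. From there, however, the paper simply invokes \cite[Theorem 4.4]{DBLP:conf/icalp/AlurD11}, which directly checks equivalence of two unions of deterministic SSTs in \textsc{PSpace}, whereas you try to re-derive this via ``agreement languages.'' That re-derivation has a genuine gap.

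The problematic step is the claim that, for two deterministic SSTs $D$ and $D'_j$, the agreement set
\[
    L_j \;=\; \{\, w \in \operatorname{dom}(D) \cap \operatorname{dom}(D'_j) \;:\; D(w) = D'_j(w) \,\}
\]
is regular, on the grounds that a product machine can ``capture equality of its two outputs by a finite-state device.'' This is false for deterministic SSTs. Take $D$ to be the identity function ($X_1 := X_1\,a$ on reading $a$) and $D'$ to be the reversal function ($X_1 := a\,X_1$): both are single-state deterministic SSTs, yet $\{w : D(w) = D'(w)\}$ is exactly the set of palindromes, which is not regular. The intuition that works for one-way transducers (where the agreement set is indeed regular, because outputs are produced left-to-right in lockstep) breaks down for SSTs precisely because registers can build the output out of order, so there is no synchronized finite-state comparison of outputs. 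Consequently, your reduction of $R(D) \subseteq \bigcup_j R(D'_j)$ to the regular-language inclusion $\operatorname{dom}(D) \subseteq \bigcup_j L_j$ is unsound. The paper sidesteps this by citing the Alur--Deshmukh \textsc{PSpace} procedure for equivalence of unions of deterministic SSTs, which does not go through regular agreement sets but rather through a reduction to emptiness of bounded-reversal counter machines. You should do the same: after the decomposition, appeal directly to that result instead of attempting a regular-language argument.
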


\begin{proof}
  Given two "$k$-valued" "SSTs" $T,T'$, we first decompose them into
  unions of $k$
  deterministic "SSTs" $T_1,\dots,T_k$ and $T'_1,\dots,T'_k$,
  respectively.
  Finally, \cite[Theorem 4.4]{DBLP:conf/icalp/AlurD11} shows how to check the
  "equivalence" of $\bigcup_{i=1}^k T_i$ and $\bigcup_{i=1}^k
      T'_i$  in {\upshape\sc PSpace}.
\end{proof}

    Our last, and main, contribution establishes the decidability
    of "finite valuedness" for~"SSTs":

\RestateInit{\restateEffectiveness}
\begin{restatable}{theorem}{Effectiveness}
	\label{thm:effectiveness}\RestateRemark{\restateEffectiveness}
      Given any "SST" $T$, we can decide in {\upshape\sc PSpace}
      if $T$ is
      "finite-valued" (and if the number of variables
      is fixed then the complexity is {\upshape\sc PTime}). 
      Moreover, this problem is at least as hard as the
      "equivalence problem" for "deterministic" "SSTs". 
    \end{restatable}

    This last result is the most technical one, and requires to reason
    on particular substructures ("W-patterns") of "SSTs".
    Such substructures 
    have  been already  used for "one-way transducers", 
    but for "SSTs" genuine challenges arise.
    The starting point of our  proof is a
    recent result allowing to determine if two runs of
    an "SST" are
    far apart~\cite{SSTdelay}.
    The proof then relies on identifying suitable patterns and
    extending techniques from word combinatorics to more involved
    word inequalities. 
    

    Based on the "equivalence" between "SSTs" and "two-way transducers" in the
    "finite-valued" setting (Theorem~\ref{thm:finval-SST-2FT}), and the
    decidability of "finite valuedness" for "SST"
    (Theorem~\ref{thm:effectiveness}), we exhibit 
    an alternative proof for the following (known)
    result:

    \RestateInit{\restateYen}
\begin{restatable}[{\cite{YenY22}}]{corollary}{Yen}
	  \label{thm:finval-twoway}\RestateRemark{\restateYen}
      "Finite valuedness" of "two-way transducers" is decidable in \upshape{{\upshape\sc PSpace}}.
    \end{restatable}

    Observe also that without the results in this paper, the result
    of~\cite{YenY22}   could not help to show
    Theorem~\ref{thm:effectiveness}, because 
    only the
    conversion from "finite-valued" "two-way transducers" to "finite-valued" "SSTs" was
    known (under the assumption that any input positions is visited a
    bounded number of times),
    but not the other way around. 
    %
    Also note that Theorems~\ref{thm:FiniteValuedToDecomposition}
    and~\ref{thm:finval-SST-2FT} together
    imply a
    decomposition result for "finite-valued" "two-way transducers".

    Similar results can be derived for non-deterministic "MSO transductions". 
    More precisely, since "SSTs" and non-deterministic 
    "MSO transductions" are "equivalent"~\cite{DBLP:conf/icalp/AlurD11},
    Theorem~\ref{thm:effectiveness} entails decidability of
    "finite valuedness" for non-deterministic 
    "MSO transductions" as well. Moreover, since in the "single-valued" case, 
    "deterministic" "SSTs"
    and "MSO transductions" are "equivalent"~\cite{DBLP:conf/fsttcs/AlurC10}, 
    Theorem~\ref{thm:FiniteValuedToDecomposition} implies a decomposition result for 
    "MSO transductions": any "$k$-valued" non-deterministic 
    "MSO transduction" can be decomposed as a union of $k$ (deterministic) 
    "MSO transductions". Finally, from Theorem~\ref{thm:finval-SST-2FT}, we
    also obtain that, under the assumption of "finite valuedness", 
    "non-deterministic MSO transductions", 
    "two-way transducers", and "SSTs" are equally expressive.

\section{Preliminaries}\label{sec:preliminaries}

For convenience, technical terms and notations in the
electronic version of this manuscript are hyper-linked to their 
definitions (cf.~\url{https://ctan.org/pkg/knowledge}).

Hereafter, $\mathbb{N}$ (resp. $\mathbb{N}_+$) 
denotes the set of non-negative (resp.~strictly positive) integers,
and $\alp$ denotes a generic alphabet.

\medskip
\subparagraph*{\bf Words and relations.}
We denote by $\varepsilon$ the empty word, 
by $|u|$ the length of a word $u\in\alp^*$, and by
$u[i]$ its $i$-th letter, for $1\leq i\leq |u|$. 
We introduce a "convolution" operation on words,
which is particularly useful to identify robust and
well-behaved classes of relations, 
as it is done for instance in the theory of automatic structures \cite{BlumensathG00}.
For simplicity, we only consider "convolutions" of 
words of the same length. 
\AP
Given $u,v\in\alp^*$, with $|u|=|v|$,
the \reintro{convolution} $""u \otimes* v@convolution""$ 
is a word over $(\alp^2)^*$ of length $|u|=|v|$ such that 
$(u\otimes v)[i] = (u[i],v[i])$ for all $1\leq i\leq |u|$. 
For example, $(aba)\otimes (bcc) = (a,b)(b,c)(a,c)$. 
As $\otimes$ is associative, we may write 
$u\otimes v\otimes w$ for any words $u,v,w$. 

\AP
A relation $R\subseteq (\alp^*)^k$ is ""length-preserving""
if $|u_1|=\dots=|u_k|$ for all $(u_1,\dots,u_k)\in R$. 
\AP
A "length-preserving" relation is ""automatic""
if the language 
$\{ u_1\otimes \dots \otimes u_k \mid (u_1,\dots,u_k)\in R\}$ 
is recognized by a finite state automaton.
\AP
A binary relation $R\subseteq \alp^*\times \alp^*$ (not
necessarily "length-preserving")
is ""$k$-valued"", for $k\in\mathbb{N}$, if for all $u\in\alp^*$, 
there are at most $k$ words $v$ such that $(u,v)\in R$. 
\AP
It is ""finite-valued"" if it is "$k$-valued" for some $k$.

\medskip
\subparagraph*{\bf Variable updates.} 
Fix a finite set of variables $\var = \{X_1, \dots, X_m\}$,
disjoint from the alphabet $\alp$.
\AP
A (copyless) ""update"" is any mapping 
$\alpha: \var\rightarrow (\alp \uplus \var)^*$
such that each variable $X\in\var$ appears \emph{at most once}
in the word $\alpha(X_1)\dots\alpha(X_m)$.
Such an "update" can be morphically extended to words over $\alp\uplus\var$,
by simply letting $\alpha(a) = a$ for all $a\in\alp$.
Using this, we can compose any two "updates" $\alpha,\beta$ to form 
a new "update" $\alpha \, \beta : \var \rightarrow (\alp \uplus \var)^*$, 
defined by $(\alpha \, \beta)(X) = \beta(\alpha(X))$ for all $X\in\var$.
\AP
An "update" is called ""initial@@update"" (resp.~""final@@update"")
if all variables in $\var$ (resp.~$\var\setminus\{X_1\}$) are mapped
to the empty word. The designated variable $X_1$ is used to store
the final "output" produced by an "SST", as defined in the next
paragraph.

\medskip
\subparagraph{\bf Streaming string transducers.}
A (non-deterministic, "copyless") 
""streaming string transducer""
(\reintro{SST} for short) 
is a tuple 
$T = (\alp, \var, Q, Q_{\mathrm{init}},$  $Q_{\mathrm{final}}, \finalupd, \Delta)$,
where $\alp$ is an alphabet, $\var$ is a finite set of variables, $Q$ is a finite set of states, 
$Q_{\mathrm{init}}, Q_{\mathrm{final}} \subseteq Q$ are the sets of 
initial and final states, $\finalupd$ is a function from final states
to "final@@update" "updates", and
$\Delta$ is a finite transition relation consisting of tuples of the form 
$(q, a, \alpha, q')$, where $q,q'\in Q$ are the source and target states, 
$a\in\alp$ is an input symbol, and $\alpha$ is an "update".
We often denote a transition $(q, a, \alpha, q')\in\Delta$ by the 
annotated arrow:
\[q \xrightarrow{a/\alpha} q'.\]
\AP
The ""size"" $|T|$ of an "SST" $T$ is defined as the number 
of states plus the size of its transition relation.

\AP
A ""run"" of $\sst$ 
is a sequence of transitions from $\Delta$ of the form
\[
\rho = q_0 \xrightarrow{a_1/\alpha_1} q_1 \xrightarrow{a_2/\alpha_2} q_2 \dots
q_{n-1} \xrightarrow{a_n/\alpha_n} q_n.
\]
\AP
The \reintro{input consumed by} $\rho$ is the word
$""\inp*{\rho} @input"" = a_1\dots a_n$.
\AP
The ""update induced by"" $\rho$ is the composition $\beta = \alpha_1 \dots \alpha_n$.
We write $\rho:\: u/\beta$ 
to mean that $\rho$ is a "run" with $u$ as
"consumed input" and $\beta$ as "induced update".
\AP
A "run" $\rho$ as above is ""accepting"" if
the first state is initial and the last state is final,
namely, if $q_0\in Q_{\mathrm{init}}$ and $q_n\in Q_{\mathrm{final}}$. 
\AP
In this case, the "induced update", extended to the left
with the "initial update" denoted by $\iota$ and to the right with
the "final update" $\finalupd(q_n)$, gives rise to an
"update" $\iota \, \beta \, \finalupd(q_n)$ that maps $X_1$ to a word 
over $\alp$ and all remaining variables to the empty word.
In particular, the latter "update" determines the 
\reintro{output produced by} $\rho$,
defined as the word 
$""\out*{\rho} @output"" = (\iota \, \beta \, \finalupd(q_n))(X_1)$.

\AP
The \reintro{relation realized by} an "SST" $T$ is 
\[
  ""\rel*{T} @relation""
  ~=~ 
  \big\{ \big(\inp{\rho},\out{\rho}\big)\in \alp^*\times \alp^* ~\big|~
         \rho\text{ "accepting" "run" of }T \big\}
\]
\AP
An "SST" is \reintro{$k$-valued} (resp.~\reintro{finite-valued}) 
if its "realized relation" is so.
\AP
It is ""deterministic"" 
if it has a single
initial state and the transition relation is a partial function 
(from pairs of states and input letters to pairs of "updates" and states).
\AP
It is ""unambiguous"" if it admits at most one "accepting" "run" on each "input".
\AP
Similarly, it is called ""$k$-ambiguous"" if it admits at most 
$k$ "accepting" "runs" on each "input". 
Of course, every "deterministic" "SST" is "unambiguous", and every
"unambiguous" "SST" is "single-valued" (i.e. $1$-valued).
\AP
Two "SSTs" $\sst_1,\sst_2$ are ""equivalent"" 
if $\rel{\sst_1}=\rel{\sst_2}$. 
The "equivalence problem" for "SSTs" is undecidable in
general, and it is so even for one-way transducers~\cite{FischerR68,iba78siam}. 
However, decidability is recovered for "finite-valued" "SSTs":

\begin{theorem}[\cite{equivalence-finite-valued}]
The "equivalence" problem for "finite-valued" "SSTs" is decidable.
\end{theorem}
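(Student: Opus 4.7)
The plan is to adapt the classical Culik--Karhum\"aki argument for finite-valued one-way transducers to the SST setting. Let $T_1,T_2$ be two $k$-valued SSTs over a common alphabet $\Sigma$. The equality $\mathscr{R}(T_1)=\mathscr{R}(T_2)$ is equivalent to the two inclusions $\mathscr{R}(T_1)\subseteq \mathscr{R}(T_2)$ and $\mathscr{R}(T_2)\subseteq \mathscr{R}(T_1)$, and each inclusion states that for every input $u$, every accepting run $\rho$ of $T_i$ on $u$ admits a ``partner'' accepting run $\rho'$ of $T_{3-i}$ on $u$ with $\mathrm{out}(\rho)=\mathrm{out}(\rho')$. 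Because both sides are uniformly bounded in valuedness by $k$, it suffices to certify these matchings using a constant-sized witness per input.

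First I would build a product construction simulating, on the same input, one accepting run of $T_i$ together with up to $k$ accepting runs of $T_{3-i}$. A state of the product stores the control states of the $k+1$ runs, while along a run of the product the variable updates of the individual components accumulate symbolically. A counterexample to inclusion is then an accepting configuration of the product on some input in which the $T_i$-output differs from each of the $k$ candidate $T_{3-i}$-outputs, and equivalence reduces to non-reachability of such configurations.

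Next I would turn the question ``do all accepting configurations avoid such a counterexample?'' into satisfiability of a system of word equations and inequations. For each choice of control path through the product, the final output of each component decomposes into a bounded number of ``register pieces'' whose identity is determined by the update sequence along the path, but whose actual string content depends on the input. The condition that an output of $T_i$ disagrees with every candidate output of $T_{3-i}$ becomes a disjunction of word inequations with these register pieces as unknowns. Ranging over all product runs yields an infinite system; by Ehrenfeucht's compactness theorem, such a system over a finite set of free-monoid unknowns is equivalent to a finite subsystem, whose satisfiability over $\Sigma^*$ is decidable via Makanin's algorithm.

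The main obstacle is taming the register bookkeeping. Since SST updates are copyless but not length-preserving, and can prepend, append and concatenate variable contents in arbitrary patterns, the way in which the final output decomposes into register pieces depends globally on the whole update sequence. Making the Ehrenfeucht argument go through requires a uniform finite parameterization---tracked in the product's finite control---of how each variable of each simulated run is assembled from earlier pieces, so that output equality always translates to a word equation with a bounded number of unknowns regardless of input length. The length-preserving convolution formalism from the preliminaries is precisely the right tool to keep these parameterizations automatic along a run. Because Ehrenfeucht compactness is non-effective in its bounds, this approach yields decidability only, with no elementary upper bound---matching the status the paper quotes from \cite{equivalence-finite-valued} and motivating the later decomposition-based proof.
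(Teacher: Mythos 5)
This statement is not proved in the paper at all: it is imported verbatim with a citation to \cite{equivalence-finite-valued}. The paper's introduction explicitly describes the cited proof as an adaptation of the Culik--Karhum\"aki argument that ``relies on the Ehrenfeucht conjecture and therefore provides no elementary upper bounds''. Your proposal is a reconstruction of exactly that route (reduction to systems of word equations/inequations over register pieces, Ehrenfeucht compactness, Makanin), so at the level of strategy you have matched the attributed proof rather than found a new one.

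What the paper \emph{itself} proves is a strictly stronger statement, Theorem~\ref{thm:equiv-finval-SST}, and by a genuinely different and constructive route: it first decomposes a $k$-valued SST into a union of $k$ deterministic SSTs (Theorem~\ref{thm:FiniteValuedToDecomposition}, built on the cover construction of Proposition~\ref{prop:separation-covering}, the delay machinery from \cite{SSTdelay}, and the Saarela-type word-combinatorics corollaries), and then invokes the known \textsc{PSpace} algorithm for equivalence of unions of deterministic SSTs from~\cite{DBLP:conf/icalp/AlurD11}. This buys an elementary (in fact elementary-in-the-SST-size) bound, which the Ehrenfeucht route cannot give since the compactness theorem is non-effective. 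Your last paragraph correctly notes this and is consistent with the paper's assessment.

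One genuine gap in your sketch as written: a nondeterministic product that runs one accepting run of $T_i$ in parallel with $k$ accepting runs of $T_{3-i}$ and checks that all $k+1$ outputs are pairwise incompatible does \emph{not} certify non-inclusion. Ambiguity of $T_{3-i}$ can be unbounded, so a nondeterministically chosen $k$-tuple of $T_{3-i}$-runs need not cover all of the (up to $k$) output classes; the $T_i$-output could still match some $T_{3-i}$-run that the product failed to guess. In the one-way Culik--Karhum\"aki setting this is precisely where the hard work lies (one must in effect universally quantify over runs of $T_{3-i}$, either by a suitable complementation of automatic relations or by arguing via a canonical system of runs that provably covers all output classes). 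For SSTs the same issue arises, compounded by the fact that outputs are assembled nonlinearly from register pieces. Your closing paragraph gestures at ``uniform finite parameterization'' but does not resolve this quantifier mismatch, and as stated the product-based reduction to word (in)equations would certify too few counterexamples.
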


Note that checking equivalence is known to be in 
{\upshape\sc PSpace} for \emph{"deterministic"} "SSTs".
This easily generalizes to 
unions of "deterministic" (hence "single-valued") "SSTs", because the
"equivalence" checking algorithm is exponential only in the number of variables: 

\begin{theorem}[\cite{DBLP:conf/icalp/AlurD11}]
The following problem is in \upshape{{\upshape\sc PSpace}}: 
given $n+m$ "deterministic" "SSTs" $\sst_1,\dots,\sst_n,\sst'_1,\dots,\sst'_m$, 
decide whether $\bigcup_{i=1}^n\rel{\sst_i} = \bigcup_{j=1}^m\rel{\sst'_j}$.
\end{theorem}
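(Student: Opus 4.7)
The plan is to lift the existing PSpace algorithm for equivalence of two deterministic SSTs \cite{DBLP:conf/fsttcs/AlurC10} by observing that the combinatorial heart of that algorithm blows up exponentially only in the number of registers, not in the number of machines. Taking several deterministic SSTs in parallel only adds a polynomial factor to the overall bookkeeping, so the whole procedure stays in PSpace.

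First, by Savitch's theorem it suffices to decide \emph{inequivalence} in NPSpace. A witness of inequivalence between $\bigcup_{i=1}^n \mathcal{R}(T_i)$ and $\bigcup_{j=1}^m \mathcal{R}(T'_j)$ consists of an input $u$ together with either an index $i$ such that $T_i$ accepts $u$ with an output not produced on $u$ by any $T'_j$, or the symmetric statement with the roles of the two unions swapped. The distinguished index is guessed with logarithmic space, so it is enough to check the existence of such a $u$.

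Second, I would guess $u$ letter by letter and simulate, on-the-fly, the product of the distinguished $T_i$ together with all of $T'_1, \dots, T'_m$. The configuration maintained at each step consists of: the current control state of each of the $m+1$ machines, and a symbolic data structure recording the algebraic relations currently holding between the contents of the registers of $T_i$ and those of each $T'_j$. This data structure is precisely the one used in the two-SST algorithm of \cite{DBLP:conf/fsttcs/AlurC10}, but maintained independently for each pair $(T_i, T'_j)$. Because the updates are copyless, reading one input letter updates each such relation in polynomial time and preserves the polynomial bit-size of its description, since that bit-size is already exponential in the number of registers of the pair but only polynomial in everything else. At the end of the input, the symbolic relations, together with the final updates $\Omega(\cdot)$ of the participating machines, suffice to decide whether $T_i(u) = T'_j(u)$ for each $T'_j$ reaching a final state, without ever materializing the (possibly exponentially long) output word; if no $T'_j$ matches, the run is accepted.

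The one point that must be checked, and which is presumably why the statement is advertised as ``easily generalizing'' the two-SST case, is that the per-pair sharing structure composes cleanly when one machine ($T_i$) is shared across $m$ pairs: the register update performed by $T_i$ on reading a letter is a single copyless operation, and it is applied identically inside each of the $m$ symbolic relations, so consistency is automatic. Once this verification is in place, the total space used is polynomial in the description of the SSTs plus exponential in the largest number of variables of any single machine, which already matches the bound for the two-SST case. Hence the procedure runs in NPSpace, and by Savitch in PSpace.
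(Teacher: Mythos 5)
The paper does not actually prove this theorem: it is cited to \cite{DBLP:conf/icalp/AlurD11}, and the only justification offered in the text is the one-sentence observation that the known \textsc{PSpace} procedure for equivalence of two deterministic SSTs blows up exponentially only in the number of variables, hence scales to unions. Your proposal tries to reconstruct that procedure, and this is where the gap lies.

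The central step of your argument is the assertion that there is ``a symbolic data structure recording the algebraic relations currently holding between the contents of the registers of $T_i$ and those of each $T'_j$'', that this structure has polynomially-bounded description size, that it can be maintained on-the-fly letter by letter, and that at the end of the input it is enough to decide $T_i(u)=T'_j(u)$ without materializing the outputs. None of this is justified, and the reference you attach to it (\cite{DBLP:conf/fsttcs/AlurC10}) does not contain such a data structure, nor an equivalence algorithm at all --- that paper establishes expressive equivalence of deterministic SSTs with MSO transductions. The decision procedure for deterministic SST equivalence is from Alur and Cern{\'y} (POPL 2011, cited in this paper as \cite{DBLP:conf/popl/AlurC11}), and it does not work by tracking a forward invariant over register contents. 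Register contents are arbitrary strings growing linearly in the length of the unbounded input, and no finitely-presentable ``algebraic relation'' between two registers is known that both is maintainable online in polynomial space and determines future output equality; if such an invariant existed, one would not need the much heavier machinery actually used. What the real algorithm does is nondeterministically guess a position where the two outputs would differ (or a length mismatch) and track that position's location within the registers using a bounded-reversal counter machine, then reduce to emptiness of that machine; the footnote to the $k$-valuedness theorem in this very paper spells this out. This guess-and-verify structure, not an on-the-fly symbolic simulation, is the load-bearing idea, and it is precisely what is missing from your proposal.

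Consequently the ``easy generalization'' you rely on is also left unsupported: to handle a union on the right-hand side one must, for a single guessed $T_i$, certify simultaneously for every $j$ either that $T'_j$ rejects $u$ or that $T_i(u)\neq T'_j(u)$, which in the counter-machine picture means running $m$ independent mismatch-tracking gadgets in parallel and arguing that the resulting counter machine still has emptiness decidable in the required space. That is the point the paper is gesturing at with ``exponential only in the number of variables'' and it requires a short but real argument about how the counters compose, not an appeal to a nonexistent register-relation invariant. Your high-level framing (reduce to inequivalence, guess the distinguishing $T_i$, use Savitch) is fine; the middle of the argument needs to be replaced by the actual counter-machine reduction and an analysis of its size.
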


For any fixed $k$, the "$k$-valuedness" property is 
decidable in {\upshape\sc PSpace}:

\begin{theorem}[\cite{DBLP:conf/icalp/AlurD11}]
For any fixed $k\in\mathbb{N}$, the following problem 
is in\footnote{In
  \cite{DBLP:conf/icalp/AlurD11}, no complexity result is provided,
  but the decidability procedure relies on a reduction to the
  emptiness of a $1$-reversal $k(k+1)$-counter machine, based on the
  proof for equivalence of deterministic
SST~\cite{DBLP:conf/popl/AlurC11}. The counter machine is 
exponential in the number of variables only. The result follows since the
emptiness problem for counter machines with fixed number of reversals and fixed
number of counters is in
{\upshape\sc NLogSpace}~\cite{DBLP:journals/jcss/GurariI81}.} \upshape{{\upshape\sc PSpace}}: 
given an "SST" $T$, decide whether $T$ is "$k$-valued".
It is in {\upshape\sc PTime} if one further restricts to "SSTs"
with a fixed number of variables.
\end{theorem}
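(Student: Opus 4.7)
The plan is to reduce "$k$-valuedness" to the emptiness of a $1$-reversal counter machine of controlled size, following the approach sketched in~\cite{DBLP:conf/icalp/AlurD11} and relying on the counter-machine emptiness result of~\cite{DBLP:journals/jcss/GurariI81}.

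First, I would negate the property: $T$ fails to be "$k$-valued" iff there exist an "input" $u$ and $k+1$ "accepting" "runs" $\rho_0, \ldots, \rho_k$ of $T$ on $u$ with pairwise distinct "outputs". To detect such witnesses I would form the $(k+1)$-fold synchronous product $T^{k+1}$, whose states are tuples in $Q^{k+1}$ and which non-deterministically picks one transition of $T$ per coordinate while reading a common input letter. Since $k$ is fixed, this product has size polynomial in $|T|$. What remains is to detect that the $k+1$ final "outputs" are pairwise distinct, which requires inspecting how they evolve along the joint "run".

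Next, for each pair $0 \leq i < j \leq k$ of coordinates I would use two counters to track the relationship between the $i$-th and $j$-th "outputs". The idea, borrowed from the equivalence check for "deterministic" "SSTs"~\cite{DBLP:conf/popl/AlurC11}, is to maintain at each step either a symbolic certificate that the partial "outputs" produced so far admit a common prefix (modulo sliding by the current lengths of corresponding variable contents), or the size of a pending suffix witnessing that a mismatch has already occurred. Because "updates" are copyless, each "update" contributes a bounded increment or decrement to such a length, and once a mismatch has been detected its relative position drifts monotonically until the end of the "run", so the associated counter makes at most one reversal. Two counters per pair (one for each direction of mismatch) give $k(k+1)$ counters in total, each $1$-reversal.

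Finally, the whole construction yields a $1$-reversal $k(k+1)$-counter machine $M$ whose finite control has size polynomial in $|T|$ but \emph{exponential} in the number of variables $|\var|$: tracking prefix-agreement requires remembering, for every pair of coordinates, an equation relating the contents of the $|\var|$ variables, of singly exponential size as in~\cite{DBLP:conf/popl/AlurC11}. Since $k$ is fixed, both the number of counters and the number of reversals are fixed, and emptiness of such machines is in \textsc{NLogSpace} by~\cite{DBLP:journals/jcss/GurariI81}. Combined with the size of $M$, this yields \textsc{PSpace} in general, and \textsc{PTime} when $|\var|$ is also fixed (so that the finite control is itself polynomial). The main obstacle is justifying rigorously the $1$-reversal property of these counters: one must precisely define what each counter measures (essentially a "lag" between two "outputs") and invoke the copylessness of "updates" to argue that once a mismatch has arisen the counter only drifts in one direction for the rest of the "run".
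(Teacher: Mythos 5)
Your proposal follows essentially the same approach the paper sketches in its footnote: negate $k$-valuedness, form a polynomial-size $(k+1)$-fold product since $k$ is fixed, reduce to emptiness of a $1$-reversal $k(k+1)$-counter machine (two counters per pair, in the style of the Alur--\v{C}ern\'y equivalence test for deterministic "SSTs") whose finite control is singly exponential in $|\var|$ only, and conclude via the Gurari--Ibarra \textsc{NLogSpace} emptiness result. The paper does not spell this out beyond the footnote (the result is cited from \cite{DBLP:conf/icalp/AlurD11}), so your reconstruction is a faithful and reasonably detailed unpacking of the same argument; the one point you flag yourself---rigorously establishing the $1$-reversal property for the drift counters---is exactly the technical core deferred to \cite{DBLP:conf/popl/AlurC11}.
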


The decidability status of "finite valuedness" for "SSTs", 
i.e., if $k$ is unknown, was an open problem.
Part of our contribution is to show that this problem is  decidable, too.

\subsection{Pumping and word combinatorics}\label{subsec:pumping}

When reasoning with automata, it is common practice
to use pumping arguments.
This section introduces pumping for "SSTs", as well 
as combinatorial results for reasoning about (in)equalities 
between pumped "outputs" of "SSTs".


In order to have adequate properties for pumped runs of "SSTs",
the notion of loop needs to be defined so as to take into 
account how the content of variables ``flows'' into 
other variables when performing an "update". 
\AP 
We define the ""skeleton"" of an "update" $\alpha: \var\rightarrow (\alp \uplus \var)^*$
as the "update" $\hat\alpha: \var \rightarrow \var^*$ obtained from $\alpha$
by removing all the letters of $\Sigma$ from the right-hand side.
\AP
Note that there are only finitely many "skeletons", and 
their composition forms a finite monoid, called the
""skeleton monoid""
(this notion is very similar to the flow monoid from \cite{equivalence-finite-valued},
 but does not rely on any normalization).

\AP
A ""loop"" of a "run" $\rho$ of an "SST" 
is any factor $L$ of $\rho$ 
that starts and ends in the same state and 
"induces@induced update" a ""skeleton-idempotent"" "update",
namely, an "update" $\alpha$ such that $\alpha$ and $\alpha\,\alpha$
have the same "skeleton". 
For example, the "update" 
$\alpha: \: X_1\mapsto a \, X_1 \, b \, X_2 \, c, \:  X_2\mapsto a$
is "skeleton-idempotent" and thus can be part of a "loop".
A "loop" in a "run" will be denoted by an interval $[i,j]$.
In this case, it is convenient to assume that the indices $i,j$ represent
``positions'' in-between the transitions, thus identifying occurrences of 
states; in this way, adjacent "loops" can be denoted by intervals of 
the form $[i_1,i_2]$, $[i_2,i_3]$, etc.
In particular, if the run consists of $n$ transitions, then 
the largest possible interval on it is $[0,n]$.
\AP
For technical reasons, we do allow ""empty@@loop"" "loops",
that is, "loops" of the form $[i,j]$, with $i=j$ and with 
the "induced update" being the identity function on $\var$.

\AP
The run obtained from $\rho$ by pumping $n$ times a "loop" $L$ 
is denoted $""\pump*<n>[L]{\rho} @pumping""$.
If we are given an $m$-tuple of \emph{pairwise disjoint} "loops"
$\bar{L}=(L_1,\dots,L_m)$ and an $m$-tuple of (positive) numbers 
$\bar{n}=(n_1,\dots,n_m)$, then we write
$\reintro[pumping]{\pump*<\bar n>[\bar L]{\rho}}$ 
for the "run" obtained by pumping simultaneously $n_i$ times
$L_i$, for each $1 \le i \le m$.

The next lemma is a Ramsey-type argument that,
based on the number of states of the "SST", the size of the 
"skeleton monoid", and a number $n$,
derives a minimum length for a "run" to witness $n+1$ points
and "loops" between pairs of any of these points. 
The reader can refer to \cite{JeckerSTACS21}
to get good estimates of the values of $E,H$.

\begin{lemma}\label{lem:ramsey}
Given an "SST", one can compute two numbers $E,H$ such that
for every run $\rho$, every $n\in\mathbb{N}$, and every set
$I \subseteq \{0,\dots,|\rho|\}$ of cardinality $E n^H+1$, 
there is a subset $I'\subseteq I$ of cardinality $n + 1$
such that for all $i < j \in I'$ the interval
$[i,j]$ is a "loop" of~$\rho$.
The values of $E,H$ are elementary in the size of the "SST".
\end{lemma}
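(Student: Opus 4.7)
The plan is to combine three ingredients: a pigeonhole argument on the state set $Q$ of the SST, a Ramsey-type theorem for composition-compatible colorings by the finite skeleton monoid $M$, and a short algebraic observation forcing the resulting monochromatic color to be idempotent.

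First, I would apply pigeonhole on states: each position in $\{0,\dots,|\rho|\}$ corresponds to a state of $\rho$, so any set $I$ of size $|Q|\cdot N+1$ contains a subset $I_1 \subseteq I$ of cardinality $N+1$ whose positions all share a common state. This already enforces the ``same state'' part of the definition of a loop, at the cost of a factor $|Q|$.

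Next, for each pair $i<j$ in $I_1$ I would color $\{i,j\}$ by the skeleton $c(i,j)\in M$ of the update induced by the factor of $\rho$ between positions $i$ and $j$. Because composition of updates respects skeletons, this coloring is composition-compatible, i.e.\ $c(i,j)\cdot c(j,k)=c(i,k)$ for all $i<j<k$ in $I_1$. I would then invoke a Ramsey-type theorem for composition-compatible colorings by a finite monoid: there is a bound $f(n,|M|)$ polynomial in $n$, with degree depending only on $|M|$, such that any such coloring over $f(n,|M|)+1$ positions admits a monochromatic subset of size $n+1$. A quantitative version of this statement with explicit polynomial dependence on $n$ is precisely what is established in~\cite{JeckerSTACS21}, to which the lemma defers for good estimates of $E$ and $H$.

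Finally, the common color $s \in M$ on the resulting monochromatic subset $I'\subseteq I_1$ is automatically idempotent: for any $i<j<k$ in $I'$, composition-compatibility yields $s = c(i,k) = c(i,j)\cdot c(j,k) = s\cdot s$. Thus every pair $i<j$ in $I'$ induces an update whose skeleton is idempotent, and whose endpoints share a common state, so $[i,j]$ is a loop of $\rho$. Chaining the two reductions produces a cardinality threshold of the form $En^H+1$, with $E$ and $H$ elementary in $|T|$ since so are $|Q|$ and $|M|$ (the skeleton monoid is built from copyless renamings of finitely many variables). The main obstacle is obtaining the polynomial-in-$n$ Ramsey bound for monoid-valued colorings of pairs: a direct use of classical Ramsey theory on $|M|$ colors would give an exponential dependence on $n$, and circumventing this requires exploiting the algebraic structure of $M$ (for instance via Green's relations or Simon-style factorization forests), exactly as done in~\cite{JeckerSTACS21}.
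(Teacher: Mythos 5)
Your proof follows essentially the same route as the paper's: both reduce the claim to the quantitative Ramsey theorem for finite monoids of \cite{JeckerSTACS21}. The only packaging difference is that you handle the ``same state'' constraint by a preliminary pigeonhole step on $Q$, whereas the paper folds state-matching directly into the monoid by working in the product of the "skeleton monoid" with $((Q\times Q)\cup\{0\},\cdot)$; the effect on the elementary bounds is negligible, so the two are interchangeable. One small loose end: your derivation that the common color $s$ is "idempotent" uses three points $i<j<k$ of $I'$, so it only applies when $n\geq 2$; for $n=1$ a monochromatic pair need not have idempotent color a priori. The fix is immediate --- either observe that the result of \cite{JeckerSTACS21} already returns an \emph{idempotent} color (it produces $n$ consecutive infixes all multiplying out to the same idempotent, not merely a monochromatic set), in which case your separate idempotency argument is redundant, or enlarge the constant $E$ so that even the $n=1$ instance of the lemma first yields a monochromatic triple from which a pair can be extracted.
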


\begin{proof}
  The article \cite{JeckerSTACS21} shows for any given monoid $M$ a bound
  $R_M(n)$ with the property that any sequence from $M^*$ of length larger than
  $R_M(n)$ contains $n$ consecutive infixes such that for some idempotent
  $e$ of $M$ (i.e., satisfying $ee =e$) each such infix multiplies out
  to $e$.
  In our case, the monoid $M$ is the product of the 
  monoid $((Q \times Q) \cup \{0\},\cdot)$ with $(p,q)\cdot (q,r)=(p,r)$
  (resp.~$(p,q)\cdot (q',r)=0$ if $q \not= q'$) and the "skeleton monoid" of the "SST".
  Thus, any infix of the run that multiplies out to an idempotent
  (after mapping each transition to the corresponding monoid element)
  corresponds to  a "loop" of the "SST".
  The upper bound $R_M(n)$ is  exponential only in the size of $M$ (cf.~Theorem 1 in \cite{JeckerSTACS21}).
\end{proof}

Below, we describe the effect on the "output" of "pumping" 
"loops" in a "run" of an "SST".
We start with the following simple combinatorial result:

\begin{lemma}\label{lem:loop-pumping}
Let $\alpha$ be a "skeleton-idempotent" "update".
For every variable $X$, there exist two words $u,v \in \Sigma^*$ 
such that, for all positive natural numbers $n \in \mathbb{N}_+$, 
$\alpha^n(X)=u^{n-1} \, \alpha(X) \, v^{n-1}$.
\end{lemma}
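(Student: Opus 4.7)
My plan is to exploit skeleton-idempotency together with copylessness to pin down a rigid structure for $\alpha$ on the variables occurring in $\alpha(X)$, and then read off $u$ and $v$ directly.

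First, I would write $\alpha(X) = w_0\,Y_1\,w_1\,\cdots\,Y_k\,w_k$ with $Y_1,\dots,Y_k \in \var$ (all distinct by copylessness) and $w_0,\dots,w_k \in \Sigma^*$. If $k = 0$, then $\alpha(X) \in \Sigma^*$ is fixed by $\alpha$, so $\alpha^n(X) = \alpha(X)$ for all $n \ge 1$ and $u = v = \varepsilon$ works. If $k \ge 1$, skeleton-idempotency gives
\[\hat\alpha(Y_1)\,\hat\alpha(Y_2)\,\cdots\,\hat\alpha(Y_k) \;=\; \hat\alpha^2(X) \;=\; \hat\alpha(X) \;=\; Y_1\,Y_2\,\cdots\,Y_k.\]
Each $Y_i$ thus appears in exactly one factor $\hat\alpha(Y_{f(i)})$ on the left; but copylessness forbids $Y_i$ from appearing in two different $\alpha(Z)$'s, and since it already appears in $\alpha(X)$ we must have $X = Y_{f(i)}$. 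This forces $X = Y_j$ for some index $j$, after which $\hat\alpha(Y_j) = Y_1\,\cdots\,Y_k$ already accounts for every occurrence of every $Y_l$, and hence $\hat\alpha(Y_i) = \varepsilon$ (equivalently $\alpha(Y_i) \in \Sigma^*$) for every $i \ne j$.

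With this rigidity in hand, I would set
\[u \;=\; w_0\,\alpha(Y_1)\,w_1\,\cdots\,\alpha(Y_{j-1})\,w_{j-1}, \qquad v \;=\; w_j\,\alpha(Y_{j+1})\,w_{j+1}\,\cdots\,\alpha(Y_k)\,w_k,\]
both of which lie in $\Sigma^*$ thanks to the previous step. A direct expansion yields $\alpha^2(X) = w_0\,\alpha(Y_1)\,w_1\,\cdots\,\alpha(Y_k)\,w_k = u\,\alpha(X)\,v$, and then a straightforward induction on $n$ gives $\alpha^n(X) = \alpha(u^{n-2}\,\alpha(X)\,v^{n-2}) = u^{n-2}\,\alpha^2(X)\,v^{n-2} = u^{n-1}\,\alpha(X)\,v^{n-1}$, using that $\alpha$ fixes every word of $\Sigma^*$ pointwise. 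The main subtlety is the rigidity step combining copylessness with idempotency; once it identifies $Y_j = X$ as the sole variable that ``expands'' under $\alpha$, the rest is a routine calculation.
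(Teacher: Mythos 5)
Your proof is correct, and the final setup (identifying $\alpha(X) = h\,X\,t$ with $\alpha(h),\alpha(t) \in \Sigma^*$ and taking $u=\alpha(h)$, $v=\alpha(t)$, then iterating) is the same as in the paper. What differs is how you arrive at the two structural facts — that $X$ itself occurs among the variables of $\alpha(X)$, and that every other variable $Y_i$ there has $\hat\alpha(Y_i)=\varepsilon$. The paper deduces these from a separate Claim (``if $\hat\alpha(X)\neq\varepsilon$ then $X$ occurs in $\alpha(X)$''), which it proves by induction on the number of variables with nonempty skeleton, together with a case analysis; the second fact is then extracted via the contrapositive of the Claim combined with copylessness. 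You instead get both facts in one stroke from the equality $\hat\alpha(Y_1)\cdots\hat\alpha(Y_k) = \hat\alpha(X) = Y_1\cdots Y_k$ forced by idempotency: comparing the occurrences of each $Y_i$ on the two sides and invoking copylessness against its occurrence in $\alpha(X)$ forces $X=Y_j$ to be the owning factor for every $Y_i$, and a length count then annihilates all the other $\hat\alpha(Y_i)$. This is a genuinely more direct route: it removes the induction and the case split entirely, and it keeps the reasoning localized to the single pair $(\hat\alpha(X), \hat\alpha^2(X))$ rather than a global statement about all variables with nonempty skeleton. The paper's Claim is slightly more general (a global statement), but that extra generality is not used elsewhere, so your closer-fitting argument is arguably preferable for this lemma.
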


\begin{proof}
    Let $\hat{\alpha}$ be the "idempotent" "skeleton" of $\alpha$. 
    We first prove the following claim:
    
    \begin{claim*}
    For all $X\in\var$, if $\hat\alpha(X) \neq \varepsilon$,
    then $X$ occurs in $\alpha(X)$.
    \end{claim*}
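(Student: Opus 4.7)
The plan is to argue by contradiction, leveraging copylessness together with the skeleton-idempotency of $\alpha$. The key idea is that if $X$ does not occur in $\hat\alpha(X)$, then $X$ together with the variables appearing in $\hat\alpha(X)$ form a family of pairwise distinct variables to which copylessness can be applied globally, producing two occurrences of the same variable.

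First, I would assume $\hat\alpha(X)\neq\varepsilon$ and write $\hat\alpha(X)=Y_1Y_2\cdots Y_k$ with $k\geq 1$ and each $Y_i\in\var$. Since copylessness of $\alpha$ transfers to $\hat\alpha$ (removing letters from $\Sigma$ cannot create or destroy variable occurrences), the variables $Y_1,\dots,Y_k$ must be pairwise distinct. Next, I would apply the skeleton-idempotency $\hat\alpha\,\hat\alpha=\hat\alpha$ to the variable $X$, which yields
\[
  \hat\alpha(Y_1)\,\hat\alpha(Y_2)\,\cdots\,\hat\alpha(Y_k) \;=\; \hat\alpha(X) \;=\; Y_1Y_2\cdots Y_k,
\]
so in particular every $Y_j$ occurs exactly once somewhere in the concatenation on the left-hand side.

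For the contradiction, I would suppose $X\notin\{Y_1,\dots,Y_k\}$. Then $X,Y_1,\dots,Y_k$ are $k+1$ pairwise distinct elements of $\var$, and copylessness of $\hat\alpha$ forces every variable to occur at most once across the concatenation $\hat\alpha(X)\,\hat\alpha(Y_1)\cdots\hat\alpha(Y_k)$. But each $Y_j$ already occurs once in $\hat\alpha(X)=Y_1\cdots Y_k$ and once in $\hat\alpha(Y_1)\cdots\hat\alpha(Y_k)=Y_1\cdots Y_k$, giving two occurrences of $Y_j$---a contradiction. Hence $X\in\{Y_1,\dots,Y_k\}$, so $X$ occurs in $\hat\alpha(X)$ and therefore also in $\alpha(X)$.

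The only subtle point is the double-counting in the last step: the hypothesis $X\notin\{Y_1,\dots,Y_k\}$ is precisely what makes $X,Y_1,\dots,Y_k$ pairwise distinct, which is exactly the condition under which copylessness can be applied simultaneously to all $k+1$ values and produce the violation. No word-combinatorial machinery beyond the idempotency equation is needed, and no induction on $k$ is required.
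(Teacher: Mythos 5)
Your proof is correct, and it takes a genuinely different and cleaner route than the paper's. The paper argues by induction on the number of variables $Z$ with $\hat\alpha(Z)\neq\varepsilon$, with a case analysis on whether the chosen variable $X_0$ occurs in $\alpha(X_0)$ and passing to restrictions of $\alpha$ to smaller variable sets; the inductive step requires some care with which variable one restricts away. You instead give a direct, non-inductive contradiction: writing $\hat\alpha(X)=Y_1\cdots Y_k$ (pairwise distinct by copylessness of $\hat\alpha$) and applying idempotency to get $\hat\alpha(Y_1)\cdots\hat\alpha(Y_k)=Y_1\cdots Y_k$, you observe that if $X\notin\{Y_1,\dots,Y_k\}$ then the $k+1$ pairwise distinct variables $X,Y_1,\dots,Y_k$ would jointly produce two occurrences of each $Y_j$ in $\hat\alpha(X)\,\hat\alpha(Y_1)\cdots\hat\alpha(Y_k)$, contradicting copylessness. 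This buys you a shorter, self-contained argument that uses only the two ingredients that matter (copylessness passes to the skeleton; the skeleton is idempotent) and avoids any recursion on restricted updates. The one point to be explicit about, which you handle correctly, is that the ``at most once'' constraint of copylessness applies to the concatenation $\hat\alpha(Z_1)\cdots\hat\alpha(Z_r)$ for \emph{any} subset $\{Z_1,\dots,Z_r\}$ of pairwise distinct variables, not only the full tuple, since discarding blocks of a concatenation can only decrease occurrence counts.
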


\begin{proof}[Proof of the claim]
The proof is by induction on the number of variables $X$
such that $\hat\alpha(X) \neq \varepsilon$.
The base case holds vacuously.
As for the induction step, fix a variable $X_0$ such that
$\hat\alpha(X_0) \neq \varepsilon$.
We distinguish two cases:
\begin{itemize}
\item If $X_0$ occurs in $\alpha(X_0)$, then the claim clearly holds for $X_0$
      and it remains to prove it for all other variables $Y\in\var\setminus\{X_0\}$.
      Since $\alpha$ is "copyless",
      $X_0$ does not occur in $\alpha(Y)$, for all $Y\in\var\setminus\{X_0\}$.
      Therefore, the restricted "update" $\alpha|_{\var\setminus\{X_0\}}$ 
      has an "idempotent" "skeleton", and by the inductive hypothesis 
      it satisfies the claim.
      From this we immediately derive that the claim also holds for the 
      original "update" $\alpha$.
\item If $X_0$ does not occur in $\alpha(X_0)$, then 
      $\alpha(X_0)$ still contains at least one occurrence of another variable,
      since $\hat\alpha(X_0) \neq \varepsilon$.
      So, suppose that $\alpha(X_0) = h\, Y \, t$ 
      for some $h,t\in (\var\cup\Sigma)^*$ and $Y\in\var$. 
      Since $\alpha$ is "skeleton-idempotent", we have
      $(\alpha\alpha)(X_0) = \alpha(h) \, \alpha(Y) \, \alpha(t) = h'\, Y\, t'$
      for some $h',t'\in (\var\cup\Sigma^*)$. 
      Now, if $Y$ occurs in $\alpha(Y)$, then we can
      apply the inductive hypothesis on the restriction
      $\alpha|_{\var\setminus\{Y\}}$, as in the previous case,
      reaching a contradiction for $X_0$.
      Otherwise, if $Y$ does not occur in $\alpha(Y)$, we also reach
      a contradiction by arguing as follows.
      Since $Y$ occurs in $\alpha(h) \, \alpha(Y) \, \alpha(t)$
      but not in $\alpha(Y)$, then it occurs in $\alpha(h)\,\alpha(t)$. 
      Since $X_0$ is the unique variable such that $Y$ occurs in $\alpha(X_0)$ 
      (because $\alpha$ is "copyless"), necessarily $X_0$ occurs in $h\, t$ 
      and hence in $\alpha(X_0)$ as well, which contradicts the initial 
      assumption. 
      \qedhere
\end{itemize}
\end{proof}
   
      We conclude the proof of the lemma. Let $X\in \var$. 
      If $\alpha(X)$ does not contain any variable, then
      we immediately get the result, as
      $\alpha^n(X) = \alpha(X)$ for all $n\geq 1$. 
      Otherwise, if $\alpha(X)$ contains some variable, then
      we know that $\hat\alpha(X)\neq\varepsilon$, 
      so by the above claim, $X$ occurs in $\alpha(X)$. 
      Hence, $\alpha(X) = h \, X \,t$ 
      for some $h,t\in(\var\cup\Sigma)^*$. 
      We now prove that
      $\alpha(h)\, \alpha(t)\in\Sigma^*$. 
      Indeed, let $Y$ be any variable in $h\,t$ 
      (if there is no such variable then clearly 
       $\alpha(h)\, \alpha(t)$ contains no variable either). 
      Since $\alpha$ is "copyless", 
      $Y$ does not occur in $\alpha(Y)$, 
      hence by the above claim (contrapositive), 
      $\hat\alpha(Y)=\varepsilon$,
      hence $\alpha(Y)\in\Sigma^*$. 
      We then derive
      $(\alpha\alpha)(X) = \alpha(h) \, \alpha(X) \, \alpha(t)$,
      where $\alpha(h),\alpha(t)\in\Sigma^*$, and so we can take 
      $u = \alpha(h)$ and $v=\alpha(t)$. 
      To conclude, we have 
      $\alpha^2(X) = u \, \alpha(X) \,v$, so for $n>2$, 
      $\alpha^n(X) = \alpha^{n-2}(\alpha^2(X)) = 
       \alpha^{n-2}(u \, \alpha(X) \,v) = u \, \alpha^{n-1}(X) \,v$,
      as claimed. 
\end{proof}

It follows that "pumping" "loops" in a "run" corresponds 
to introducing repeated copies of factors in the "output". 
Similar results can be found
in~\cite{equivalence-finite-valued} for "SSTs" 
and in \cite{rozoy1986,GaetanPhD} for two-way transducers:

\begin{corollary}\label{cor:loop-pumping}
Let $\rho$ be an "accepting" "run" of an "SST" and let $\bar L = (L_1,\dots,L_m)$ 
be a tuple of pairwise disjoint "loops" in $\rho$.
Then, for some $r \le 2m|\var|$ there exist words $w_0,\dots,w_r,u_1,\dots,u_r$ 
and indices $1\le i_1,\dots,i_r\le m$, not necessarily distinct, such that 
for every tuple $\bar n = (n_1,\dots,n_m) \in \mathbb{N}^m_+$ of positive natural numbers, 
\[
  \out{\pump<\bar n>[\bar L]{\rho}}
  ~=~
  w_0 \, u_1^{n_{i_1}-1} \, w_1 \, \dots \, u_r^{n_{i_r}-1} \, w_r.
\]
\end{corollary}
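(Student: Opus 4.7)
The plan is to iterate Lemma~\ref{lem:loop-pumping} across the $m$ disjoint loops and then use copylessness to bound how many of the resulting pumpable factors survive in the output. First, I would decompose $\rho = \rho_0 \, L_1 \, \rho_1 \cdots L_m \, \rho_m$, where $\rho_0,\dots,\rho_m$ are the (possibly empty) segments between the loops. Let $\alpha_j$ denote the update induced by $L_j$ (skeleton-idempotent by definition of a loop) and $\delta_j$ the update induced by $\rho_j$; then the pumped run factors as $\pump<\bar n>[\bar L]{\rho} = \rho_0 \, L_1^{n_1} \, \rho_1 \cdots L_m^{n_m} \, \rho_m$.

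Applying Lemma~\ref{lem:loop-pumping} to each $\alpha_j$ yields, for every variable $X$, words $u_{j,X}, v_{j,X} \in \alp^*$ such that
\[
\alpha_j^{n_j}(X) \;=\; u_{j,X}^{\,n_j-1} \, \alpha_j(X) \, v_{j,X}^{\,n_j-1}.
\]
I would then track the variable valuations $\sigma_j^{-}$ and $\sigma_j^{+}$ just before and just after the block $L_j^{n_j}$ in the pumped run, starting from the empty initial valuation $\iota$. Writing $\widehat{\sigma}$ for the morphic extension of a valuation $\sigma$, the above identity gives
\[
\sigma_j^{+}(X) \;=\; u_{j,X}^{\,n_j-1} \cdot \widehat{\sigma_j^{-}}(\alpha_j(X)) \cdot v_{j,X}^{\,n_j-1},
\]
so pumping $L_j$ merely wraps each variable's content in a pumpable prefix and suffix; the rest of the computation along the $\rho_j$'s and along the final update $\Omega(q_f)$ is a morphic substitution by $\bar n$-independent copyless updates.

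Finally, I would compute $\out{\pump<\bar n>[\bar L]{\rho}}$ by expanding $\Omega(q_f)(X_1)$ using the end-of-run valuation $\sigma^{*}$, unfolding backwards through the segments and through the pumped blocks. The crucial observation is that copylessness of every intermediate update guarantees that each occurrence of a variable at each cut point propagates to \emph{at most one} position in the symbolic expansion of the output. Consequently, each pumpable factor $u_{j,X}^{\,n_j-1}$ or $v_{j,X}^{\,n_j-1}$ either vanishes (if the content of $X$ at $\sigma_j^{+}$ is not carried to the output) or appears exactly once, as an isolated block, in the output. Since $u_{j,X}, v_{j,X}$ already lie in $\alp^*$, these surviving factors sit linearly in the output, separated by constant words that absorb everything independent of $\bar n$. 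Each loop $L_j$ can contribute at most $2|\var|$ such factors (one $u$ and one $v$ per variable), yielding $r \le 2m|\var|$ in total and the claimed shape $w_0 \, u_1^{n_{i_1}-1} \, w_1 \cdots u_r^{n_{i_r}-1} \, w_r$.

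The main obstacle is the last step: carefully verifying that no pumpable factor gets duplicated or nested inside another through later substitutions, and that the whole dependence on $\bar n$ is captured exactly by the linear pattern above. This is essentially an accounting of variable flow through a composition of copyless updates, and it should go through cleanly precisely because the absence of duplication in each update prevents any factor $u_{j,X}^{\,n_j-1}$ from being copied into two distinct positions of the output.
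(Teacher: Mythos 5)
Your proposal matches the paper's own proof: both decompose $\rho$ around the $m$ disjoint loops, invoke Lemma~\ref{lem:loop-pumping} per loop to wrap each variable's content in $\bar n$-independent prefix/suffix words, and then use copylessness to argue that each of the resulting at most $2|\var|$ factors per loop either vanishes or survives as a single contiguous block in the output, giving $r \le 2m|\var|$. You have also correctly identified the crux that the paper handles in one line (``without any interference''), namely that copylessness prevents any pumped factor from being duplicated or having another pumped factor inserted into its interior.
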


\begin{proof}
This follows immediately from Lemma \ref{lem:loop-pumping}.
Note that the content of any variable $X$ just after
"pumping" a "loop" either appears as infix of the final output, 
or is erased by some later "update". 
In both cases, each "pumped" "loop" $L_i$ induces in the "output" 
$(n_i-1)$-folded repetitions of $2k$ (possibly empty) factors, 
where $k$ is the number of variables of the "SST".
Since the "loops" are pairwise disjoint, they contribute such factors
without any interference.
The final output $\out{\pump<\bar n>[\bar L]{\rho}}$ 
thus features repetitions of $r = 2km$ (possibly empty) factors.
\end{proof}

The rest of the section analyses properties of words
with repeated factors like the one in Corollary \ref{cor:loop-pumping}.

\begin{definition}\label{def:word-equation}
A ""word inequality"" with repetitions parametrized in 
$\formal{X}$ 
is a pair $e = (w, w')$ of terms of the form
\begin{align*}
w &~=~ s_0 ~ t_1^{\formal{x}_1} ~ s_1 ~ \dots ~ t_m^{\formal{x}_m} ~ s_m \\
w' &~=~ s'_0 ~ {t'_1}^{\formal{y}_1} ~ s'_1 ~ \dots ~
{t'_n}^{\formal{y}_n} ~ {s'_n}
\end{align*}
where $s_i,t_i, s'_j, t'_j \in \Sigma^*$ and 
$\formal{x}_i, \formal{y}_j \in \formal{X}$
for all $i,j$.
\AP
The set of ""solutions"" of $e = (w,w')$, denoted $\reintro[solutions]{\sol*{e}}$, 
consists of the mappings $f:\formal{X}\rightarrow\mathbb{N}$ 
such that $f(w) \neq f(w')$, where $f(w)$ is the word obtained 
from $w$ by substituting every formal parameter $\formal x\in\formal{X}$ 
by $f(\formal x)$, and similarly for $f(w')$.
\AP
A ""system of word inequalities"" 
is a non-empty finite set $E$ of "inequalities" as above,
and its set of "solutions" is given by
$\reintro[solutions]{\sol*{E}} = \bigcap_{e\in E} \sol{e}$.
\end{definition}

The next theorem states that if there exists a solution to a system of inequalities parameterized by a single 
variable $\formal{x}$, then the set of solutions is co-finite. 

\begin{theorem}[name={\cite[Theorem 4.3]{Saarela15}}]\label{thm:saarela}
Given a "word inequality" $e$ with repetitions parameterized by single 
variable $\formal{x}$,
$\sol{e}$ is either empty or co-finite; more precisely,
if the left (resp.~right) hand-side of $e$ contains $m$ (resp.~$n$)
repeating patterns (as in Definition \ref{def:word-equation}),
then either $\sol{e} = \emptyset$ or $|\mathbb{N} \setminus \sol{e}| \le m+n$.
\end{theorem}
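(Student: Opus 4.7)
The plan is to reduce to the equal-length case, then use periodicity to force equality at all large $x$ once it holds at one, and finally refine the resulting finiteness into the tight bound $m+n$. First, write $|f(w)| = Ax + B$ with $A = \sum_i |t_i|$, $B = \sum_i |s_i|$, and similarly $A', B'$ for $w'$. If $A \neq A'$, then the two lengths coincide for at most one value of $x$, giving $|\mathbb{N}\setminus\sol{e}| \le 1$. If $A = A'$ but $B \neq B'$, the lengths never match, so $\sol{e} = \mathbb{N}$. The interesting case is $A = A'$ and $B = B'$, in which the two sides have the same length for every $x$.

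Second, I would track the block boundaries of $f(w)$ and $f(w')$: there are $O(m+n)$ of them, each an affine function of $x$, and for $x$ large enough their sorted interleaving stabilizes. This partitions $[0, Ax+B)$ into $O(m+n)$ sub-intervals on each of which $f(w)$ is either a constant string drawn from some $s_i$ or a window into the periodic repetition of some $t_i$ at a known phase, and similarly for $f(w')$. Equality on such a sub-interval reduces to either a trivial constant-vs-constant constraint or a periodic-vs-periodic constraint; in the latter case, once the sub-interval is longer than $|t_i| + |t_j'|$, Fine and Wilf's periodicity lemma reduces it to a structural condition on the $t$s (they admit a common primitive root) that does not depend on $x$. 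Consequently, if equality holds at one sufficiently large $x$, it holds at every sufficiently large $x$, and $\mathbb{N}\setminus\sol{e}$ is finite.

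The main obstacle is sharpening ``finite'' to the claimed bound $m+n$, including ruling out sporadic extra solutions at small $x$. For this I would argue by induction on $m+n$. Given any exceptional $x_0 \in \mathbb{N}\setminus\sol{e}$, I would locate a position whose letter changes when $x$ is incremented past $x_0$, attribute that position to a unique repeated block $t_i^x$ or ${t'_j}^x$, and then fold that block off by replacing it with its $x_0$-th power and absorbing the result into the surrounding constants. This produces a strictly shorter "word inequality", parameterized by a shifted variable, whose exceptional set is $(\mathbb{N}\setminus\sol{e})$ suitably translated and with $\{x_0\}$ removed; invoking the inductive hypothesis then yields the bound $m+n$. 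Keeping the bookkeeping clean when several boundaries coincide at $x_0$, or when the same block could account for several exceptions, is the delicate point where most of the care goes; the preceding length and periodicity phases are essentially routine once the structural picture is in place.
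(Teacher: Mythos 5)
The paper does not prove this statement; it is imported from Saarela, whose argument is algebraic rather than combinatorial: words over $\Sigma$ are mapped into polynomials over a field, whereupon the condition $f_x(w)=f_x(w')$ (in the length-matched case) becomes the vanishing of a nonzero polynomial of degree at most $m+n$ in a quantity that depends on $x$, yielding the bound $m+n$ in one stroke. Your combinatorial route (equating lengths, then stabilizing block boundaries and invoking Fine--Wilf) is a plausible way to obtain co-finiteness, but it is a genuinely different strategy and must rediscover by hand the degree bound that the polynomial encoding makes immediate.

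The gap is in your final inductive step, which is precisely where the bound $m+n$ has to come from. The operation you describe -- ``locate a position whose letter changes when $x$ is incremented past $x_0$, attribute that position to a unique repeated block, fold that block off by replacing it with its $x_0$-th power'' -- is not well-defined and does not have the property you need. If $x_0$ is a non-solution, it need not be the case that $x_0+1$ is a solution, so the ``changing letter'' may not witness anything; a single differing position is generally covered by several affine block boundaries and cannot be attributed to a unique $t_i$ or $t'_j$; and, most importantly, freezing one block at its $x_0$-th power produces a new inequality whose exceptional set is \emph{not} a translate of $\mathbb{N}\setminus\sol{e}$ minus $\{x_0\}$. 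Already for $w=a^xb^x$, $w'=(ab)^x$, one has $\mathbb{N}\setminus\sol{e}=\{0,1\}$, and freezing $a^x$ at $x_0=1$ yields $\tilde w = a\,b^x$, $\tilde w'=(ab)^x$ with exceptional set $\{1\}$ -- the element $0$, not your chosen $x_0=1$, was removed, and for other choices of block and base point the set can change unpredictably. You explicitly flag that ``keeping the bookkeeping clean \ldots is the delicate point where most of the care goes,'' but that delicate point is the entire theorem; without a concrete mechanism for choosing the block and showing the exceptional set shrinks by exactly one, the induction does not close.
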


Finally, we present two corollaries of the above theorem, 
that will be used later. 
The first corollary concerns "satisfiability" of a "system of inequalities".
\AP
Formally, we say that a "word inequality" $e$ 
(resp.~a "system of inequalities" $E$) 
is ""satisfiable"" if its set of "solutions" is non-empty.

\begin{corollary}\label{cor:saarela1}
Let $E$ be a finite "system of word inequalities".
If every "inequality" $e\in E$ is "satisfiable", then so is the "system" $E$.
\end{corollary}

\begin{figure}[t]
\centering
\begin{tikzpicture}[scale=1.5]
\tikzstyle{dot} = [draw, shape=circle, fill, minimum size=2mm, inner sep=0pt, outer sep=0pt]
\tikzstyle{circle} = [draw, thick, shape=circle, minimum size=2mm, inner sep=0pt, outer sep=0pt]
\tikzstyle{arrow} = [->, >=stealth, shorten >=1pt]
\path [gray, arrow] (-0.25,0) edge (7,0);
\path [gray, arrow] (0,-0.25) edge (0,6);
\path [gray, arrow] (7, 0) node [below = 1mm] {$\formal x$};
\path [gray, arrow] (0,6) node [left = 1mm] {$\formal y$};

\node [dot, red] (x') at (1,2) {};
\node [dot, blue] (x'') at (2,1) {};
\node [dot, red] (y') at (3,2) {};
\node [dot, red!50, scale=0.75] at (3.5,2) {};
\node [dot, red!50, scale=0.5] (y'bis) at (4,2) {};
\node [dot, red!50, scale=0.325] at (4.5,2) {};
\node [dot, blue] (y'') at (4,1) {};
\node [dot, blue!50, scale=0.75] at (4.5,1) {};
\node [dot, blue!50, scale=0.5] at (5,1) {};
\node [dot, blue!50, scale=0.325] at (5.5,1) {};
\node [dot, red] (z') at (4,4.5) {};
\node [dot, red!50, scale=0.75] at (4,5) {};
\node [dot, red!50, scale=0.5] at (4,5.5) {};
\node [dot, red!50, scale=0.325] at (4,6) {};
\node [dot, blue] (z'') at (4,3.5) {};
\node [dot, blue!50, scale=0.75] at (4,4) {};
\node [dot, blue!50, scale=0.5] at (4,4.5) {};
\node [dot, blue!50, scale=0.325] at (4,5) {};

\node [circle, black, scale=1.25] at (x') {};
\node [circle, black, scale=1.25] at (x'') {};
\node [circle, black, scale=1.25] at (y'bis) {};
\node [circle, black, scale=1.25] at (y'') {};
\node [circle, black, scale=1.25] (z) at (4,4.5) {};

\node [above=3mm] at (x') {$g'_{\emptyset}$};
\node [below=3mm] at (x'') {$g''_{\emptyset}$};
\node [above=3mm] at (y'bis) {$\phantom{X}g'_{\{\formal x\}}$};
\node [below=3mm] at (y'') {$g''_{\{\formal x\}}$};
\node [right=3mm] at (z) {$g'_{\{\formal x,\formal y\}} = g''_{\{\formal x,\formal y\}}$};

\path [red, arrow] (x') edge [bend left] node [above=0.2mm, red] {\small Th.~\ref{thm:saarela}} (y');
\path [blue, arrow] (x'') edge [bend right] node [below=0.8mm, blue] {\small Th.~\ref{thm:saarela}} (y'');
\path [red, arrow] (y'bis) edge [bend left=45] node [above=0.6mm, red, rotate=90] {\small Th.~\ref{thm:saarela}} (z');
\path [blue, arrow] (y'') edge [bend right=60] node [above=0.6mm, blue, rotate=-90] {\small Th.~\ref{thm:saarela}} (z'');
\end{tikzpicture}
\caption{Illustration of an argument for the proof of Corollary \ref{cor:saarela1}}\label{fig:saarela1}
\end{figure}

\begin{proof}
All inequalities considered hereafter have 
parameters in $\formal X=\{\formal{x}_1,\dots,\formal{x}_k\}$.
We are going to compare functions from $\formal X$ to $\mathbb{N}$
based on suitable partial orders, each parametrized by a variable. 
\AP
Formally, given two functions $f,g:\formal X\rightarrow\mathbb{N}$
and a variable $\formal x\in\formal X$, we write 
$f \mathrel{""\plex*{\formal x}@variable order""} g$ 
iff $f(\formal x) \le g(\formal x)$ and $f(\formal y) = g(\formal y)$ 
for all $\formal y\in\formal X\setminus\{\formal x\}$.
We prove the following two properties (the first property is equivalent to 
the claim of the lemma):
\begin{align}
  &\bigwedge_{e\in E}
    \exists f \in \sol{e}
    ~~\rightarrow~~
   \exists g \in \sol{E}
  \label{eq:exists-solution}
  \\
  &\forall f \in \sol{E} ~~
  \forall \formal x\in\formal X ~~
  \exists g \pgex{\formal x} f ~~
  \forall h \pgex{\formal x} g ~~:~~
  h \in \sol{E} 
  \label{eq:forall-solutions}
\end{align}
The proof goes by  double induction on the cardinality of $E$ 
and the number $k$ of parameters.

The base case is when $E$ has cardinality $1$.
In this case Property \eqref{eq:exists-solution} holds trivially.
We see how Property \eqref{eq:forall-solutions} follows from Theorem~\ref{thm:saarela}.
Let $E=\{e\}$, $f\in\sol{e}$, and fix an arbitrary variable $\formal x\in\formal X$.
We construct the "inequality" $e'$ with $\formal x$ as single formal parameter, by 
instantiating in $e$ every other parameter $\formal y\in\formal X\setminus\{\formal x\}$
with the value $f(\formal y)$. 
By construction, $f$ restricted to $\{\formal x\}$ is a "solution" of $e'$, and thus, 
by Theorem \ref{thm:saarela}, $\sol{e'}$ is co-finite. 
This means that there is a number $x_0\in\mathbb{N}$ such that $x_0 \ge f(\formal x)$ 
and, for all $x_1 \ge x_0$, the mapping $\formal x\mapsto x_1$ is a "solution" to $e'$ as well.
This property can be transferred to the original "inequality" $e$, as follows.
We define $g = f[\formal x / x_1]$ as the function obtained from $f$ by replacing
the image of $\formal x$ with $x_1$. Note that $g \ge_{\formal x} f$ and,
for all $h \pgex{\formal x} g$, $h\in\sol{e}$. This proves Property \eqref{eq:forall-solutions}.

As for the inductive step, suppose that $E$ is a "system" with at least two "inequalities",
and divide $E$ into two sub-systems, $E'$ and $E''$, with cardinalities strictly 
smaller than that of $E$.

Let us first prove Property \eqref{eq:exists-solution} for $E$.
Suppose that every "inequality" in $E$ is "satisfiable".
By the inductive hypothesis, $E'$ and $E''$ are also "satisfiable";
in particular, there exist "solutions" $g'$ and $g''$ of $E'$ and $E''$, respectively.
We proceed with a second induction to prove that, for larger and larger
sets of variables $\formal Y\subseteq\formal X$, there are "solutions"
of $E'$ and $E''$ that agree on all the variables from $\formal Y$, namely:
\begin{align*}
  \exists g'_{\formal Y}\in\sol{E'} ~~
  \exists g''_{\formal Y}\in\sol{E''} ~~
  \forall \formal y\in \formal Y ~~
  g'_{\formal Y}(\formal y) = g''_{\formal Y}(\formal y) \ .
  \tag{$\star$}
\end{align*}
Of course, for $\formal Y=\formal X$, the above property will imply 
the existence of a "solution" of $E$.
The reader can refer to Figure \ref{fig:saarela1} as an illustration 
of the arguments that follow (axes correspond to variables, and red 
and blue dots represent "solutions" of the "systems" $E'$ and $E''$,
respectively).

For $\formal Y=\emptyset$, the claim ($\star$) is trivial, 
since we can simply let $g'_\emptyset = g'$ and $g''_\emptyset = g''$.
For the inductive step, suppose that ($\star$) holds for $\formal Y$
and let us prove it also holds for $\formal Y' = \formal Y \uplus \{\formal x\}$. 
By inductive hypothesis, $E'$ and $E''$ satisfy Property \eqref{eq:forall-solutions}.
In particular, by instantiating $f$ with $g'_{\formal Y}$ (resp.~$g''_{\formal Y}$)
in Property \eqref{eq:forall-solutions}, 
we obtain the existence of $g' \pgex{\formal x} g'_{\formal Y}$ such that, 
for all $h' \pgex{\formal x} g'$, $h'\in\sol{E'}$
(resp.~$g'' \pgex{\formal x} g''_{\formal Y}$ such that, 
for all $h'' \pgex{\formal x} g''$, $h''\in\sol{E''}$). 
Note that the functions $g'_{\formal Y}$, $g''_{\formal Y}$, $g'$, $g''$ 
all agree on the variables from $\formal Y$.
Moreover, without loss of generality, we can assume that $g'$ and $g''$ also agree
on the variable $\formal x$: indeed, if this were not the case, we could simply 
replace the $\formal x$-images of $g'$ and $g''$ with $\max\{g'(\formal x),g''(\formal x)\}$,
without affecting the previous properties.
Property ($\star$) now follows from letting $g'_{\formal Y'} = g'$ and $g''_{\formal Y'} = g''$.
This concludes the proof of the inductive step for Property \eqref{eq:exists-solution}.

Let us now prove the inductive step for Property \eqref{eq:forall-solutions}.
Let $f$ be a "solution" of $E$ and let $\formal x\in\formal X$.
Since both $E'$ and $E''$ satisfy Property \eqref{eq:forall-solutions} 
and since $f\in\sol{E'} \cap \sol{E''}$, there are 
$g',g'' \pgex{\formal x} f$ such that, 
for all $h' \pgex{\formal x} g'$ and $h'' \pgex{\formal x} g''$,
$h'\in\sol{E'}$ and $h''\in\sol{E''}$.
Since $g',g'' \pgex{\formal x} f$, 
$g'$ and $g''$ agree on all variables, except possibly $\formal x$.
Without loss of generality, we can also assume that $g'$ and $g''$ 
agree on $\formal x$: as before, if this were not the case, 
we could replace the $\formal x$-images of $g'$ and $g''$ with 
$\max\{g'(\formal x),g''(\formal x)\}$, while preserving the previous properties. 
Now that we have $g'=g''$, we can use this function to witness
Property \eqref{eq:forall-solutions}, since, for all $h \pgex{\formal x} g'$ ($ = g''$),
we have $h\in \sol{E'} \cap \sol{E''} = \sol{E}$.
\end{proof}

The second corollary is related to the existence of large sets of 
solutions for a "satisfiable" "word inequality" that avoid any 
correlation between variables. 
To formalize the statement, it is convenient to fix a total order 
on the variables of the "inequality", 
say~$\formal x_1,\dots,\formal x_k$,
and then identify every function $f:\formal X\rightarrow\mathbb{N}$ 
with the $k$-tuple of values $\bar x = (x_1,\dots,x_k)$, where 
$x_i = f(\formal x_i)$ for all $i=1,\dots,k$.
According to this correspondence, the corollary states the existence 
of sets of solutions that look like Cartesian products of finite
intervals of values, each with arbitrarily large cardinality.
The statement of the corollary is in fact slightly more complicated 
than this, as it discloses dependencies between the intervals.
We also observe that the order in which we list the variables 
is arbitrary, but different orders will induce different 
dependencies between intervals.

\begin{corollary}\label{cor:saarela2}
Let $e$ be a "word inequality" with repetitions parametrized in 
$\formal X = \{\formal x_1,\dots,\formal x_k\}$.
If $e$ is "satisfiable", then 
\begin{align*}
  & \exists \ell_1 ~ \forall h_1 ~ \dots ~ \exists \ell_k ~ \forall h_k \\
  & \quad
  \underbrace{[\ell_1,h_1]}_{\text{values for }\formal x_1}
  \times \, \dots \, \times
  \underbrace{[\ell_k,h_k]}_{\text{values for }\formal x_k} 
  \subseteq~ \sol{e}.
\end{align*}
\end{corollary}

\begin{figure}[t]
\centering
\begin{tikzpicture}[scale=1.5]
\tikzstyle{dot} = [draw, shape=circle, fill, minimum size=2mm, inner sep=0pt, outer sep=0pt]
\tikzstyle{circle} = [draw, thick, shape=circle, minimum size=2mm, inner sep=0pt, outer sep=0pt]
\tikzstyle{arrow} = [->, >=stealth, shorten >=1pt]
\path [gray, arrow] (-0.25,-0.5) edge (7,-0.5);
\path [gray, arrow] (0,-0.75) edge (0,6.5);
\path [gray, arrow] (7, 0) node [below = 1mm] {$\formal x_1$};
\path [gray, arrow] (0,6.5) node [left = 1mm] {$\formal x_2$};

\node [dot, black] (x) at (0.5,1) {};

\node [below=3mm] at (x) {$\phantom{XXX}\{x_1\}\times\{x_2\}$};

\node [dot, black!100, scale=1.0] (y1) at (3.5,1) {};
\node [dot, black!100, scale=1.0] (y2) at (4,1) {};
\node [dot, black!100, scale=1.0] (y3) at (4.5,1) {};
\node [dot, black!100, scale=1.0] (y4) at (5,1) {};
\node [dot, black!50, scale=0.75] (y5) at (5.5,1) {};
\node [dot, black!50, scale=0.5] (y6) at (6,1) {};
\node [dot, black!50, scale=0.325] (y7) at (6.5,1) {};

\path [gray, dashed] (y1) edge (3.5,-0.75);
\path [gray, dashed] (y4) edge (5,-0.75);
\path [black] (3.5,-0.75) node [below = 1mm] {$\exists\ell_1$};
\path [black] (5,-0.75) node [below = 1mm] {$\forall h_1$};

\node [below right=4mm] at (y4) {$[\ell_1,h_1]\times\{x_2\}$};

\node [dot, black!50, scale=1.0] (z11) at (3.5,2.5) {};
\node [dot, black!50, scale=1.0] (z12) at (3.5,3) {};
\node [dot, black!50, scale=1.0] (z13) at (3.5,3.5) {};
\node [dot, black!100, scale=1.0] (z14) at (3.5,4) {};
\node [dot, black!100, scale=1.0] (z15) at (3.5,4.5) {};
\node [dot, black!100, scale=1.0] (z16) at (3.5,5) {};
\node [dot, black!50, scale=0.75] (z17) at (3.5,5.5) {};
\node [dot, black!50, scale=0.5] (z18) at (3.5,6) {};
\node [dot, black!50, scale=0.325] (z19) at (3.5,6.5) {};

\node [dot, black!50, scale=1.0] (z21) at (4,3) {};
\node [dot, black!50, scale=1.0] (z22) at (4,3.5) {};
\node [dot, black!100, scale=1.0] (z23) at (4,4) {};
\node [dot, black!100, scale=1.0] (z24) at (4,4.5) {};
\node [dot, black!100, scale=1.0] (z25) at (4,5) {};
\node [dot, black!50, scale=0.75] (z26) at (4,5.5) {};
\node [dot, black!50, scale=0.5] (z27) at (4,6) {};
\node [dot, black!50, scale=0.325] (z28) at (4,6.5) {};

\node [dot, black!50, scale=1.0] (z31) at (4.5,3.5) {};
\node [dot, black!100, scale=1.0] (z32) at (4.5,4) {};
\node [dot, black!100, scale=1.0] (z33) at (4.5,4.5) {};
\node [dot, black!100, scale=1.0] (z34) at (4.5,5) {};
\node [dot, black!50, scale=0.75] (z35) at (4.5,5.5) {};
\node [dot, black!50, scale=0.5] (z36) at (4.5,6) {};
\node [dot, black!50, scale=0.325] (z37) at (4.5,6.5) {};

\node [dot, black!100, scale=1.0] (z41) at (5,4) {};
\node [dot, black!100, scale=1.0] (z42) at (5,4.5) {};
\node [dot, black!100, scale=1.0] (z43) at (5,5) {};
\node [dot, black!50, scale=0.75] (z44) at (5,5.5) {};
\node [dot, black!50, scale=0.5] (z45) at (5,6) {};
\node [dot, black!50, scale=0.325] (z46) at (5,6.5) {};

\path [gray, dashed] (z14) edge (-0.25,4);
\path [gray, dashed] (z16) edge (-0.25,5);
\path [black] (-0.25,4) node [left = 1mm] {$\exists \ell_2$};
\path [black] (-0.25,5) node [left = 1mm] {$\forall h_2$};

\node [right=3mm] at (z43) {$[\ell_1,h_1]\times[\ell_2,h_2]$};

\path [black, arrow] (x) edge [bend left] node [above=0.2mm] {\small Th.~\ref{thm:saarela}} (y1);
\path [black, arrow] (y1) edge [bend right] node [above=0.2mm, rotate=-90] {\small Th.~\ref{thm:saarela}} (z11);
\path [black, arrow] (y2) edge [bend right] node [above=0.2mm, rotate=-90] {\small Th.~\ref{thm:saarela}} (z21);
\path [black, arrow] (y3) edge [bend right] node [above=0.2mm, rotate=-90] {\small Th.~\ref{thm:saarela}} (z31);
\path [black, arrow] (y4) edge [bend right] node [above=0.2mm, rotate=-90] {\small Th.~\ref{thm:saarela}} (z41);
\end{tikzpicture}
\caption{Illustration of an argument for the proof of Corollary \ref{cor:saarela2}}\label{fig:saarela2}
\end{figure}

\begin{proof}
Let $e$ be a "satisfiable" "word inequality" parametrized in $\formal X$ and
let $\bar x = (x_1,\dots,x_k)$ be any "solution" of $e$.
We will prove the following claim by induction on $i=0,\dots,k$:
\begin{align*}
 & \exists \ell_1 ~ \forall h_1 ~ \dots ~ \exists ~ \ell_i ~ \forall h_i\\
 & \quad
 [\ell_1,h_1]\times\cdots\times[\ell_i,h_i]\times\{x_{i+1}\}\times\dots\times\{x_k\} \subseteq \sol{e}
 \tag{$\star$}
\end{align*}
Note that for $i=k$ the above claim coincides with the statement of the corollary.
The reader can also refer to Figure \ref{fig:saarela2} as an illustration 
of the arguments that follow (axes correspond to variables $\formal x_1$ 
and $\formal x_2$, dots represent generic "solutions" of $e$, 
clusters of black dots represent solutions in the form of Cartesian products,
like those that appear in ($\star$).

For the base case $i=0$, the claim ($\star$) is vacuously true, as $\bar x=(x_1,\dots,x_k)$
is a "solution" of $e$.

For the inductive step, we need to show that 
if ($\star$) holds for $i<k$, then it also holds for $i+1$.
It is in fact sufficient to prove that, for $i<k$,
\[
  [\ell_1,h_1]\times\cdots\times[\ell_i,h_i]\times
  \{x_{i+1}\}\times\dots\times\{x_k\} 
  \subseteq \sol{e}
\]
implies
\begin{align*}
  & \exists \ell_{i+1} ~ \forall h_{i+1} \\
  & [\ell_1,h_1]\times\cdots\times[\ell_{i+1},h_{i+1}]\times
    \{x_{i+2}\}\times\dots\times\{x_k\} 
    \subseteq \sol{e}.
\end{align*}
For brevity, we let 
$S = [\ell_1,h_1]\times\dots\times[\ell_i,h_i]\times    
     \{x_{i+1}\}\times\dots\times\{x_k\}$,
and we assume that $S\subseteq \sol{e}$.
For every tuple $\bar s\in S$, we consider the "word inequality" 
$e_{\bar s}$ over a single
variable $\formal x_{i+1}$ that is obtained from $e$ by instantiating every other variable
$\formal x_j$ ($j\neq i$) with $\bar s[j]$.
Since $S\subseteq\sol{e}$, we know that $e_{\bar y}$ is "satisfiable", and hence
by Theorem \ref{thm:saarela}, $e_{\bar s}$ has co-finitely many "solutions".
This means that there is $\ell_{\bar s}$ such that, for all $x'\ge \ell_{\bar s}$, 
$x'$ is also a "solution" of $e_{\bar s}$.
This property can be transferred to our original "inequality" $e$:

\begin{claim*}\label{claim:saarela3}
For every $\bar s\in S$, there is $\ell_{\bar s}$ such that,
for every $x'\ge \ell_{\bar s}$, the tuple $\bar s[i+1 \mapsto x']$, 
obtained from $\bar s$ by replacing the $(i+1)$-th value with $x'$,
is a "solution" of $e$.
\end{claim*}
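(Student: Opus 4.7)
The plan is to apply Theorem \ref{thm:saarela} to the single-variable inequality $e_{\bar s}$ introduced just before the claim, and then translate the conclusion back to $e$. First I would verify that $e_{\bar s}$ is satisfiable: since $\bar s \in S \subseteq \sol{e}$, substituting the values of $\bar s$ into the parameters of $e$ yields an inequality that holds; in particular, the value $\bar s[i+1]$ assigned to the sole remaining parameter $\formal x_{i+1}$ is already a solution of $e_{\bar s}$, so $e_{\bar s}$ is satisfiable.

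Next, since $e_{\bar s}$ is a satisfiable word inequality in a single parameter, Theorem \ref{thm:saarela} guarantees that $\sol{e_{\bar s}}$ is co-finite. Hence there exists $\ell_{\bar s} \in \mathbb{N}$ such that every integer $x' \geq \ell_{\bar s}$ lies in $\sol{e_{\bar s}}$. Unfolding the definition of $e_{\bar s}$, the statement $x' \in \sol{e_{\bar s}}$ means precisely that substituting $\bar s[j]$ for each $\formal x_j$ with $j \neq i+1$ and $x'$ for $\formal x_{i+1}$ in $e$ yields a valid inequality; equivalently, the tuple $\bar s[i+1 \mapsto x']$ lies in $\sol{e}$, as required.

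I do not foresee any real obstacle here: the entire argument is a bookkeeping transfer of Theorem \ref{thm:saarela} through the substitution that defines $e_{\bar s}$. The only subtlety worth flagging is that the bound $\ell_{\bar s}$ genuinely depends on the particular tuple $\bar s$, which is exactly why the outer induction in the proof of Corollary \ref{cor:saarela2} must interleave the existential quantifier over $\ell_{i+1}$ with the universal quantifier over $h_{i+1}$ when exploiting this claim; the claim itself, however, gives nothing more than pointwise co-finiteness along the $\formal x_{i+1}$-axis for each fixed $\bar s \in S$.
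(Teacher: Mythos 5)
Your proposal is correct and follows essentially the same route as the paper: you form the single-variable inequality $e_{\bar s}$, note it is satisfiable because $\bar s \in S \subseteq \sol{e}$, invoke Theorem~\ref{thm:saarela} to get co-finiteness of $\sol{e_{\bar s}}$, and transfer the resulting threshold $\ell_{\bar s}$ back to $e$. (As a small aside, you correctly substitute for all $\formal x_j$ with $j \neq i+1$, whereas the paper's surrounding text contains a typo saying $j \neq i$.)
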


Now, 
the existentially quantified value $\ell_{i+1}$ 
can be set to
the maximum of the $\ell_{\bar s}$'s, for all $\bar s\in S$
(for this definition to make sense, it is crucial that the set $S$ is finite).
In this way, thanks to the previous claim, the containment 
$[\ell_1,h_1]\times\cdots\times[\ell_{i+1},h_{i+1}]\times\{x_{i+2}\}\times\dots\times\{x_k\} \subseteq \sol{e}$
holds for all choices of the universally quantified value $h_{i+1}$.
This proves the inductive step for $(\star)$ from $i$ to $i+1$.
\end{proof}

\subsection{Delay between accepting runs}\label{subsec:delay}

We briefly recall the definitions from \cite{SSTdelay}, that introduces
a measure of similarity (called "delay")
between "accepting" "runs" of an "SST" that have the same "input" and the same "output".

We first give some intuition, followed by definitions and an example.
Naturally, the difference between the amount of output symbols produced during a run should be an indicator of (dis)similarity. 
However, as "SSTs" do not necessarily build their output 
from left to right, one must also take into account 
the position where an output symbol is placed.
For example, compare two "runs" $\rho$ and $\rho'$ 
on the same "input" that produce the same "output" 
$aaabbb$. 
After consuming a prefix of the input, 
$\rho$ may have produced 
$aaa\_\ \_\ \_$ 
and $\rho'$ may have produced 
$\_\ \_\ \_bbb$. 
The amount of produced output symbols is the same,
but the "runs" are delayed because $\rho$ 
built the "output" from the left, 
whereas $\rho'$ did it from the right.
This idea of "delay" comes with an important caveat.
As another example, consider two runs $\rho$ and $\rho'$ 
on the same "input" that produce the same "output" 
$aaaaaa$,
and assume that, after consuming the same prefix of the "input",
$\rho$ and $\rho'$ produced 
$aaa\_\ \_\ \_$ and $\_\ \_\ \_aaa$, respectively. 
Note that the "output" 
$aaaaaa$ is a periodic word. 
Hence, it does not matter if 
$aaa$ is appended or prepended to a word with period $a$. 
In general, one copes with this phenomenon 
by dividing the "output" into periodic parts,
where all periods are bounded by a well-chosen parameter $C$.
So, intuitively, the "delay" measures 
the difference between the numbers of output symbols that have 
been produced by the two "runs", up to the end of each of periodic factor. The number of produced output symbols is formally captured 
by a "weight" function, defined below, and the "delay" aggregates
the "weight" differences. 

\AP
For an
"accepting" "run" $\rho$, a position $t$ of $\rho$, 
and a position $j$ in the output $\out{\rho}$, we denote by
$""\weight*[t][j]{\rho}@weight""$ %
the number of output positions $j'\le j$ that are produced by 
the prefix of $\rho$ up to position $t$.
We use the above notation when $j$ witnesses
a change in a repeating pattern of the output. These changes in repeating patterns are called cuts, as formalized below.

Let $w$ be any non-empty word (e.g.~the output of $\rho$ or a factor of it).
\AP
The \reintro{primitive root} of $w$, denoted $""\rt*{w} @root""$, 
is the shortest word $r$ such that $w \in \{r\}^*$.
\AP
For a fixed integer $C>0$ we define a factorization 
$w[1,j_1],w[j_1+1,j_2],\dots,w[j_n + 1,j_{n+1}]$
of $w$ in which every $j_i$ is chosen as the rightmost 
position for which $w[j_{i-1}+1, j_i]$ has "primitive root" 
of length not exceeding $C$.
These positions $j_1,\dots,j_n$ are called ""$C$-cuts"".
More precisely:
\begin{itemize}
\item the \reintro[$C$-cuts]{first $C$-cut} of $w$ is the largest position $j\le |w|$, 
      such that $|\rt{w[1,j]}| \le C$;
\item if $j$ is the "$i$-th $C$-cut" of $w$,
      then the  \reintro[$C$-cuts]{$(i+1)$-th $C$-cut} of $w$ 
      is the largest position $j'>j$ such that 
      $|\rt{w[j+1,j']}| \le C$.
\end{itemize}
\AP
We denote by $""\cuts*<C>{w}@set of cuts""$ 
the set of all "$C$-cuts" of $w$.

\AP
We are now ready to define the notion of "delay".
Consider two "accepting" "runs" $\rho,\rho'$ of an "SST" 
with the \emph{same "input"} $u=\inp{\rho}=\inp{\rho'}$ and the 
\emph{same "output"} $w=\out{\rho}=\out{\rho'}$,
and define:
\[
  ""\delay*<C>{\rho,\rho'}@delay"" ~= 
  \max\limits_{\substack{t\le |u|, \\ j\,\in\,\cuts{w}}} ~ 
  \big| \weight[t][j]{\rho} - \weight[t][j]{\rho'} \big|
\]
In other words, the delay between two such runs $\rho$ and $\rho'$ measures over all input positions the maximal difference between the amount of generated output, but only up to "$C$-cuts"  of the "output".
Note that the "delay" is only defined for "accepting" "runs" with 
same "input" and "output". 
So whenever we write $\delay<C>{\rho,\rho'}$, we implicitly mean 
that $\rho,\rho'$ have \emph{same "input" and same "output"}.

\begin{example}\label{ex:delay}
Let $w = abcccbb$ be the "output" of runs $\rho, \rho'$ 
on the same "input" of length $2$. 
Assume $\rho$ produces $abc\_\ \_bb$ and then $abcccbb$, 
whereas $\rho'$ produces $\_\ \_\ \_c\_bb$ and then $abcccbb$.
For $C = 2$, we obtain $\cuts<2>{w} = \{2,5,7\}$, i.e., $w$ is divided into $ab|ccc|bb$.
To compute the $\delay<2>{\rho,\rho'}$, we need to calculate "weights"
at "cuts". For $t=0$, $\weight[0][j]{\rho} =
\weight[0][j]{\rho'} = 0$ for all $j\in \cuts<2>{w}$ because nothing
has been produced. For $t=2$, $\weight[2][j]{\rho} =
\weight[2][j]{\rho'} = j$ for all $j\in \cuts<2>{w}$ because the whole output
has been produced. Only the case $t=1$ has an impact on the delay. 
We have $\weight[1][2]{\rho} = 2$, $\weight[1][5]{\rho} = 3$, and $\weight[1][7]{\rho} = 5$.
Also, we have $\weight[1][2]{\rho'} = 0$, $\weight[1][5]{\rho} = 1$, and $\weight[1][7]{\rho} = 3$.
Hence, we obtain $\delay<2>{\rho,\rho'} = 2$. 
\end{example}

We recall below some crucial results obtained in \cite{SSTdelay}.
A first result shows that the relation of pairs of "runs" 
having bounded "delay" (for a fixed bound) is "automatic"
--- for this to make sense, we view a "run" of an "SST"
as a finite word, with letters representing transitions, 
and we recall that a relation is "automatic" 
if its "convolution" language is regular.

\begin{lemma}[{\cite[Theorem 5]{SSTdelay}}]\label{lem:regular-delay}
Given an "SST" and some numbers $C,D$, the relation consisting of
pairs of "accepting" "runs" $(\rho,\rho')$ such that $\delay<C>{\rho,\rho'} \le D$ 
is "automatic".
\end{lemma}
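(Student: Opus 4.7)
My plan is to exhibit a nondeterministic finite automaton $\mathcal{A}$ over the alphabet $\Delta\times\Delta$ that recognizes the language of "convolutions" $\rho\otimes\rho'$ of "accepting" "runs" of the given "SST" satisfying $\delay*<C>{\rho,\rho'}\le D$. The "convolution" is well-defined on this relation because two "runs" on the same "input" (a prerequisite for the "delay" to even be defined) necessarily share the same length. The automaton simulates both "runs" in parallel, keeping the pair $(q_t,q'_t)$ of current states in its control; checking acceptance of both "runs" and compatibility of "input" letters at each position is then immediate.

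The substantive task is to simultaneously verify $\out*{\rho}=\out*{\rho'}$ and the "delay" bound, given that "SSTs" assemble their "outputs" non-linearly in write-only registers. I would adopt a prophecy-based construction. At each step, $\mathcal{A}$ guesses, for each $\Sigma$-character newly written by either "update", the $C$-cut region of the eventual common "output" to which that character belongs. Thanks to copylessness, these per-character annotations propagate through variables in a finite-state manner, summarizable per variable by the sequence of regions its content straddles together with, for each such region, the number of characters it contributes (capped at $D+1$). The automaton separately checks that the resulting regions form a valid $C$-cut factorization of the "output"—each closed region must be periodic with "primitive root" of length at most $C$—which is a regular condition enforceable locally when a region is retired.

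For the "delay" bound, $\mathcal{A}$ maintains, for each currently open region, a signed counter in $[-D-1,D+1]$ tracking the running difference $\weight*[t][j]{\rho}-\weight*[t][j]{\rho'}$ where $j$ is the corresponding "cut". The counter is incremented or decremented whenever either "run" writes a character tagged with that region; any overflow triggers rejection. When a region is declared finished by both "runs"' guesses, the counter is verified to equal zero (forcing agreement on the number of characters produced up to that cut) and the region is retired. Output equality then follows from region-wise agreement on content and length, and the "delay" bound from the fact that all counters stayed in range.

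The chief obstacle is bounding the number of simultaneously open regions so that the overall state space remains finite. Here copylessness, together with the finite "skeleton monoid" introduced in Section~\ref{subsec:pumping}, limits the number of distinct regions that any single variable can straddle at any time, and each update step can open or close only boundedly many regions in each "run". Combined with the $O(D)$ cap on counter values and the $|\var|$ variables per "run", this yields a genuinely finite state space for $\mathcal{A}$, so the relation is "automatic". I expect reconciling the prophecy with the $C$-cut periodicity constraint to be the most delicate bookkeeping, since characters belonging to the same region may be written far apart in time and interleaved between both "runs", and the automaton must certify periodicity of the entire region from only a local view as each region is retired.
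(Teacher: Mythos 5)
The paper's proof is essentially a one-liner: it cites Theorem~5 of \cite{SSTdelay} (which establishes automaticity of the bounded-delay relation on \emph{sequences of updates}) and observes that an automaton can trivially check that two convolved transition sequences form accepting runs on the same input, so that the result transfers to runs. You instead attempt to reprove the automaticity result from scratch with a prophecy construction, which is a much bolder undertaking — and it is here that a genuine gap appears.

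The crucial missing step is the claim that the number of simultaneously open $C$-cut regions is bounded, which you attribute to ``copylessness, together with the finite skeleton monoid.'' Neither of these gives such a bound. Copylessness prevents duplication but does not limit how many $C$-cuts the content of a single variable can straddle: a variable that accumulates an aperiodic string of length $n$ straddles $\Theta(n)$ cuts. The "skeleton monoid" is likewise agnostic to this, since it records only which variables flow into which, not how much content they hold or how it factorizes into periodic blocks. Nor can you invoke ``each update step writes $O(1)$ characters'' to bound open regions, because SSTs assemble output non-monotonically: a run may populate a middle block of the output long before the blocks to its left, so retirement of regions cannot proceed front-to-back and arbitrarily many regions remain neither fully produced nor safely retirable. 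There is a further subtlety in your counter design: $\weight*[t][j]{\rho}$ counts \emph{all} produced positions $j'\le j$, not just those inside the region ending at $j$, so the quantity you must keep within $[-D,D]$ is a \emph{prefix sum} of per-region contribution differences. Capping each per-region difference at $D+1$ and checking each counter locally does not control these prefix sums. Showing that bounded delay collapses the relevant information to finitely many states is exactly the hard content of \cite[Theorem 5]{SSTdelay}; your sketch restates the goal but does not prove it. The fix is to cite that theorem, as the paper does, after the easy observation that the delay depends only on the underlying update sequences.
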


\begin{proof}
The statement in \cite[Theorem 5]{SSTdelay} is not for "runs" of "SSTs",
but for sequences of "updates". One can easily build an automaton 
that checks if two sequences of transitions, encoded by their "convolution",
form "accepting" "runs" $\rho,\rho'$ of the given "SST" on the same "input". 
The remaining condition $\delay<C>{\rho,\rho'} \le D$ only depends on 
the underlying sequences of "updates" determined by $\rho$ and $\rho'$,
and can be checked using \cite[Theorem 5]{SSTdelay}.
\end{proof}

A second result shows that given two "runs" with large
"delay", one can find a set of positions on the "input" 
(the cardinality of which depends on how large the "delay" is)
in such a way that any interval starting just before any of 
these positions and ending just after any other of these positions 
is a "loop" on both "runs" such that, when "pumped", produces 
different "outputs". 
Roughly, the reason for obtaining different "outputs" is that pumping creates a misalignment between "$C$-cuts" that were properly aligned before pumping, and different periods cannot overlap.
By this last result, large "delay" intuitively means 
``potentially different "outputs"''.

\begin{lemma}[name={\cite[Lemma 6]{SSTdelay}}]\label{lem:delay-vs-output}%
Given an "SST", one can compute%
\footnote{%
We remark that the notation and the actual bounds here 
differ from the original presentation of \cite{SSTdelay},
mainly due to the fact that here we manipulate runs
with explicit states and loops with idempotent skeletons.
In particular, the parameters $C,D,m$ here correspond respectively to 
the values $k E^2, \ell E^4, C E^2$ with $k,\ell,C$ as in \cite[Lemma 6]{SSTdelay}, and
 $E$ as in our Lemma \ref{lem:ramsey}.
}
some numbers $C,D$ such that, 
for all $m\ge 1$ and all runs $\rho,\rho'$:
if $\delay<C m>{\rho,\rho'} > D m^2$,
then there exist
$m$ positions $0\le \ell_1 < \dots < \ell_m \le |\rho|$ such that,
for every $1 \le i < j \le m$, the interval $L_{i,j}=[\ell_i,\ell_j]$
is a "loop" on both $\rho$ and $\rho'$ and satisfies
\[
  \out{\pump<2>[L_{i,j}]{\rho}} 
  \:\neq\: 
  \out{\pump<2>[L_{i,j}]{\rho'}} .
\]
\end{lemma}

To reason about "finite valuedness" we will need to consider 
\emph{several} "accepting" "runs" on the same "input", with pairwise 
large "delays".
By Lemma \ref{lem:delay-vs-output}, 
every two such "runs" can be pumped so as to witness different "outputs". 
The crux however is to show that these runs can be pumped \emph{simultaneously} 
so as to get pairwise different "outputs".
This is indeed possible thanks to: 

\begin{lemma}\label{lem:saarela-delay}\label{cor:saarela-delay}
Let $C,D$ be computed as in Lemma \ref{lem:delay-vs-output},
and $k$ be an arbitrary number. Then one can compute a number $m$
such that, for all runs $\rho_0,\dots,\rho_k$ on the same "input" and with 
$\displaystyle\bigwedge\nolimits_{0\le i<j\le k} 
 \big(\out{\rho_i} \neq \out{\rho_j} ~\vee~
      \delay<C m>{\rho_i,\rho_j} > D m^2\big)$,
there is a tuple 
$\bar{L} = (L_{i,j})_{0\le i<j\le k}$ of disjoint intervals
that are "loops" on all "runs" $\rho_0,\dots,\rho_k$,
and there is a tuple $\bar n = (n_{i,j})_{0\le i<j\le k}$
of positive numbers such that
\[
  \text{for all } 0\le i<j\le k\,, \quad 
  \out{\pump<\bar n>[\bar L]{\rho_i}} 
  \:\neq\:
  \out{\pump<\bar n>[\bar L]{\rho_j}} \, .
\]
\end{lemma}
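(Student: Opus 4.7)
My plan is to reduce the statement to a joint satisfiability question for a system of word inequalities and then invoke Saarela's theorem in the form of Corollary \ref{cor:saarela1}. I proceed in three steps.

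\textbf{Step 1 (Distinguishing loops, one pair at a time).} Partition the pairs into $P = \{(i,j) : \out{\rho_i} = \out{\rho_j}\}$ and $\bar P = \{(i,j) : \out{\rho_i} \neq \out{\rho_j}\}$. For each $(i,j) \in P$, the hypothesis yields $\delay<Cm>{\rho_i,\rho_j} > Dm^2$, so Lemma \ref{lem:delay-vs-output} supplies $m$ input positions such that any interval between two of them is a loop on both $\rho_i$ and $\rho_j$ whose double pumping separates their outputs. I then apply Lemma \ref{lem:ramsey} iteratively on the remaining $k-1$ runs to shrink this set to a subset $S_{i,j}$ whose intervals are loops on \emph{all} $k+1$ runs, while retaining the distinguishing property for the pair $(i,j)$ (the distinguishing property is inherited by any sub-collection of positions). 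Choosing $m$ large enough at the outset guarantees that $|S_{i,j}|$ is as large as Step 2 needs.

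\textbf{Step 2 (Disjoint selection).} I process $P$ greedily and, for each $(i,j) \in P$, pick an interval $L_{i,j}$ with endpoints in $S_{i,j}$ disjoint from the previously chosen loops. After $t \le \binom{k+1}{2}$ loops have been chosen, the input is split into at most $\binom{k+1}{2}$ free regions; by pigeonhole one of them contains at least $|S_{i,j}|/\binom{k+1}{2}$ positions of $S_{i,j}$, and any two of them form a valid $L_{i,j}$ (fully contained in that free region, hence disjoint from all earlier loops). For $(i,j) \in \bar P$, I take $L_{i,j}$ to be an empty loop at any leftover input position: empty loops are permitted by the paper's convention, are trivially loops on every run, and are disjoint from everything.

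\textbf{Step 3 (Saarela).} By Corollary \ref{cor:loop-pumping}, each $\out{\pump<\bar n>[\bar L]{\rho_h}}$ is a word term in the variables $\bar n = (n_{i,j})_{0\le i<j\le k}$ of the form $w_0 u_1^{n_{i_1}-1} w_1 \cdots u_r^{n_{i_r}-1} w_r$. Hence the requirement $\out{\pump<\bar n>[\bar L]{\rho_i}} \neq \out{\pump<\bar n>[\bar L]{\rho_j}}$ is a word inequality $e_{i,j}$ in $\bar n$ in the sense of Definition \ref{def:word-equation}. Every $e_{i,j}$ is individually satisfiable: for $(i,j) \in \bar P$ the constant assignment $\bar n = \bar 1$ already yields the distinct original outputs (the empty loop contributes nothing); for $(i,j) \in P$ the assignment $n_{i,j} = 2$ and $n_{s,t} = 1$ otherwise amounts to pumping only $L_{i,j}$ twice, which separates the two outputs by construction. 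Corollary \ref{cor:saarela1} then delivers a common positive integer solution $\bar n$, giving the required tuple.

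The main obstacle lies in Step 2: the positions supplied by Lemma \ref{lem:delay-vs-output} for different pairs in $P$ can concentrate in different regions of the input, so joint disjointness is not automatic. The pigeonhole argument forces $m$ to depend (through the $k-1$ Ramsey applications on top of the $m$ needed for the delay lemma) on the number of runs, and this is what determines the bound on $m$ that the lemma asks us to compute.
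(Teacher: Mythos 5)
Your proposal is correct and follows essentially the paper's proof: apply Lemma~\ref{lem:delay-vs-output} to each delayed pair, thin the resulting position sets by $k-1$ applications of Lemma~\ref{lem:ramsey} so the intervals are loops on all $k+1$ runs, select pairwise disjoint intervals, express the pumped outputs as word inequalities via Corollary~\ref{cor:loop-pumping}, check each is individually satisfiable (pump only the relevant loop twice, or pump nothing for pairs that already differ), and conclude with Corollary~\ref{cor:saarela1}. The only genuine variation is in the disjoint-selection step: the paper uses a greedy interval-scheduling argument (repeatedly pick the remaining interval with left-most right endpoint and discard overlaps, losing at most one interval per pair per step), with the budget $m_{k-1} = k(k+1)$ chosen to make this go through; you instead observe that after $t$ loops are placed the input splits into $\le t+1$ free regions and apply pigeonhole to find one region holding two positions of $S_{i,j}$, which gives a slightly different (and arguably simpler to state) bound on how large the position sets must be. Both selection arguments are sound and yield elementary bounds on $m$, so the two proofs are interchangeable in substance; the paper's greedy variant is marginally tighter, your pigeonhole variant is marginally easier to verify.
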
 
\begin{proof}
We first define $m$. 
Let $E,H$ be as in Lemma~\ref{lem:ramsey}, and set $m:=m_0+1$ for 
the sequence $m_0, \ldots, m_{k-1}$ defined inductively by 
$m_{k-1} := k(k+1)$, and 
$m_h := E m_{h+1}^H$. 

We show how to "pump" the "runs" in such a way that all pairs of indices $i<j$
witnessing $\delay<C m>{\rho_i,\rho_j} > D m^2$ before "pumping",
will witness different "outputs" after "pumping".
Consider one such pair $(i,j)$, with $i<j$, such that 
$\delay<C m>{\rho_i,\rho_j} > D m^2$, so in particular,
$\out{\rho_i} = \out{\rho_j}$
(if there is no such pair, then all "runs" have pairwise different
"outputs", and so we are already done).
We apply Lemma \ref{lem:delay-vs-output}
and obtain a set $I_{i,j,0}$ of $m = m_0+1$ positions such that each interval $L=[\ell,\ell']$ with $\ell,\ell' \in I_{i,j,0}$ is a loop on both $\rho_i$ and $\rho_j$, and:
\begin{equation}\label{eq:divergence1}
  \out{\pump<2>[L]{\rho_i}} \neq \out{\pump<2>[L]{\rho_j}}.
\end{equation}
Then, by repeatedly using Lemma~\ref{lem:ramsey}, 
we derive the existence of sets $I_{i,j,k-1} \subseteq  \cdots \subseteq I_{i,j,1} \subseteq I_{i,j,0}$ with $|I_{i,j,h}| = m_h+1$ such that each interval $L=[\ell,\ell']$ with $\ell,\ell' \in I_{i,j,h}$ is a loop on $\rho_i,\rho_j$, and $h$ further runs from $\rho_0,\ldots,\rho_k$ (our definition of $m$ from the 
beginning of the proof is tailored to this repeated application
of Lemma~\ref{lem:ramsey}, because $|I_{i,j,h}| = m_h+1 = E m_{h+1}^H +1$). In particular, all intervals with endpoints in $I_{i,j,k-1}$ are loops on all the $\rho_0,\ldots,\rho_k$.

In this way, for each pair $i<j$ such that $\rho_i$ and $\rho_j$ 
have large "delay", we obtain $k(k+1)$ adjacent intervals that are loops on all runs and that satisfy the pumping property (\ref{eq:divergence1}) from above.

As there are at most $k (k+1)$ pairs of runs, 
we can now choose from the sets of intervals that we have prepared 
one interval $L_{i,j}$ for each pair $i<j$ with $\delay<C m>{\rho_i,\rho_j} > D m^2$,
in such a way that all the chosen intervals are pairwise disjoint 
(for example, we could do so by always picking among the remaining intervals 
the one with the left-most right border, and then removing all intervals 
that intersect this one).
The selected intervals $L_{i,j}$ thus have the following properties:
\begin{enumerate}
\item $L_{i,j}$ is a "loop" on all "runs" $\rho_0,\dots,\rho_k$,
\item $L_{i,j}$ is disjoint from every other interval $L_{i',j'}$,
\item $\out{\pump<2>[L_{i,j}]{\rho_i}} \neq \out{\pump<2>[L_{i,j}]{\rho_j}}$.
\end{enumerate}
If a pair $i<j$ of "runs" is such that 
$\out{\rho_i} \neq \out{\rho_j}$, 
then we set $L_{i,j}$ as an empty "loop". 

Now, let $\bar{L} = (L_{i,j})_{0 \le i < j \le k}$ be the tuple of chosen intervals, and consider the following system of word "inequalities" with formal parameters $(\formal{x}_{i,j})_{0\le i < j \le k} =: \bar{\formal{x}}$:
\[
  \text{for all } 0\le i<j\le k\,, \quad 
  \out{\pump<\bar{\formal{x}}>[\bar L]{\rho_i}} 
  \:\neq\:
  \out{\pump<\bar{\formal{x}}>[\bar L]{\rho_j}} \, .
\]
Here, the value of the formal parameter $\formal{x}_{i,j}$ determines how often the loop $L_{i,j}$ is pumped. By Corollary \ref{cor:loop-pumping}, this corresponds to a word "inequality" in the parameters $\formal{x}_{i,j}$.

Note that there is one such "inequality" for each pair of "runs" $\rho_i,\rho_j$ with $0 \le i < j \le k$. 
By the choice of the intervals in $\bar{L}$, each of the "inequalities" is "satisfiable": 
indeed, the "inequality" for $\rho_i,\rho_j$ is "satisfied" by letting 
$\formal{x}_{i',j'} = 1$ if $i' \not = i$ or $j' \not = j$, and
$\formal{x}_{i,j} = 2$ otherwise. 

By Corollary~\ref{cor:saarela1}, the system of "inequalities" is also "satisfiable" with a tuple $\bar n = (n_{i,j})_{0 \le i < j \le k}$ of numbers, as claimed in the lemma.
\end{proof}

\section{The Decomposition Theorem}\label{sec:decomposition}

This section is devoted to the proof of the Decomposition Theorem:
\RestateGo{\restateDecomposition}
\Decomposition*

Our proof relies on the notion of 
\emph{"cover"} of an "SST", which is
reminiscent of the so-called ``lag-separation covering''
construction \cite{deSouza-thesis,SakarovitchDeSouzaMFCS08}.
Intuitively, given an SST $T$ and 
two integers $C,D \in \mathbb{N}$,
we construct an "SST" $\Cover{T}$
that is "equivalent" to $T$,
yet for each "input" $u$ it only admits 
pairs of "accepting" "runs" with 
different "outputs" or $C$-"delay" larger than $D$. 
The crucial point will be that $\Cover{T}$ is "$k$-ambiguous" when $T$ is "$k$-valued".

\begin{proposition}\label{prop:separation-covering}
Given an "SST" $T$ and two numbers $C,D$, 
one can compute an "SST" called
$""\Cover*[C,D]{T}@cover""$ such that
\begin{enumerate}
\item $\Cover{T}$ is "equivalent" to $T$;
\item for every two "accepting" "runs" $\rho\neq\rho'$ of $\Cover{T}$ 
      having the same "input", 
      either $\out{\rho} \neq \out{\rho'}$ or $\delay<C>{\rho,\rho'} >~D$;
\item every "accepting" "run" of $\Cover{T}$ can be projected onto an "accepting" "run" of $T$.
\end{enumerate}
\end{proposition}

\begin{proof}
We order the set of "accepting" "runs" of $T$ lexicographically,
and we get rid of all the "runs" for which there exists a 
lexicographically smaller "run" with the same "input", 
the same "output", and small "delay". 
Since all these conditions are encoded by regular languages, 
the remaining set of "runs" is also regular, and this can be 
used to construct an "SST" $\Cover{T}$ that satisfies the 
required properties.

We now give more details about this construction. 
Let $R$ denote the set of all "accepting" "runs" of $T$.
Remark that $R$ is a language over the alphabet consisting 
of transitions of $T$, and it is recognized by the underlying 
automaton of $T$, so it is regular.
\AP
Let
\[
    ""\Sep*{R} @separation""
    ~=~
        \big\{  \rho \in R
                ~\big|~
			    \nexists \rho' \in R \; .\,
			    \rho' \!<\! \rho ~\wedge~ \delay<C>{\rho,\rho'} \leq D
        \big\}.
\]
Recall that the delay is only defined for "accepting" "runs" 
with same "input" and same "output", 
so $\delay<C>{\rho,\rho'} \leq D$ implies that $\inp{\rho} = \inp{\rho'}$ 
and $\out{\rho} = \out{\rho'}$.
We show that
\begin{enumerate}[label={\sc\alph*)}]
    \item \label{item:sep1}
          $\Sep{R}$ is a regular subset of $R$;
    \item \label{item:sep2}
          $\{(\inp{\rho}, \out{\rho}) \:\mid\: \rho \in \Sep{R}\} = \{(\inp{\rho}, \out{\rho}) \:\mid\: \rho \in R\}$;
    \item \label{item:sep3}
          for every pair of "runs" $\rho, \rho' \in \Sep{R}$ over the same "input",
          either $\out{\rho} \neq \out{\rho'}$ or $\delay<C>{\rho,\rho'} > D$.
\end{enumerate}
Before proving these properties, let us show how to use them to conclude the 
proof of the proposition:

We start with a DFA $A$ recognizing $\Sep{R}$, whose existence is guaranteed by 
Property \ref{item:sep1}. Note that the transitions of $A$ are of the form
$\big(q,(s,a,\alpha,s'),q'\big)$, where $(s,a,\alpha,s')$ is a transition of $T$.
Without loss of generality, we assume that the source state $q$ 
of an $A$-transition determines the source state $s$ 
of the corresponding $T$-transition, and similarly
for the target states $q'$ and $s'$.
Thanks to this, we can turn $A$ into the desired "SST" $\Cover{T}$
by simply projecting away 
the $T$-states from the $T$-transitions, namely, by replacing every transition
$\big(q,(s,a,\alpha,s'),q'\big)$ with $(q,a,\alpha,q')$. 
To complete the construction, we observe that if the state $q'$
is final in $A$, then the corresponding state $s'$ is also final in $T$
(this is because $A$ recognizes only "accepting" "runs" of $T$). 
Accordingly, we can define the "final@@update" "update" of $\Cover{T}$
so that it maps any final state $q'$ of $A$ to the 
"final@@update" "update" $\finalupd(s')$, as determined by the corresponding
final state $s'$ in $T$.
Finally, thanks to Properties \ref{item:sep2} and \ref{item:sep3}, 
the "SST" $\Cover{T}$ constructed in this way clearly satisfies the 
properties claimed in the proposition.

Let us now prove Properties \ref{item:sep1}--\ref{item:sep3}.

\smallskip\noindent
{\sc Proof of Property \ref{item:sep1}.}~
Note that the set $\Sep{R}$ is obtained 
by combining the relations $R$, 
$\{ (\rho,\rho') ~|~ \rho' < \rho\}$, and
$\{ (\rho,\rho') ~|~ \delay<C>{\rho,\rho'} \leq~D\}$ 
using the operations of intersection, projection, and complement.
Also recall that $R$ can be regarded a regular language, and that
$\{ (\rho,\rho') ~|~ \rho' < \rho\}$ and
$\{ (\rho,\rho') ~|~ \delay<C>{\rho,\rho'}~\leq ~D\}$
are "automatic" relations (for the latter one uses Lemma~\ref{lem:regular-delay}).
It is also a standard result (cf.~\cite{Hodgson83,KhoussainovN95,BlumensathG00})
that "automatic" relations are closed under intersection, projection, and
complement.
From this it follows that $\Sep{R}$ is a regular language.

\smallskip\noindent
{\sc Proof of Property \ref{item:sep2}.}~
As $\Sep{R} \subseteq R$, the left-to-right inclusion is immediate.
To prove the converse inclusion,
consider an "input"-"output" pair $(u,v)$ in the right hand-side of the equation,
namely, $(u,v)$ is a pair in the relation realized by $T$.
Let $\rho$ be the lexicographically least "accepting" "run" of $T$
such that $\inp{\rho}=u$ and $\out{\rho}=v$.
By construction, $\rho\in\Sep{R}$ and hence $(u,v)$ also belongs to the
left hand-side of the equation.

\smallskip\noindent
{\sc Proof of Property \ref{item:sep3}.}~
This holds trivially by the definition of $\Sep{R}$.
\end{proof}

We can now present the missing ingredients of the decomposition result.
Proposition \ref{prop:coveringIsFiniteValued} below
shows that, for suitable choices of $C$ and $D$
that depend on the "valuedness" of $T$, 
$\Cover{T}$ turns out to be "$k$-ambiguous".

This will enable the decomposition result
via a classical technique that decomposes any 
"$k$-ambiguous" automaton/transducer into a 
union of $k$ "unambiguous" ones
(see Proposition \ref{prop:FiniteValuedToDecomposition}
further below).

\begin{proposition}\label{prop:coveringIsFiniteValued}
Let $T$ be a "$k$-valued" "SST" and let $C,D,m$ be
as in Lemma~\ref{lem:saarela-delay} (note that $m$ depends on $k$).
The "SST" $\Cover[Cm,Dm^2]{T}$ is "$k$-ambiguous".
\end{proposition}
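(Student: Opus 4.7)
My approach is to argue by contradiction. Suppose $\Cover[Cm,Dm^2]{T}$ admits $k+1$ pairwise distinct accepting runs $\rho_0,\dots,\rho_k$ on some common input $u$. The goal is to extract from them $k+1$ pairwise distinct outputs that $T$ itself can produce on a single (longer) input, contradicting the $k$-valuedness of $T$.

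By Property 2 of Proposition \ref{prop:separation-covering} applied to $\Cover[Cm,Dm^2]{T}$, every pair $i<j$ satisfies $\out{\rho_i}\neq\out{\rho_j}$ or $\delay<Cm>{\rho_i,\rho_j}>Dm^2$. This is exactly the precondition of Lemma \ref{lem:saarela-delay}, which therefore yields a tuple of pairwise disjoint intervals $\bar L=(L_{i,j})_{0\le i<j\le k}$ that are loops on every $\rho_i$, together with positive multiplicities $\bar n=(n_{i,j})_{0\le i<j\le k}$, such that the simultaneously pumped runs $\pump<\bar n>[\bar L]{\rho_i}$ have pairwise distinct outputs. Since each $L_{i,j}$ corresponds to a fixed range of input positions that is pumped the same number of times in every run, all the pumped runs consume a common longer input $u'$.

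Finally I would invoke Property 3 of Proposition \ref{prop:separation-covering}: each $\pump<\bar n>[\bar L]{\rho_i}$ is an accepting run of $\Cover[Cm,Dm^2]{T}$ (pumping a loop preserves the starting and ending states), and it projects to an accepting run of $T$ on input $u'$ producing the very same output. We thus obtain $k+1$ pairwise distinct outputs of $T$ on the single input $u'$, contradicting the $k$-valuedness of $T$. The only point requiring verification is that pumping commutes with the projection from $\Cover[Cm,Dm^2]{T}$-runs to $T$-runs and preserves the output; this is immediate from the construction in Proposition \ref{prop:separation-covering}, since $\Cover[Cm,Dm^2]{T}$-transitions carry their underlying $T$-transitions (and updates) unchanged, and the output of a run depends only on the sequence of updates it traverses. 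The main conceptual step is thus the application of Lemma \ref{lem:saarela-delay}, whose choice of $C,D,m$ was precisely tailored to match the parameters baked into $\Cover[Cm,Dm^2]{T}$.
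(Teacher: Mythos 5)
Your proof is correct and follows essentially the same route as the paper: invoke Property~2 of Proposition~\ref{prop:separation-covering} to get, for every pair of the $k+1$ runs, either distinct outputs or large delay, then apply Lemma~\ref{lem:saarela-delay} to pump all runs simultaneously to pairwise distinct outputs on a common input, contradicting $k$-valuedness of $T$. The only difference is cosmetic: you make explicit the use of Property~3 (projection of $\Cover[Cm,Dm^2]{T}$-runs to $T$-runs and the compatibility of pumping with that projection), a step the paper's proof leaves implicit.
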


\begin{proof}
We prove the contrapositive of the statement. Assume that 
$\Cover[Cm,Dm^2]{T}$ is not "$k$-ambiguous", that is, 
it admits $k+1$ "accepting" "runs" $\rho_0, \dots, \rho_k$ 
on the same "input". 
Recall from Proposition \ref{prop:separation-covering} 
that for all $0\le i<j\le k$, 
either $\out{\rho_i} \neq \out{\rho_j}$ or $\delay<Cm>{\rho_i,\rho_j} > Dm^2$.
By Lemma~\ref{lem:saarela-delay} we can find pumped versions
of the "runs" $\rho_0, \dots, \rho_k$ that have all the same "input" 
but have pairwise different "outputs", and thus $T$ is not "$k$-valued".
\end{proof}

\begin{proposition}\label{prop:FiniteValuedToDecomposition}
For all $k \in \mathbb{N}$, every "$k$-ambiguous" "SST" 
can be decomposed into a union of $k$ "unambiguous" "SSTs".
\end{proposition}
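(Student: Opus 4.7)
My plan is to mimic the projection construction from the proof of Proposition~\ref{prop:separation-covering}, but this time filtering the "accepting" "runs" of $T$ by their lexicographic rank among "runs" on the same "input". Let $R$ denote the regular language of "accepting" "runs" of $T$ (viewed as words over its transition alphabet). For each $i \in \{1,\dots,k\}$, define
\[
  R_i \;=\; \bigl\{\rho \in R \;:\; |\{\rho' \in R \,:\, \rho' < \rho,\ \inp{\rho'}=\inp{\rho}\}| = i-1\bigr\}.
\]
Since $T$ is "$k$-ambiguous", every "accepting" "run" of $T$ belongs to exactly one $R_i$, so $R$ is the disjoint union $R_1 \cup \dots \cup R_k$. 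If I can exhibit an "unambiguous" "SST" $T_i$ whose "accepting" "runs" correspond to $R_i$ and whose "realized relation" is $\{(\inp{\rho},\out{\rho}) : \rho \in R_i\}$, then $\bigcup_{i=1}^k \rel{T_i} = \rel{T}$, giving the claimed decomposition.

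The only non-trivial step is showing that each $R_i$ is regular. The crucial observation is that any two "accepting" "runs" $\rho,\rho'$ on the same "input" have the same length as transition sequences (equal to $|\inp{\rho}|$), so their "convolution" is well defined. Hence the binary relation
\[
  \bigl\{(\rho,\rho') \;:\; \rho,\rho' \in R,\ \inp{\rho}=\inp{\rho'},\ \rho' < \rho\bigr\}
\]
is "automatic": $R$ is regular, equality of "inputs" is synchronous, and lexicographic order between equal-length words is "automatic". As already exploited in the proof of Proposition~\ref{prop:separation-covering}, closure of "automatic" relations under Boolean operations and projection lets me express, for any fixed $i \le k$, the condition ``there exist exactly $i-1$ smaller "accepting" "runs" on the same "input" as $\rho$'' by quantifying existentially over an $(i-1)$-tuple of distinct smaller "runs" and universally forbidding an $i$-tuple. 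The ambiguity bound $k$ is essential here: it lets this counting condition be expressed by a formula over "automatic" relations of bounded arity, rather than requiring unbounded counting.

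From each regular set $R_i$, I would construct the "unambiguous" "SST" $T_i$ by the very same projection technique used to build $\Cover{T}$ in the proof of Proposition~\ref{prop:separation-covering}: start from a DFA $A_i$ recognising $R_i$ whose transitions carry (and, after a product with $T$, determine) the underlying $T$-transitions; then project away the $T$-state annotation, keeping the input letter and the "update"; and inherit the "final" "update" map from $T$ via the correspondence between final states of $A_i$ and of $T$. By construction, the "accepting" "runs" of $T_i$ are in bijection with $R_i$, so $T_i$ has at most one "accepting" "run" per "input" (namely the $i$-th lex-smallest "accepting" "run" of $T$ on that "input", if it exists), making $T_i$ "unambiguous"; and $\rel{T_i} = \{(\inp{\rho},\out{\rho}) : \rho \in R_i\}$, as required. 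The main conceptual point of the proof is therefore the regularity of $R_i$; everything else is a direct reuse of the covering construction developed above.
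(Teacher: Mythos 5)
Your proof is correct and takes essentially the same route the paper does: the paper invokes the classical decomposition of a $k$-ambiguous machine via lexicographic rank (the $i$-th machine guesses the $i$-th lex-smallest accepting run on each input), and your argument is precisely that technique, worked out explicitly through automatic relations in the style of Proposition~\ref{prop:separation-covering}. The extra care you take — using bounded-arity quantification over tuples of runs to show each $R_i$ is regular, and reusing the projection machinery of the covering construction to turn $A_i$ into an unambiguous $T_i$ — is a sound elaboration of exactly what the paper's brief reference to the classical technique alludes to.
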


\begin{proof}
The decomposition is done via a classical 
technique applicable to "$k$-ambiguous" NFA and, 
by extension, to all variants of automata and transducers
(see~\cite{JohnsonICALP85,Sakarovitch98}).
More precisely, decomposing a "$k$-ambiguous" NFA into a union of $k$ 
"unambiguous" NFA is done by ordering runs lexicographically 
and by letting the $i$-th NFA in the decomposition guess
the $i$-th accepting run on a given input (if it exists).
Since the lexicographic order is a regular property of pairs of runs,
it is easy to track all smaller runs.
\end{proof}

\paragraph{Proof of Theorem~\ref{thm:FiniteValuedToDecomposition}.} 
We now have all the ingredients to prove
Theorem~\ref{thm:FiniteValuedToDecomposition}, which directly
follows from Propositions~\ref{prop:coveringIsFiniteValued}
and~\ref{prop:FiniteValuedToDecomposition}, and the fact that
"unambiguous" "SSTs" can be determinized~\cite{DBLP:conf/fsttcs/AlurC10}. 
\qed

\section{Finite valuedness}\label{sec:finite-valuedness}

We characterize "finite valuedness" of "SSTs" 
by excluding certain types of substructures.
Our characterization has strong analogies with 
the characterization
of "finite valuedness" for one-way transducers,
where the excluded substructures have the shape 
of a ``W'' and are therefore called \emph{W-patterns} 
(cf.~\cite{deSouza-thesis}).%
\footnote{\cite{deSouza-thesis} 
          used also other excluded substructures,
          but they can be seen as degenerate cases 
          of W-patterns.}

\begin{definition}\label{def:W-pattern}
A ""W-pattern"" is a substructure of an "SST" 
consisting of states $q_1,q_2,r_1,r_2,r_3$,
and some initial and final states, that are 
connected by "runs" as in the diagram of Figure~\ref{fig:Wpattern}.
\begin{figure}[ht]
\begin{tikzpicture}[scale=1.5]
%
\node (q) at (0,1.75) {$q_1$};
\node (q') at (3,1.75) {$q_2$};
\node (r) at (-1.5,4) {$r_1$};
\node (s) at (1.5,4) {$r_2$};
\node (t) at (4.5,4) {$r_3$};
\node (initial) at (0,0.2) 
      {$\begin{subarray}{c} \text{initial} \\ \text{state} \end{subarray}$};
\node (final) at (3,0.2) 
      {$\begin{subarray}{c} \text{final} \\ \text{state} \end{subarray}$};
\draw (initial) edge [arrow] node [left=1mm] {\small $\rho_0:\, u/\alpha~$} (q);
\draw (q') edge [arrow] node [right=1mm] {\small $~\rho_4:\, w/\omega$} (final);
\draw (q.north) edge [arrow, bend right=15] 
      node [sloped, above=1mm] {\small $\rho'_1:\, v'/\beta'~~$} (r.east);
\draw (r) edge [arrow, loop above] 
      node [above=1mm] {\small $\rho''_1:\, v''/\beta''$} (r);
\draw (r.south) edge [arrow, bend right=15] 
      node [sloped, below=1mm] {\small $~\rho'''_1:\, v'''/\beta'''$} (q.west);
\draw (q) edge [arrow] 
      node [sloped, below=1mm] {\small $\rho'_2:\, v'/\gamma'\phantom{x}$} (s);
\draw (s) edge [arrow, loop above] 
      node [above=1mm] {\small $\rho''_2:\, v''/\gamma''$} (s);
\draw (s) edge [arrow] 
      node [sloped, below=1mm] {\small $\phantom{x}\rho'''_2:\, v'''/\gamma'''$} (q');
\draw (q'.east) edge [arrow, bend right=15] 
      node [sloped, below=1mm] {\small $\rho'_3:\, v'/\eta'~~$} (t.south);
\draw (t) edge [below, loop above] node [above=1mm] {\small $\rho''_3:\, v''/\eta''$} (t);
\draw (t.west) edge [arrow, bend right=15] 
      node [sloped, above=1mm] {\small $~~~\rho'''_3:\, v'''/\eta'''$} (q'.north);
\end{tikzpicture}
\caption{A "W-pattern".}
\label{fig:Wpattern}
\end{figure}

In that diagram a notation like $\rho: u'/\mu$ describes a run named $\rho$
that consumes an "input" $u'$ and produces an "update" $\mu$.
Moreover, the cyclic "runs" 
$\rho''_1$, 
$\rho''_2$,
$\rho''_3$,
$\rho'_1\rho''_1\rho'''_1$, 
$\rho'_2\rho''_2\rho'''_2$, and
$\rho'_3\rho''_3\rho'''_3$
are required to be "loops", namely,
their "updates" must have "idempotent" "skeletons".
\end{definition}

An important feature of the above definition is that 
the small "loops" at states $r_1,r_2,r_3$ consume the same "input", 
i.e.~$v''$, and, similarly, the big "loops" at $q_1$ and $q_2$, 
as well as the runs from $q_1$ to $q_2$, consume the same set 
of "inputs", i.e.~$v' \, (v'')^* \, v'''$. 

\AP
Given a "W-pattern" $P$ and a number $x \in \mathbb{N}_+$, 
we construct the following "runs" by pumping the small "loops" in the diagram of Figure \ref{fig:Wpattern} $x$ times:
\[
\begin{aligned}
	""\lrun*{x} @left run""  &~=~ \rho'_1 (\rho''_1)^x \, \rho'''_1 \\
    ""\mrun*{x} @mid run""   &~=~ \rho'_2 \, (\rho''_2)^x \, \rho'''_2 \\
    ""\rrun*{x} @right run"" &~=~ \rho'_3 \, (\rho''_3)^x \, \rho'''_3.
\end{aligned}
\]
\AP
Similarly, 
given a sequence 
$s=(x_1,x_2,\dots,x_{i-1},\underline{x_i},x_{i+1},\dots,x_n)$ 
of positive numbers with exactly one element underlined 
(we call such a sequence a ""marked sequence""), we define 
the "accepting" "run"  
\[
\begin{aligned}
	& ""\wrun*{s} @W-run"" ~~=~~ 
	\rho_0 ~~
    \underbrace{\lrun{x_1} ~ \lrun{x_2} ~\ldots~ \lrun{x_{i-1}}}_{\text{loops at $q_1$}} ~~
    \mrun{x_i} ~~
    \underbrace{\rrun{x_{i+1}} ~ \rrun{x_{i+2}} ~\ldots~ \rrun{x_n}}_{\text{loops at $q_2$}} 
    ~~ \rho_4.
\end{aligned}
\]
For each "marked sequence" 
$s=(x_1,\dots,x_{i-1},\underline{x_i},x_{i+1},\dots,x_n)$,
$\wrun{s}$ consumes the "input"   
\[
u \, v' (v'')^{x_1} v''' \, \dots \, 
      v' (v'')^{x_n} v''' \, 
 w
\]
and produces an "output" of the form
\[
\begin{aligned}
  \out{\wrun{s}} ~=~ 
  \big(\iota \, \alpha ~
  & \beta' (\beta'')^{x_1} \beta''' \,
  \dots \,
  \beta' (\beta'')^{x_{i-1}} \beta''' \,
  ~ \\
  & \gamma' (\gamma'')^{x_i} \gamma''' \,
  ~ \\
  & \eta' (\eta'')^{x_{i+1}} \eta''' \,
  \dots \,
  \eta' (\eta'')^{x_n} \eta'''
  ~ \omega \, \omega'\big)(X_1)
\end{aligned}
\]
where $\iota$ is the "initial@@update" "update"
and $\omega'$ is the "final@@update" "update"
determined by the final state of $\wrun{s}$.
\AP
Note that, differently from the output, the "input" 
only depends on the \reintro[marked sequence]{unmarked sequence},
and thus a "W-pattern" can have "accepting" "runs" that 
consume the same "input" and produce arbitrarily many 
different "outputs".
As an example, consider a "W-pattern" as in Definition \ref{def:W-pattern}, 
where $\gamma''$ is the only update that produces output symbols
-- say $\gamma''$ appends letter $c$ to the right of the unique variable. 
Further suppose that $u=w=\varepsilon$, $v'=v'''=a$, and $v''=b$. 
So, on "input" $(aba) \, (a b^2 a) \, \dots \, (a b^n a)$,
this "W-pattern" produces $n$ different "outputs":
$c$, $c^2$, \dots, $c^n$. 
The definition and the lemma below generalize this example.

\begin{definition}\label{def:skewed-W-pattern}
A "W-pattern" $P$ is ""divergent"" 
if there is a $5$-tuple of numbers $n_1,n_2,n_3,n_4,n_5\in\mathbb{N}_+$
for which the two "runs" $\wrun{(n_1,n_2,n_3,\underline{n_4},n_5)}$
and $\wrun{(n_1,\underline{n_2},n_3,n_4,n_5)}$ produce different outputs
(recall that the "runs" consume the same "input").
It is called ""simply divergent"" if in addition
$n_1,n_2,n_3,n_4,n_5\in\{1,2\}$.
\end{definition}

\begin{theorem}\label{thm:finite-valued}
An "SST" is "finite-valued" iff it does not admit a
"simply divergent" "W-pattern".
\end{theorem}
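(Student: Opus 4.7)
My plan is to prove both directions of the equivalence separately, the backward one by embedding the divergent behavior into ever longer pumping schemes, and the forward one by extracting a W-pattern out of a large family of pairwise-diverging accepting runs.

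\textbf{Backward direction (simply divergent W-pattern implies not finite-valued).} Since simple divergence implies divergence, it suffices to show that any divergent W-pattern yields unbounded valuedness. Given witness parameters $(n_1,\dots,n_5)$ for divergence, I would consider, for each $N$, the unmarked sequence of length $5N$ obtained by repeating the block $(n_1,\dots,n_5)$ $N$ times. This produces $5N$ accepting runs on a single input, one for each choice of mark position. Using Corollary~\ref{cor:loop-pumping} to express the outputs as words with repetitions, I would write down, for each pair of mark positions $i<j$, the word inequality asserting that the corresponding two W-runs produce different outputs; each such inequality is individually satisfiable, by choosing parameters in the repeated block so as to reproduce the original divergence witness between the two relevant positions. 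Corollary~\ref{cor:saarela1} then lets me solve all $\binom{5N}{2}$ inequalities simultaneously, yielding $5N$ pairwise distinct outputs on a single input, for arbitrarily large $N$.

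\textbf{Forward direction (not finite-valued implies a simply divergent W-pattern exists).} I would argue by contrapositive and combine the covering construction with Ramsey-type pumping. If $T$ is not finite-valued, then by Proposition~\ref{prop:coveringIsFiniteValued}, for every $k$ the cover $\Cover[Cm,Dm^2]{T}$ fails to be $k$-ambiguous: there is an input admitting $k{+}1$ accepting runs that pairwise either produce different outputs or have $Cm$-delay exceeding $Dm^2$. Taking $k$ much larger than the state count and the size of the skeleton monoid, and applying Lemma~\ref{lem:ramsey} iteratively, I would pigeonhole on states and loop skeletons at positions shared across these runs, so as to extract two distinguished run positions where all runs synchronize at common states $q_1$ and $q_2$. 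The three W-pattern branches then emerge from the common structure: loops shared by all runs at $q_1$ yield the $r_1$-cycle, loops shared at $q_2$ yield the $r_3$-cycle, and a disagreement inside the $q_1$-to-$q_2$ segment between a chosen pair of runs yields the middle $r_2$-branch. To obtain \emph{simple} divergence rather than mere divergence, the crucial ingredient is that Lemma~\ref{lem:delay-vs-output} already provides a loop whose pumping by $2$ suffices to witness divergence between the selected pair; installing this loop as the inner loop of the $r_2$-branch forces the divergence witness into $\{1,2\}^5$ automatically, since one need only toggle that single inner loop between one and two traversals while keeping the remaining four pump counts in $\{1,2\}$ arbitrary.

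\textbf{Main obstacle.} The hardest step will be aligning the three branches of the W-pattern so that they consume identical input factors $v'(v'')^* v'''$, as rigidly required by Definition~\ref{def:W-pattern}. This syntactic constraint forces the Ramsey and pigeonhole arguments to be carried out on a product structure simultaneously tracking SST states, input factorizations, and loop skeletons; moreover, the diverging loop extracted from Lemma~\ref{lem:delay-vs-output} lives a priori at arbitrary positions in the two underlying runs and must be aligned with the chosen input factorization before being installed inside the $r_2$-branch. I expect this alignment to require iterating Lemma~\ref{lem:ramsey} several times in nested fashion, at each level refining the set of candidate common positions so that they respect both the state-level synchronization between $q_1,q_2$ and the input-level synchronization imposed by the W-pattern. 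Once this structural alignment is in place, passing from divergence to simple divergence is immediate from the pump-by-$2$ behavior inherited from Lemma~\ref{lem:delay-vs-output}.
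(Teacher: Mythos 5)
Your backward direction has a genuine gap. You claim that, in the $5N$-long unmarked sequence obtained by repeating the block $(n_1,\ldots,n_5)$, \emph{every} pair of mark positions $i<j$ gives a satisfiable inequality $e_{5N,i,j}$, "by choosing parameters in the repeated block so as to reproduce the original divergence witness." But the instantiation that fills positions to the left of $i$ with $n_1$, between $i$ and $j$ with $n_3$, and to the right of $j$ with $n_5$ corresponds precisely to the instance $e[x,y,z]$ of the paper's auxiliary inequality with $x=i-1$, $y=j-i-1$, $z=5N-j$. The divergence hypothesis only tells you $e[1,1,1]$ holds; Theorem~\ref{thm:saarela} (via Corollary~\ref{cor:saarela2}) gives a \emph{Cartesian product of intervals} of satisfying triples, not all of $\mathbb{N}_+^3$. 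There is no reason that the particular triple $(i-1,\,j-i-1,\,5N-j)$ lands in that product for arbitrary $i<j$, and this is exactly why the paper's proof of Lemma~\ref{lem:unbounded-values1} carefully constructs a sparse arithmetic progression $I=\{\ell_{\formal x}+2\lambda\ell_{\formal y}+1 : 0\leq\lambda\leq m\}$ whose pairs are guaranteed to fall inside the Cartesian-product solution set. Without that step, the appeal to Corollary~\ref{cor:saarela1} is not available, because the hypothesis of that corollary (every inequality in the system is individually satisfiable) has not been established.

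Your forward direction is also at variance with the paper's route and is under-specified in a way that hides real work. The paper factors the argument through an intermediate notion: it characterizes finite ambiguity of SSTs via the absence of "dumbbells" (Lemma~\ref{lem:finite-ambiguity}, reducing to known results on multiset/weighted automata), and then, given a dumbbell in the cover, builds two canonical accepting runs $\pi,\pi'$ of seven parts each; a large delay then localizes a loop inside one of the seven parts, and three nested applications of Lemma~\ref{lem:ramsey} extract a loop that is simultaneously a loop of \emph{all three} of $\rho_1,\rho_2,\rho_3$. Installing that loop gives a W-pattern with synchronized inner loops on all three branches $r_1,r_2,r_3$, as Definition~\ref{def:W-pattern} demands (all three small loops must consume the same $v''$). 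Your proposal to "install this loop as the inner loop of the $r_2$-branch" is therefore not right as stated: a loop living only in the middle branch does not yield a legal W-pattern, and aligning it with the $r_1$- and $r_3$-branches is precisely the step that consumes the nested Ramsey budget. Skipping the dumbbell detour and doing Ramsey directly on $k{+}1$ accepting runs of the cover may plausibly work, but it requires simultaneously synchronizing states, input factorizations, and skeleton idempotents across many runs rather than just the five dumbbell runs, and you acknowledge this alignment as "the hardest step" without explaining how to discharge it. As written, both directions need the technical machinery the paper supplies (Corollary~\ref{cor:saarela2}'s product structure, and either the dumbbell characterization or an explicit nested-Ramsey construction) before the high-level ideas close into a proof.
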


The two implications of the theorem are
shown in Sections \ref{subsec:left-to-right}
and \ref{subsec:right-to-left}; effectiveness
of the characterization is shown in the next
section.


\subsection{Effectiveness of finite valuedness}\label{subsec:effectiveness}

We show in this section a {\upshape\sc PSpace} decision procedure for the characterization of "finite valuedness" 
in terms of absence of "simply divergent" "W-patterns" 
(Theorem~\ref{thm:finite-valued}). 
We also prove that the "equivalence problem" for "deterministic" "SSTs", 
known to be in {\upshape\sc PSpace}, is polynomially reducible to the 
"finite valuedness" problem. Despite recent efforts by the community 
to better understand the complexity of the "equivalence problem", 
it is unknown whether the {\upshape\sc PSpace} upper bound for "equivalence" 
(and hence for "finite valuedness") can be improved, as no non-trivial 
lower bound is known. 
On the other hand, "equivalence" (as well as "finite valuedness") 
turns out to be in {\upshape\sc PTime} when the number of variables is 
fixed~\cite{DBLP:conf/icalp/AlurD11}.

Our effectiveness procedure uses the following complexity result on the
composition of deterministic "SSTs", which is of independent interest.
It is known
that deterministic "SSTs" are closed under
composition because of their equivalence to 
MSO transductions~\cite{DBLP:conf/fsttcs/AlurC10}. An automata-based construction for composition is given
in~\cite{DBLP:journals/corr/abs-2209-05448}. We show next an exponential construction 
based on reversible transducers~\cite{DBLP:conf/icalp/DartoisFJL17}.

\begin{proposition}\label{prop:compo}
Let $T_1$ and $T_2$ be two "deterministic" "SSTs" realizing the functions
$f_1 : \Sigma_1^*\rightarrow \Sigma_2^*$ and $f_2 :
\Sigma_2^*\rightarrow \Sigma_3^*$, respectively
Let $n_i$ (resp.~$m_i$) be the number of states  (resp.~variables) 
of $T_i$, and $M = n_1+n_2+m_1+m_2$. 
One can construct in time exponential in $M$ and 
polynomial in $|\Sigma_1|+|\Sigma_2|+|\Sigma_3|$ 
a "deterministic" "SST" realizing $f_1\circ f_2$, 
with exponentially many states and polynomially many
variables in $M$.
\end{proposition}

\begin{proof}
\AP
Each $T_i$ can be converted in
polynomial time into an "equivalent" 
two-way transducer that is ""reversible"",
i.e., both deterministic and co-deterministic
\cite{DBLP:conf/icalp/DartoisFJL17}. 
Reversible two-way transducers can be easily seen to be composable
in polynomial time~\cite{DBLP:conf/icalp/DartoisFJL17}, so we obtain a reversible two-way transducer $S$ 
realizing $f_1\circ f_2$, with state space polynomial
in $M$. Finally, it suffices to convert $S$ back to a "deterministic" "SST", 
which can be done in time exponential in the number of states of $S$ and polynomial 
in the size of the alphabets. This yields a "deterministic" "SST" with 
exponentially many states and polynomially many variables in $M$ 
(see e.g.~\cite{ledent13,DBLP:journals/ijfcs/DartoisJR18}).
\end{proof}

\RestateGo{\restateEffectiveness}
\Effectiveness*

\begin{proof}
    We start with an overview of the proof.
    By Theorem~\ref{thm:finite-valued}, it suffices to decide
    whether a given "SST" $T$ admits a "simply divergent"
    "W-pattern". Let us fix some tuple $\overline{x} =
    (x_1,\dots,x_5)\in\{1,2\}^5$. We construct an "SST"
    $T_{\overline{x}}$ which is \emph{not} "single-valued" iff $T$
    has a "W-pattern" which is "simply divergent" for $\overline{x}$. 
    Since checking "single-valuedness" of "SST" is 
    decidable~\cite{DBLP:conf/icalp/AlurD11}, 
    we can decide "finite valuedness" of $T$
    by solving "single-valuedness" problems
    for all "SST" $T_{\overline{x}}$, for all tuples
    $\overline{x}\in \{1,2\}^5$.
	Intuitively, we exhibit an encoding of
    "W-patterns" $P$ as words $u_P$,
    and show that the set of these encodings forms a regular language. 
    The encoding $u_P$ informally consists of the 
    "runs" that form the "W-pattern" $P$, and
    some of these "runs"
    are overlapped to be able to check that they are on the same "input". 
    Accordingly, the "SST" $T_{\overline{x}}$ will take as 
    "input" such an encoding $u_P$ and produce as "outputs" the 
    two words 
    $\out{\wrun{s}}$ and $\out{\wrun{s'}}$,
    where $s = (x_1,x_2,x_3,\underline{x_4},x_5)$ and 
    $s' = (x_1,\underline{x_2},x_3,x_4,x_5)$. 
	To achieve this, $T_{\overline{x}}$ can consume the "input" $u_P$
    while iterating the encoded "runs" as prescribed by $s$ or $s'$
    and simulating the transitions to construct the appropriate outputs.
    Finally, an analysis of the size of $T_{\overline{x}}$ and of the 
    algorithm from~\cite{DBLP:conf/icalp/AlurD11} 
    for checking "single-valuedness" gives the {\upshape\sc PSpace} upper bound.

    \paragraph{Detailed reduction} 
    We now explain in detail the reduction to
    the "single-valuedness" problem. We will 
    then show how to derive the {\upshape\sc PSpace} upper bound 
    by inspecting the decidability proof for "single-valuedness". 

    Let $T = (\alp, \var, Q, Q_{\mathrm{init}}, Q_{\mathrm{final}}, \finalupd, \Delta)$ 
    be the given "SST" and 
    let $\mathcal{P}$ be the set of all "W-patterns" of $T$.
    We first show that $\mathcal{P}$ is a regular set, modulo some 
    well-chosen encodings of "W-patterns" as words. 
    Recall that a "W-pattern" consists of a tuple of runs
    $P = (\rho_0,
          \rho'_1,\rho''_1,\rho'''_1,
          \rho'_2,\rho''_2,\rho'''_2,
          \rho'_3,\rho''_3,\rho'''_3,
          \rho_4)$,
    connected as in the diagram of
    Definition~\ref{def:W-pattern}. 
    Note that some of those runs share a common "input" 
    (e.g.~$\rho''_1,\rho''_2,\rho''_3$ share the "input" $v''$). 
    Therefore, we cannot simply encode $P$ as a sequence 
    of "runs" $\rho_0,\rho'_1,\dots$, as otherwise regularity 
    would be lost. 
    Instead, in the encoding we overlap groups of "runs" 
    over the same "input", precisely, 
    the group $\{\rho'_1,\rho'_2,\rho'_3\}$ on "input" $v'$,
    the group $\{\rho''_1,\rho''_2,\rho''_3\}$ on "input" $v''$,
    and the group $\{\rho'''_1,\rho'''_2,\rho'''_3\}$ on "input" $v'''$.
	Formally, this is done by taking the "convolution"
	of the "runs" in each group, which results
    in a word over the alphabet $\Delta^3$. 
    Accordingly, $P$ is encoded as the word
    \[
      u_P ~=~ 
      \rho_0 \:\#\: 
      (\rho'_1 \otimes \rho'_2 \otimes \rho'_3) \:\#\:
      (\rho''_1 \otimes \rho''_2 \otimes \rho''_3) \:\#\:
      (\rho'''_1 \otimes \rho'''_2 \otimes \rho'''_3) \:\#\:
      \rho_4
    \]
	where $\#$ is a fresh separator. 
	The language $L_{\mathcal{P}} = \{ u_P \mid P\in\mathcal{P}\}$, 
	consisting of all encodings of "W-patterns", is easily seen to
        be regular, recognizable by some automaton $A_{\mathcal{P}}$
        which checks that runs forming each "convolution" share the
        same "input" and verify "skeleton idempotency" 
        (recall that "skeletons" form a finite monoid). 
	The number of states of the automaton $A_{\mathcal{P}}$ 
	turns out to be polynomial in the number of states of $T$
	and in the size of the "skeleton monoid", which
	in turn is exponential in the number of variables. 
    
    Next, we construct the "SST" $T_{\overline x}$ 
    as the disjoint union of two
    deterministic "SSTs" $T_s$ and $T_{s'}$, 
    where $s = (x_1,x_2,x_3,\underline{x_4},x_5)$ and 
    $s' = (x_1,\underline{x_2},x_3,x_4,x_5)$. 
    We only describe $T_s$, as the construction of 
    $T_{s'}$ is similar. 
    The "SST" $T_s$ is obtained as a suitable restriction of 
    the composition of two "deterministic" "SSTs" 
    $T_{\mathrm{iter}}^s$ and $T_{\mathrm{exec}}$, 
    which respectively iterate the "runs" as 
    prescribed by $s$ and execute the 
    transitions read as "input". 
    When fed with the encoding $u_P$ of a "W-pattern", 
    $T_{\mathrm{iter}}^s$ needs to output $\wrun{s}\in\Delta^*$.
    More precisely, it takes as "input" a word of the form 
	$
	  \rho_0 \:\#\: 
	  (\rho'_1 \otimes \rho'_2 \otimes \rho'_3) \:\#\:
	  (\rho''_1 \otimes \rho''_2 \otimes \rho''_3) \:\#\:
	  (\rho'''_1 \otimes \rho'''_2 \otimes \rho'''_3) \:\#\:
	  \rho_4
    $
    and produces as "output"
    \[
    \begin{aligned}
      \rho_0 ~~
      & \rho'_1 \, (\rho''_1)^{x_1} \, \rho'''_1 ~~ 
        \rho'_1 \, (\rho''_1)^{x_2} \, \rho'''_1 ~~ 
        \rho'_1 \, (\rho''_1)^{x_3} \, \rho'''_1 ~~ \\
      & \rho'_2 \, (\rho''_2)^{x_4} \, \rho'''_2 ~~ \\
      & \rho'_3 \, (\rho''_3)^{x_5} \, \rho'''_3 ~~ \rho_4.
	\end{aligned}
    \]

    The SST $T_{\mathrm{iter}}^s$ uses one variable for each
    non-iterated run (e.g.~for $\rho_0$ and $\rho'_1$),
    $x_1+x_2+x_3$ variables to
    store copies of $\rho''_1$, $x_4$ variables to store copies of
    $\rho''_2$, and $x_5$ variables to store copies of $\rho''_3$, and
    eventually outputs the concatenation of all these variables to obtain
    $\wrun{s}$. Note that $T_{\mathrm{iter}}^s$ does not 
	need to check that the "input" is a well-formed encoding (this is done later when constructing $T_s$), 
	so the number of its states and variables is bounded
	by a constant;
	on the other hand, the input alphabet, consisting
	of transitions of $T$, is polynomial in the size of $T$.
        
    The construction of $T_{\mathrm{exec}}$ is straightforward:
    it just executes the transitions it reads along the "input", 
    thus simulating a run of $T$. 
    Hence $T_{\mathrm{exec}}$ has a single state and the same 
    number of variables as $T$. Its alphabet is linear in 
    the size of $T$. 
	
    Now, $T_s$ is obtained from the  composition
    $T_{\mathrm{exec}} \circ T_{\mathrm{iter}}^s$ by 
    restricting the "input" domain to $\mathcal{P}$.
    It is well-known that
    "deterministic" "SST" are closed under  composition 
    and regular domain restriction~\cite{DBLP:conf/fsttcs/AlurC10}. 
    By the above constructions, we have
    \[
      T_{\bar x}(u_P) 
      ~=~ T_s(u_P) ~\cup~ T_{s'}(u_P)
      ~=~ \{\out{\wrun{s}},\out{\wrun{s'}}\}
    \]
    and hence $T$ contains a "W-pattern" that is 
    "simply divergent" for $\bar x$ iff 
    $T_{\bar x}$ is not "single-valued".
    This already implies the decidability of the
    existence of a "simply divergent" "W-pattern" in $T$,
    and hence by Theorem \ref{thm:finite-valued},
    of "finite valuedness".

    \paragraph{Complexity analysis} 
    Let us now analyse the complexity  in detail.    
    This requires first estimating the size of $T_{\bar x}$. Let $n_T$ resp. $m_T$ be the number of states of $T$,
    resp.~its number of variables. 
   From the previous bounds on the sizes of
    $T_{\mathrm{exec}}$ and $T^s_{\mathrm{iter}}$ and Proposition~\ref{prop:compo}, 
    we derive that the number of states and variables of 
    $T_{\mathrm{exec}}\circ T^s_{\mathrm{iter}}$ is polynomial 
    in both $n_T$ and $m_T$.
    Further, restricting the domain to $\mathcal{P}$ is done 
    via a product with the automaton $A_\mathcal{P}$,
    whose size is polynomial in $n_T$ 
    and exponential in $m_T$.
    Summing up, the number of states of $T_s$ is exponential 
    in $m_T$, and polynomial in $n_T$.
    Its number of variables is polynomial in both $n_T$ and $m_T$.
    And so do $T_{s'}$ and $T_{\bar x}$. 
    
    As explained in~\cite{DBLP:conf/icalp/AlurD11}, 
    checking "single-valuedness" of "SST" reduces to 
    checking non-emptiness of a 1-reversal 2-counter 
    machine of size exponential in the number of variables 
    and polynomial in the number of states. 
    This is  fortunate, since it allows us
    to conclude that checking "single-valuedness" of 
    $T_{\bar x}$ reduces to checking non-emptiness of 
    a 1-reversal 2-counter machine of size just exponential 
    in the number of variables of $T$. 
    The {\upshape\sc PSpace} upper bound (and the {\upshape\sc PTime} upper
    bound for a fixed number of variables) now follow by recalling
    that non-emptiness of counter machines with fixed numbers of reversals and counters is
    in {\upshape\sc NLogSpace}~\cite{DBLP:journals/jcss/GurariI81}.

    \paragraph{Lower bound} For the lower bound, consider two "deterministic" "SST" $T_1,T_2$ over
    some alphabet $\alp$ with same domain $D$. Domain equivalence can be
    tested in {\upshape\sc PTime} because $T_1,T_2$ are "deterministic". 
    Consider a fresh symbol $\#\not\in\Sigma$,
    and the relation
    \[
    R ~=~ \Big\{ \big(u_1 \# \dots \# u_n, \: T_{i_1}(u_1) \# \dots \# T_{i_n}(u_n)\big) 
                 ~\Big|~
                 \begin{smallmatrix}
                 u_i\in D, ~
                 n\in\mathbb{N}, \\
                 i_1,\dots,i_n\in\{1,2\} 
                 \end{smallmatrix} 
          \Big\}
    \]
    It is easily seen that $R$ is realizable 
    by a (non-deterministic) "SST". 
    We claim that $R$ is "finite-valued" iff it is "single-valued", iff
    $T_1$ and $T_2$ are "equivalent". 
    If $T_1$ and $T_2$ are "equivalent",
    then $T_1(u_j) = T_2(u_j)$ for all $1\leq j\leq n$, hence $R$ is
    "single-valued", and so "finite-valued". 
    Conversely, if $T_1$ and $T_2$ are not "equivalent", then 
    $T_1(u)\neq T_2(u)$ for some $u\in D$, and the family of "inputs" 
    $(u\#)^n \, u$, with $n\in\mathbb{N}$, witnesses the fact that
    $R$ is not "finite-valued". 
\end{proof}

\begin{remark}
    The proof of the previous theorem can be adapted to show that the "equivalence" problem for deterministic "SSTs" and the "finite valuedness" problem for "SSTs" are equivalent (modulo polynomial many-one reductions).
\end{remark}

As a corollary, we obtain an alternative proof of the following 
known result:

\RestateGo{\restateYen}
\Yen*

    \begin{proof}
      Observe that a necessary condition for a two-way transducer to
      be finite-valued is that crossing sequences are bounded.
      More precisely, if a crossing sequence has a loop then the
      output of the loop must be empty, otherwise the transducer is
      not finite valued.
      Given a bound on the length of crossing sequences the standard
      conversion into an equivalent "SST" applies, see
      e.g.~\cite{ledent13,DBLP:journals/ijfcs/DartoisJR18}. This
      yields an SST with an exponential number of states and a linear
      number of variables, both in the number of
      states of the initial two-way
      transducer. Finally, we apply the algorithm of
      Theorem~\ref{thm:effectiveness}, and we observe that it amounts
      to checking emptiness of a $1$-reversal $2$-counter machine
      whose number of states is exponential in the number of states of
      the initial two-way transducer. We  conclude again by applying the
      {\upshape\sc NLogSpace} algorithm for checking emptiness of such
      counter machines~\cite{DBLP:journals/jcss/GurariI81}. 
    \end{proof}

\subsection{A necessary condition for finite valuedness}\label{subsec:left-to-right}

Here we prove the contrapositive of the left-to-right
implication of Theorem \ref{thm:finite-valued}:
we show that a
"divergent" "W-pattern" can generate arbitrarily
many "outputs" on the same "input".

\begin{lemma}\label{lem:unbounded-values1}
Every "SST" that contains some "divergent" "W-pattern" is not "finite-valued".
\end{lemma}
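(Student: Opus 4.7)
Proof plan. Given a divergent W-pattern $P$ with divergent tuple $(n_1,\dots,n_5)$, the objective is to construct, for every $k\in\mathbb{N}$, an input of $T$ admitting at least $k+1$ distinct outputs. The starting point is that on the input $I = u\cdot v'(v'')^{n_1}v'''\cdots v'(v'')^{n_5}v'''\cdot w$, the two W-runs $\wrun{(n_1,\underline{n_2},n_3,n_4,n_5)}$ and $\wrun{(n_1,n_2,n_3,\underline{n_4},n_5)}$ are both accepting and already witness two distinct outputs. To amplify this into unbounded valuedness I would iterate the pattern and consider the unmarked sequence $\bar s_N=(n_1,\dots,n_5)^N$ with associated input $I_N$, on which there are $5N$ accepting W-runs $\wrun{s_i}$, one per marker position $i\in\{1,\dots,5N\}$.

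Fix $k$ and let $m$ be the constant provided by Lemma~\ref{lem:saarela-delay} for parameter $k$. The key intermediate claim is that, for $N$ sufficiently large, one can extract from $\{\wrun{s_i}\}_{i\in\{1,\dots,5N\}}$ a subfamily $\rho_0,\dots,\rho_k$ of $k+1$ runs such that for every pair $0\le a<b\le k$, either $\out{\rho_a}\neq\out{\rho_b}$ or $\delay<Cm>{\rho_a,\rho_b}>Dm^2$. Once this is established, Lemma~\ref{lem:saarela-delay} provides a tuple of pairwise disjoint loops and pumping counts such that the pumped versions of $\rho_0,\dots,\rho_k$ produce pairwise different outputs on the common pumped input, witnessing that $T$ is not $k$-valued. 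Since $k$ is arbitrary, $T$ is not finite-valued.

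The heart of the proof is therefore the existence of the subfamily. Natural candidates are the runs indexed by positions $5(t-1)+2$ and $5(t-1)+4$ for varying block indices $t$, since two such runs (for different choices of $t$, or for the two marker positions inside the same block) differ only by a local replacement of $\gamma$-updates by $\beta$- or $\eta$-updates. Starting from the divergence witnessed at $t=1$, I would show, using the skeleton-idempotency of the $U_\beta$ and $U_\eta$ loop updates and the polynomial shape of pumped outputs provided by Corollary~\ref{cor:loop-pumping}, that either many such pairs already witness distinct outputs, or the weights computed along the runs at the appropriate cuts diverge linearly with $N$, forcing the $Cm$-delay above the threshold $Dm^2$. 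Corollary~\ref{cor:saarela2} would then be used to handle the word-equation constraints that emerge when several pairs accidentally produce equal outputs: the divergence inequality has a solution set containing a Cartesian product of intervals of unbounded size, and such richness is incompatible with a globally consistent cancellation of the divergence.

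The main obstacle I anticipate is precisely this block-wise propagation: a priori the $U_\eta$ loops applied after block $t$ might substitute into variables that encode the divergence and eventually discard them through the final update. The plan is to exploit the asymmetry that for blocks close to the end, very few $U_\eta$ iterations follow, so any erasure is constrained; an erasure pattern consistent simultaneously for all block indices $t\in\{1,\dots,N\}$ and for all $N$ would, via Corollary~\ref{cor:saarela2}, force the divergence word inequality to be satisfied nowhere, contradicting the hypothesis on $P$. Turning this informal incompatibility into a quantitative lower bound on either the number of distinct outputs or on the delay is the technical core of the argument.
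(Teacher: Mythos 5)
Your plan diverges substantially from the paper's proof, and its central step is not established. The paper never invokes the delay machinery (Lemma~\ref{lem:saarela-delay} is used only in the proof of Proposition~\ref{prop:coveringIsFiniteValued}); instead, it argues directly on word inequalities. Concretely, the paper considers inequalities $e_{M,i,j}$ comparing the outputs of $\wrun{s}$ and $\wrun{s'}$ for two different marker positions $i<j$ in a length-$M$ sequence, introduces a three-variable inequality $e[\formal{x},\formal{y},\formal{z}]$ obtained from the divergence witness by repeating the blocks $n_1$, $n_3$, $n_5$ with multiplicities $\formal{x},\formal{y},\formal{z}$, and then uses the \emph{quantifier-ordered} Cartesian-product solution set of Corollary~\ref{cor:saarela2} (starting with $\exists\ell_{\formal{y}}$, not $\exists\ell_{\formal{x}}$, which is essential for spacing the marker positions) to exhibit, for every $m$, a set $I$ of $m+1$ marker positions such that each $e_{M,i,j}$ ($i<j\in I$) is individually satisfiable. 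Corollary~\ref{cor:saarela1} then yields a single tuple $t_1,\dots,t_M$ satisfying the \emph{whole system} simultaneously, and the corresponding $m+1$ runs $\wrun{(t_1,\dots,\underline{t_h},\dots,t_M)}$ have the same input and pairwise different outputs.

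The genuine gap in your proposal is the intermediate claim: that for $N$ large enough one can extract from $\{\wrun{s_i}\}_i$ a subfamily of $k+1$ runs with pairwise either different outputs or $Cm$-delay above $Dm^2$. Nothing in the definition of a divergent W-pattern says anything about delays, and it is perfectly consistent that all $5N$ iterated runs produce the \emph{same} output with \emph{small} pairwise delay even though a single pair on a different parameter tuple diverges. Your suggested block-wise weight-propagation argument is precisely the hard part, and you have not carried it out; moreover, committing in advance to the unmarked sequence $(n_1,\dots,n_5)^N$ is already problematic, since the pairwise divergence inequalities $e_{M,i,j}$ need not all be satisfied by that specific tuple. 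The paper avoids this by leaving the parameters free and letting Corollary~\ref{cor:saarela1} produce a simultaneously satisfying tuple $t_1,\dots,t_M$ after the Cartesian-product spreading from Corollary~\ref{cor:saarela2}. If you want to salvage your plan, you would essentially have to re-derive the paper's word-combinatorics step anyway; attacking via delay does not offer a shortcut here.
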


\begin{proof}
Let us fix an "SST" with a "divergent" "W-pattern" $P$.
In order to prove that the "SST" is not "finite-valued", we show that we can construct
arbitrary many "accepting" "runs" 
of $P$ that consume the 
same "input" and produce pairwise different "outputs".
To do this we will consider for some suitable $M \in \mathbb{N}$
"inequalities" in the formal parameters 
$\formal{s}_1, \ldots, \formal{s}_M \in \mathbb{N}_+$ (where $1 \leq i < j \leq M$),
and look for arbitrary large, "satisfiable", sets of "inequalities" of the form:
\begin{align*}
    &e_{M,i,j}[\formal{s}_1,\formal{s}_2, \ldots, \formal{s}_M]: 
    \quad\left.
    \begin{array}{ccc}
         & \out{\wrun{\formal{s}_1,\formal{s}_2,\ldots,\formal{s}_{i-1},\underline{\formal{s}_{i}},\formal{s}_{i+1}, \ldots ,\formal{s}_M}}\\
         & \neq\\
         & \out{\wrun{\formal{s}_1,\formal{s}_2,\ldots,\formal{s}_{j-1},\underline{\formal{s}_{j}},\formal{s}_{j+1}, \ldots ,\formal{s}_M}}.
    \end{array}
    \right.
\end{align*}
Recall that, according to the diagram of Definition \ref{def:W-pattern},
the number of variable occurrences before (resp.~after) the underlined parameter 
represents the number of "loops" at state $q_1$ (resp.~$q_2$) in a "run" of the "W-pattern".
Moreover, each variable $\formal{s}_i$ before (resp.~after) the underlined parameter 
represents the number of repetitions of the small "loops" at $r_1$ (resp.~$r_3$) within 
occurrences of bigger "loops" at $q_1$ (resp.~$q_2$); similarly, the underlined
variable represents the number of repetitions of the "loop" at $r_2$ within
the "run" that connects $q_1$ to $q_2$.
In view of this, by Corollary~\ref{cor:loop-pumping}, the "outputs" of the "runs" 
considered in the above inequality have the format required for a "word inequality" 
with repetitions parametrized by $\formal{s}_1, \ldots, \formal{s}_M$.

The fact that the "W-pattern" $P$ is "divergent" will help to find sets
of "satisfiable" "inequalities" $e_{M,i,j}$ of arbitrary large
cardinality. This, in turn, will produce (combined with our word
combinatorics results) arbitrary many "accepting" "runs" 
over the 
same "input", having pairwise different "outputs".

\begin{claim*}
    For every $m \in \mathbb{N}$, there exist $M \in \mathbb{N}$ and a set $I \subseteq \{1,2,\ldots,M\}$ 
    of cardinality $m + 1$ such that, for all $i<j\in I$, $e_{M,i,j}$ is "satisfiable".
\end{claim*}

\begin{proof}[Proof of the claim]
Since $P$ is a "divergent" "W-pattern", there exist
$n_1,n_2,n_3,n_4,n_5 \in \mathbb{N}_+$ such that
\[
  \out{\wrun{n_1,\underline{n_2},n_3,n_4,n_5}}
  ~\neq~
  \out{\wrun{n_1,n_2,n_3,\underline{n_4},n_5}}. 
\]
We fix such numbers $n_1,n_2,n_3,n_4,n_5 \in \mathbb{N}_+$.
Consider now the following "inequality" over the formal parameters 
$\formal{x}, \formal{y}, \formal{z}$:
\begin{align*}
    &e[\formal{x}, \formal{y}, \formal{z}]: 
    \quad\left.
    \begin{array}{ccc}
         & \out{\wrun{
         \overbrace{n_1,n_1,\ldots,n_1}^{\formal{x} \textup{ times}}, \: \underline{n_2}, \:
         \overbrace{n_3,\ldots,n_3}^{\formal{y} \textup{ times}}, \: n_4, \:
         \overbrace{n_5,\ldots,n_5}^{\formal{z} \textup{ times}}}}\\[1ex]
         & \neq\\
         & \out{\wrun{
         \underbrace{n_1,n_1,\ldots,n_1}_{\formal{x} \textup{ times}}, \: n_2, \:
         \underbrace{n_3,\ldots,n_3}_{\formal{y} \textup{ times}}, \: \underline{n_4}, \:
         \underbrace{n_5,\ldots,n_5}_{\formal{z} \textup{ times}}}}.
    \end{array}
    \right.
\end{align*}
Note that every instance of $e[\formal{x}, \formal{y}, \formal{z}]$ 
with concrete values $x,y,z$ is also an instance of $e_{M,i,j}$, where 
$M=x+y+z+2$, $i=x+1$, $j=x+y+2$, and all parameters 
$\formal{s}_1, \ldots, \formal{s}_M$ are instantiated with values from $\{n_1, \ldots, n_5\}$. 
Moreover, as the parameters in $e[\formal{x}, \formal{y}, \formal{z}]$ determine the number of 
repetitions of $n_1,n_3,n_5$, which in their turn correspond to "pumping" 
"loops" at $q_1$ and $q_2$,
by Corollary~\ref{cor:loop-pumping}, the "outputs" of the considered "runs" 
have the format required for a "word inequality" with repetitions 
parametrized by $\formal{x}, \formal{y}, \formal{z}$.

Since $e[\formal{x}, \formal{y}, \formal{z}]$ is "satisfiable" (e.g.~with
$\formal{x}=\formal{y}=\formal{z}=1$), Corollary~\ref{cor:saarela2}
implies that
\[
\begin{aligned}
  & \exists \ell_{\formal{y}} ~ \forall h_{\formal{y}} ~ 
  \exists \ell_{\formal{x}} ~ \forall h_{\formal{x}} ~ 
  \exists \ell_{\formal{z}} ~ \forall h_{\formal{z}} \\
  &\qquad\quad
  \underbrace{[\ell_{\formal{x}},h_{\formal{x}}]}_{\text{values for }\formal x} \times
  \underbrace{[\ell_{\formal{y}},h_{\formal{y}}]}_{\text{values for }\formal y} \times
  \underbrace{[\ell_{\formal{z}},h_{\formal{z}}]}_{\text{values for }\formal z} 
  ~\subseteq~ \sol{e}.
\end{aligned}
\]
Note that we start by quantifying over $\ell_y$ and not $\ell_x$
(Corollary~\ref{cor:saarela2} is invariant with respect to the
parameter order). 
exist three integers $\ell_y,\ell_x,\ell_z>0$ such that 
\begin{equation}\label{equ:solutionXYZ}
[\ell_{\formal{x}}, \ell_{\formal{x}} + 2m\ell_{\formal{y}}] \times
[\ell_{\formal{y}}, 2m \ell_{\formal{y}}] \times
[\ell_{\formal{z}}, \ell_{\formal{z}} + 2m \ell_{\formal{y}}]
  ~\subseteq~ \sol{e}.
\end{equation}
Note that $h_y = 2m\ell_y$ depends only on $\ell_y$, while
$h_x=\ell_x+2m\ell_y$ depends on both $\ell_x$ and~$\ell_y$. 

We can now prove the claim by letting 
$M = \ell_{\formal{x}} + 2m\ell_{\formal{y}} + \ell_{\formal{z}} + 1$ and 
$I = \{\ell_{\formal{x}} + 2\lambda\ell_{\formal{y}} + 1 \mid 0 \leq \lambda \leq m\}$.
The gap between two consecutive values of $I$ equals $2\ell_y$,
and for every $i < j\in I$ we get
\[
\begin{array}{lllll}
    i-1 & \in &
  [\ell_{\formal{x}},\ell_{\formal{x}}+2(m-1)\ell_{\formal{y}}] & \subseteq &
  [\ell_{\formal{x}}, \ell_{\formal{x}} + 2m\ell_{\formal{y}}] \\
  j-i-1 & \in & [2\ell_{\formal{y}}-1,2m\ell_{\formal{y}}-1] & \subseteq &
  [\ell_{\formal{y}},2m\ell_{\formal{y}}] \\
  M-j & \in &
  [\ell_{\formal{z}}, \ell_{\formal{z}} + 2(m-1)\ell_{\formal{y}}] & \subseteq &
  [\ell_{\formal{z}}, \ell_{\formal{z}} + 2m \ell_{\formal{y}}].
\end{array}
\]

Thus, by Equation~\eqref{equ:solutionXYZ},
$(i - 1, j - i - 1, M - j) \in \sol{e}$. 
This "solution" of $e$ corresponds to the instance of  $e_{M,i,j}$ 
with the values for the formal parameters $\formal{s}_1,\dots,\formal{s}_M$
defined by

\[
\formal{s}_h = \left\{
\begin{array}{lll}
n_1 & \textup{ for every } 1 \leq h \leq i-1,\\
n_2 & \textup{ for } h = i,\\
n_3 & \textup{ for every } i+1 \leq h \leq j-1,\\
n_4 & \textup{ for } h = j,\\
n_5 & \textup{ for every } j+1 \leq M.
\end{array}
\right.
\]
Hence, $e_{M,i,j}$ is "satisfiable" for all $i<j \in I$, as claimed.
\end{proof}

We can now conclude the proof of the lemma using the above claim:
Corollary~\ref{cor:saarela1} tells us that any "system" of "word inequalities" 
is "satisfiable" when every "word inequality" in it is so.
Using this and the above claim, we derive that for every $m$ there exist 
$t_1,t_2,\ldots,t_M \in \mathbb{N}_+$ such that, for all $i<j\in I$ (with $I$ as in the claim),
$e_{M,i,j}[t_1,t_2,\ldots,t_M]$ holds.
For every $h \in I$, let
\[
    \rho_h ~=~ 
    \wrun{t_1,t_2,\ldots,t_{h-1},\underline{t_{h}},t_{h+1}, \ldots ,t_M}.
\]
Note that all runs $\rho_h$, for $h \in I$, consume the same "input",
since they all correspond to the same unmarked sequence
$(t_1,t_2,\ldots, t_M)$.
However, they produce pairwise different "outputs", because 
for every $i<j \in I$, the tuple $(t_1,t_2,\ldots,t_M)$
is a "solution" of $e_{M,i,j}$. 
Since $|I|=m$ can be chosen arbitrarily, the transducer is not "finite-valued". 
\end{proof}

\subsection{A sufficient condition for finite valuedness}
\label{subsec:right-to-left}

We finally prove that  any "SST" that exhibits  no 
"simply divergent" "W-pattern" is "finite-valued".
The proof relies on two crucial results.
The first one is a characterization of "finite ambiguity"
for "SSTs", which is easily derived from  the characterization
of "finite ambiguity" for finite state automata
\cite{finite-ambiguity1977simon,finite-ambiguity1977jacob,seidl91,AllauzenMR11}:

\begin{definition}\label{def:dumbbell}
A ""dumbbell"" is a substructure of an "SST"
consisting of states $q_1,q_2$ connected by 
"runs" as in the diagram
\[
\begin{tikzpicture}[baseline=47, scale=1.5]
	\myclip{-2.75,-0.5}{4.75,2.5};
	\node (q) at (0,1.75) {$q_1$};
	\node (q') at (2,1.75) {$q_2$};
	\node (initial) at (0,0) 
	      {$\begin{subarray}{c} \text{initial} \\ \text{state} \end{subarray}$};
	\node (final) at (2,0) 
	      {$\begin{subarray}{c} \text{final} \\ \text{state} \end{subarray}$};
	\draw (initial) edge [arrow] node [below left=1mm] {$\rho_0: ~u/\alpha~$} (q);
	\draw (q') edge [arrow] node [below right=1mm] {$~\rho_4: ~w/\omega$} (final);
	\draw (q) edge [arrow, loop left] node [left=2mm] {$\rho_1: ~v/\beta$} (q);
	\draw (q) edge [arrow] node [above=1mm] {$\rho_2: ~v/\gamma$} (q');
	\draw (q') edge [arrow, loop right] node [right=2mm] {$\rho_3: ~v/\eta$} (q');
\end{tikzpicture}
\]
where
the "runs" $\rho_1$ and $\rho_3$ are "loops" (in particular,
they produce "updates" with "idempotent" "skeletons") and
at least two among the runs $\rho_1,\rho_2,\rho_3$ 
are distinct.
\end{definition}

\begin{lemma}\label{lem:finite-ambiguity}
An "SST" is "finite-ambiguous" iff 
it does not contain any "dumbbell".
\end{lemma}

\begin{proof}
Let 
$T = (\alp, \var, Q, Q_{\mathrm{init}}, Q_{\mathrm{final}}, \finalupd, \Delta)$
be an "SST".
\AP
By projecting away the "updates" on the transitions we obtain from $T$ a 
""multiset finite-state automaton"" $A$.
Formally, $A = (\alp, Q,Q_{\mathrm{init}}, Q_{\mathrm{final}}, \Delta')$,
where $\Delta'$ is the \emph{multiset} containing one occurrence of a
triple $(q,a,q')$ for each transition of the form $(q,a,\alpha,q')$ in $\Delta$.
Note that a "multiset automaton" can admit several occurrences of 
the same ("accepting") "run".
Accordingly, the notion of "finite ambiguity" for $A$ requires the existence
of a uniform bound on the number of \emph{occurrences} of "accepting" "runs"
of $A$ on the same "input".
We also remark that "multiset automata" are essentially the same 
as weighted automata over the semiring of natural numbers (the weight of a transition being its number of occurrences), with only a difference in terminology where
"finite ambiguity" in "multiset automata" corresponds to 
"finite valuedness" in weighted automata.

Given the above construction of $A$ from $T$, one can verify by 
induction on $|u|$ that the number of occurrences of "accepting" "runs"
of $A$ on $u$ coincides with the number of "accepting" "runs" of $T$
on $u$. This means that $A$ is "finite-ambiguous" iff 
$T$ is "finite-ambiguous".

\AP
Finally, we recall the characterizations of "finite ambiguity"
from \cite{finite-ambiguity1977simon,finite-ambiguity1977jacob,seidl91}
(see in particular Theorem 1.1 and Lemma 2.6 from 
\cite{finite-ambiguity1977simon}).
In short, their results directly imply 
that a "multiset automaton" is "finite-ambiguous" iff 
it does not contain a ""plain dumbbell"", namely, a substructure
of the form
\[
\begin{tikzpicture}[baseline=47, scale=1.5]
	\myclip{-2.5,0}{4.5,2.5};
	\node (q) at (0,1.75) {$q_1$};
	\node (q') at (2,1.75) {$q_2$};
	\node (initial) at (0,0) 
	      {$\begin{subarray}{c} \text{initial} \\ \text{state} \end{subarray}$};
	\node (final) at (2,0) 
	      {$\begin{subarray}{c} \text{final} \\ \text{state} \end{subarray}$};
	\draw (initial) edge [arrow] node [below left=1mm] {$\rho'_0: ~u~$} (q);
	\draw (q') edge [arrow] node [below right=1mm] {$~\rho'_4: ~w$} (final);
	\draw (q) edge [arrow, loop left] node [left=2mm] {$\rho'_1: ~v$} (q);
	\draw (q) edge [arrow] node [above=1mm] {$\rho'_2: ~v$} (q');
	\draw (q') edge [arrow, loop right] node [right=2mm] {$\rho'_3: ~v$} (q');
\end{tikzpicture}
\]
where at least two among $\rho'_1,\rho'_2,\rho'_3$
are distinct  "runs".

This almost concludes the proof of the lemma, since any "dumbbell" 
of $T$ can be projected into a "plain dumbbell" of $A$.
The converse implication, however, is not completely straightforward.
The reason is that the cyclic "runs" of a "plain dumbbell" in $A$
do not necessarily correspond to "loops" in the "SST" $T$, 
as the "runs" need not produce "updates" 
with "idempotent" "skeletons".
Nonetheless, we can reason as follows.
Suppose that $A$ contains a "plain dumbbell", 
with occurrences of "runs" $\rho'_0,\rho'_1,\rho'_2,\rho'_3,\rho'_4$ 
as depicted above.
Let $\rho_0,\rho_1,\rho_2,\rho_3,\rho_4$ be some
corresponding "runs" in $T$ (with $\rho_i$ projecting to $\rho'_i$) and let 
$\alpha,\beta,\gamma,\eta,\omega$ be their induced "updates".
Further let $n$ be a large enough number such that
$\beta^n$ and $\eta^n$ have "idempotent" "skeletons"
(such an $n$ always exists since the "skeleton monoid"
is finite).
Now consider the substructure in $T$ given by the "runs" 
$\rho_0$, 
$(\rho_1)^n$, 
$(\rho_1)^{n-1} \, \rho_2$, 
$(\rho_3)^n$, 
and
$\rho_4$. 
This substructure satisfies precisely the definition 
of "dumbbell" for the "SST" $T$.
\end{proof}

%

The second ingredient for the proof of the right-to-left 
implication of Theorem \ref{thm:finite-valued}
uses once more the cover construction 
described in Proposition \ref{prop:separation-covering}.
More precisely, in Lemma \ref{lem:unbounded-values3} below
we show that if an "SST" $T$ has no "simply divergent" "W-pattern", 
then, for some well chosen values $C$, $D$, $m$, 
the "SST" $\Cover*[Cm,Dm^2]{T}$ contains no "dumbbell".
Before proving the lemma, let us show how it can be used 
to establish the right-to-left implication of Theorem \ref{thm:finite-valued}.

\paragraph{Proof of Theorem~\ref{thm:finite-valued}}
By Lemma~\ref{lem:finite-ambiguity}, 
if $\Cover*[Cm,Dm^2]{T}$ has no "dumbbell",
then it is "finite-ambiguous", hence 
"finite-valued".
Since $\Cover*[Cm,Dm^2]{T}$  and $T$ are "equivalent",
$T$ is "finite-valued", too.
\qed

\begin{lemma}\label{lem:unbounded-values3}
Given an SST $T$, one can compute  numbers $C$, $D$, $m$ such that
if $\Cover*[Cm,Dm^2]{T}$ contains a "dumbbell", then $T$ contains a 
"simply divergent" "W-pattern".
\end{lemma}

\begin{proof}
We first provide some intuition. 
If $\Cover[Cm,Dm^2]{T}$ contains a "dumbbell" for suitable values of $C,D,m$, then we show that
this "dumbbell" admits two distinct "runs" $\pi,\pi'$
that have either different "outputs" or  large "delay". 
In both cases we will be able to transform the "dumbbell" 
into a "simply divergent" "W-pattern" in $\Cover[Cm,Dm^2]{T}$, hence $T$ will have one as well.

Formally, let $T$ be an "SST". Let $C,D$ be defined 
as in Lemma~\ref{lem:delay-vs-output}, and
$m = 7E^{H^2 + H + 1} + 1$, where $E,H$ are 
defined as in Lemma~\ref{lem:ramsey}.
Next, suppose that $\Cover[Cm,Dm^2]{T}$ contains
a "dumbbell" as in Definition \ref{def:dumbbell}, 
with "runs" $\rho_0,\rho_1,\rho_2,\rho_3,\rho_4$
that produce respectively the "updates"
$\alpha,\beta,\gamma,\eta,\omega$. 

Consider the following "accepting" "runs", which are obtained
by composing the copies of the original "runs" of the "dumbbell". The runs $\pi, \pi'$ are different because at least two of $\rho_1,\rho_2,\rho_3$ are different:
\begin{equation}\label{eq:cover-runs}
\begin{aligned}
	\pi  &~=~  \rho_0 ~ \rho_1 ~ \rho_2 ~ \rho_3 ~ \rho_3 ~ \rho_3 ~ \rho_4\\
    \pi' &~=~ \rho_0 ~ \rho_1 ~ \rho_1 ~ \rho_1 ~ \rho_2 ~ \rho_3 ~ \rho_4.
\end{aligned}
\end{equation}
By the properties of $\Cover[Cm,Dm^2]{T}$, 
since $\pi$ and $\pi'$ consume the same "input",
they either produce different "outputs" or
have $Cm$-"delay" larger than $Dm^2$.

We first consider the case where the "outputs" are 
different. In this case, we can immediately
witness a "simply divergent" "W-pattern" $P$
by adding empty "runs" to the "dumbbell";
formally, for every $i=1,2,3$, we let 
$\rho'_i = \rho_i$ and $\rho''_i = \rho'''_i = \varepsilon$,
so as to form a "W-pattern" like the one in Figure~\ref{fig:Wpattern}, with
$r_1 = q_1$ and $r_2 = r_3 = q_2$.
Using the notation introduced at the beginning 
of Section \ref{sec:finite-valuedness}, we observe that 
$\pi  = \wrun{1,\underline{1},1,1,1}$ and 
$\pi' = \wrun{1,1,1,\underline{1},1}$
---
recall that the underlined number represents
how many times the small "loop" at $r_2$,
which is empty here, is repeated along the
"run" from $q_1$ to $q_2$, and the
other numbers represent how many times the small
"loops" at $r_1$ and $r_3$, which are also empty here,
are repeated within the occurrences of big "loops" 
at $q_1$ and $q_2$.
Since, by assumption, the "runs" $\pi$ and $\pi'$ 
produce different "outputs", the "W-pattern" $P$ 
is "simply divergent", as required.

We now consider the case where $\pi$ and $\pi'$
have large "delay", namely,
$\delay<C m>{\pi,\pi'} > D m^2$.
In this case Lemma \ref{lem:delay-vs-output} guarantees 
the existence of a set
$I \subseteq \{0, 1, \ldots, |\pi|\}$ containing 
$m$ positions in between the input letters such that,
for all pairs $i < j$ in $I$, the interval $[i,j]$
is a "loop" on both $\pi$ and $\pi'$ and satisfies
\begin{equation}\label{equ:pumpedDiff}
  \out{\pump<2>[[i,j]]{\pi}} 
  \:\neq\: 
  \out{\pump<2>[[i,j]]{\pi'}}.
\end{equation}

Next, recall from Equation \eqref{eq:cover-runs}
that $\pi$, and similarly $\pi'$, consists of seven parts,
representing copies of the original "runs" of the "dumbbell"
and consuming the "inputs" $u, v, v, v, v, v, w$.
We identify these parts with the numbers $1,\dots,7$.
Since we defined $m$ as $7E^{H^2 + H + 1}+1$,
there is one of these parts
in which at least $E^{H^2 + H + 1}+1$ of the 
aforementioned positions of $I$ occur.
Let $p \in \{1,2,\ldots,7\}$
denote the number of this part,
and let $I_p$ be a set of $E^{H^2 + H + 1}+1$ positions from $I$
that occur entirely inside the $p$-th part.
We conclude the proof by a further case distinction, 
depending on whether $p \in \{1,7\}$ or $p \in \{2,\dots,6\}$.

\paragraph{Parts $1$ and $7$}
Let us suppose that $p \in \{1,7\}$, 
and let $i$ and $j$ be two distinct positions in $I_p$.
We let $P$ be the "W-pattern" obtained by transforming 
the "dumbbell" as follows:
\begin{enumerate}
    \item First, we "pump" either $\rho_0$ or $\rho_4$
          depending on $p$:
    \begin{itemize}
        \item
        If $p = 1$, we set $\rho_0' = \pump<2>[[i,j]]{\rho_0}$
        and $\rho_4' = \rho_4$.
        \item
        If $p = 7$, we set $\rho_4' = \pump<2>[[i-|u\,v^5|,j-|u\,v^5|]]{\rho_4}$
        and $\rho_0' = \rho_0$.
    \end{itemize}
    \item Then, we add empty "runs" to $\rho_1$, $\rho_2$, and $\rho_3$;
    formally, for each $h \in \{1,2,3\}$, we set
    $\rho_h' = \rho_h$ and $\rho_h'' = \rho_h''' = \varepsilon$.
\end{enumerate}
Now that we identified a "W-pattern" $P$ in $\Cover{T}$, 
we note that 
\[
\begin{aligned}
  \pump<2>[[i,j]]{\pi}  &~=~ \wrun{1,\underline{1},1,1,1} \\
  \pump<2>[[i,j]]{\pi'} &~=~ \wrun{1,1,1,\underline{1},1}.
\end{aligned}
\]
We also recall Equation~\ref{equ:pumpedDiff}, which states 
that these "runs" produce different "outputs".
This means that the "W-pattern" $P$ is "simply divergent".
Finally, since the "runs" of $\Cover{T}$ can be projected 
into "runs" of $T$, we conclude that $T$ contains a 
"simply divergent" "W-pattern".

\paragraph{Parts $2 - 6$}
Let us suppose that $p \in \{2,\dots,6\}$.
Note that, in this case, the elements of $I_p$ 
denote positions inside the $p$-th factor of the "input"
$u\,v\,v\,v\,v\,v\,w$, which is a $v$.
To refer directly to the positions of $v$, we define
$I'_p$ as the set obtained by subtracting $|u\,v^{p-1}|$ 
from each element of $I_p$.
Since the set $|I'_p|$ has cardinality $E^{H^2 + H + 1}+1$,
we claim that we can find an interval with endpoints from $I'_p$ 
that is a "loop" of $\rho_1$, $\rho_2$, and $\rho_3$, at the same time.
Specifically, we can do so via three consecutive applications 
of Lemma~\ref{lem:ramsey}:
\begin{enumerate}
    \item As $|I'_p| = E^{H^2 + H + 1}+1 = E \cdot (E^{H+1})^H + 1$,
    there exists a set $I''_p \subseteq I'_p$ of cardinality $E^{H+1}+1$
    such that for every pair $i < j$ in $I''_p$, the interval $[i,j]$
    is a "loop" of $\rho_1$;
    \item As $|I_p''| = E^{H+1}+1 = E \cdot E^H + 1$,
    there exists $I'''_p \subseteq I''_p$ of cardinality $E+1$
    s.t. for every pair $i < j$ in $I'''_p$, the interval $[i,j]$
    is a "loop" of $\rho_2$ 
    (and also of $\rho_1$, since $i,j \in I'''_p \subseteq I''_p$);
    \item As $|I'''_p| = E + 1 = E \cdot 1^H + 1$, 
    there are two positions $i<j$ in $I'''_p$
    such that the interval $[i,j]$ is a "loop" of $\rho_3$ 
    (and also of $\rho_1$ and $\rho_2$ since $i,j \in I'''_p \subseteq I''_p$).
\end{enumerate}
The diagram below summarizes the current situation:
we have just managed to find an interval $[i,j]$ 
that is a "loop" on all $v$-labelled "runs" 
$\rho_1,\rho_2,\rho_3$ 
of the "dumbbell" (the occurrences of this interval
inside $\rho_1,\rho_2,\rho_3$ are highlighted by thick
segments):
\[
\begin{tikzpicture}[scale=1.5]
\myclip{-2.6,-0.3}{6.1,3.5};
\node (initial) at (0,0.1) 
	  {$\begin{subarray}{c} \text{initial} \\ \text{state} \end{subarray}$};
\node (q) at (0,1.65) {$q_1$};
\node (q') at (3.5,1.65) {$q_2$}; 
\node (final) at (3.5,0.1) 
	  {$\begin{subarray}{c} \text{final} \\ \text{state} \end{subarray}$};
\draw (initial) edge [arrow] 
      node [left=2mm] {$\rho_0:~ u/\alpha$} (q);
\draw (q') edge [arrow] 
      node [right=2mm] {$\rho_4:~ w/\omega$} (final);
\draw (q) edge [arrow,  looseness=40, out=90, in=170]      
	  node [below left, pos=.9] {$\rho_1:~ v/\beta\phantom{xi}$} 
      node [pos=.15] (i) {}
      node [pos=.5] (j) {}
      (q);
\draw (q') edge [arrow,  looseness=40, out=10, in=90]             
	  node [below right, pos=.1] {$\phantom{x}\rho_3:~ v/\eta$} 
      node [pos=.15] (i'') {}
      node [pos=.5] (j'') {}
      (q');
\draw (q) edge [arrow]             
	  node [below = 2mm, pos=.75] {$\rho_2:~ v/\gamma$} 
      node [pos=.25] (i') {}
      node [pos=.6] (j') {}
      (q');
\draw (i) edge [bend right = 55, line width=4] (j);
\draw (i) edge [|-|, bend right = 55, line width=1.5] (j);
\draw (i') edge [line width=4] (j');
\draw (i') edge [|-|, line width=1.5] (j');
\draw (i'') edge [bend right = 55, line width=4] (j'');
\draw (i'') edge [|-|, bend right = 55, line width=1.5] (j'');
\end{tikzpicture}
\]

We can now expose a "W-pattern" $P$ by merging the
positions $i$ and $j$ inside each $v$-labelled "run" 
$\rho_1$, $\rho_2$, and $\rho_3$ of the "dumbbell".
Formally, we let 
$\rho'_0 = \rho_0$, 
$\rho'_4 = \rho_4$, and for every $h \in \{1,2,3\}$,
we define $\rho'_h$, $\rho''_h$, and $\rho'''_h$, respectively,
as the intervals $[0,i]$, $[i,j]$, and $[j,|v|]$ of $\rho_h$.
Now that we have identified a "W-pattern" $P$ inside $\Cover{T}$,
we remark that 
$\pi  = \wrun{1,\underline{1},1,1,1}$ and 
$\pi' = \wrun{1,1,1,\underline{1},1}$.
Additionally, if we transpose $i$ and $j$ from $I_p'$ back to $I$,
that is, if we set $i' = i+|u\,v^{p-1}|$ and $j' = j+|u\,v^{p-1}|$,
since both $i'$ and $j'$ occur in the $p$-th part of $\pi$ and $\pi'$,
"pumping" the interval $[i',j']$ in $\pi$ (resp.~$\pi'$) amounts to 
incrementing the $(p-1)$-th parameter in the notation
$\wrun{1,\underline{1},1,1,1}$ (resp.~$\wrun{1,1,1,\underline{1},1}$).
More precisely:
\[
\begin{aligned}
  \pump<2>[[i',j']]{\pi}   &~=~ \wrun{n_1,\underline{n_2},n_3,n_4,n_5} \\
  \pump<2>[[i',j']]{\pi'}  &~=~ \wrun{n_1,n_2,n_3,\underline{n_4},n_5}
\end{aligned}
\]
where each $n_{p'}$ is either $2$ or $1$ depending on whether $p' = p-1$ or not.
Since Equation~\ref{equ:pumpedDiff} states that these two "runs" 
produce different "outputs", the "W-pattern" $P$ is "simply divergent".
Finally, since the "runs" of $\Cover{T}$ can be projected 
into "runs" of $T$, we conclude that, also in this case, 
$T$ contains a "simply divergent" "W-pattern".
\end{proof}

\section{Conclusion}\label{sec:conclusion}

We have drawn a rather complete
picture of finite-valued SSTs and answered several open questions of~\cite{DBLP:conf/icalp/AlurD11}. 
Regarding expressiveness, finite-valued SSTs can be decomposed as unions of deterministic SSTs
(Theorem~\ref{thm:FiniteValuedToDecomposition}). They are
equivalent to finite-valued two-way transducers
(Theorem~\ref{thm:finval-SST-2FT}), and to "finite-valued"
non-deterministic MSO
transductions (see Section~\ref{sec:intro}). On the algorithmic side,
their equivalence problem is decidable in elementary time
(Theorem~\ref{thm:equiv-finval-SST}) and "finite valuedness" of "SSTs" is
decidable in {\upshape\sc PSpace} ({\upshape\sc PTime} for fixed number of
variables), see Theorem~\ref{thm:effectiveness}. As an alternative proof
to the result of~\cite{YenY22}, our results imply that "finite valuedness" of
two-way transducers can be decided in
{\upshape\sc PSpace} 
(Corollary~\ref{thm:finval-twoway}). Because of the effective expressiveness
equivalence between "SSTs" and non-deterministic MSO transductions, our
result also entails decidability of "finite valuedness" for the latter
class.

\medskip
\subparagraph{\bf Further questions.}
A first natural question is how big the valuedness of an "SST" can
be.
In the classical case of one-way transducers the valuedness has been
shown to be at most exponential (if finite)~\cite{Weber93}.
We can obtain a bound from Lemma~\ref{lem:unbounded-values3}, but the
value is likely to be sub-optimal.

Our "equivalence" procedure relies on the
decomposition of a $k$-valued "SST" into a union of $k$ "deterministic"
"SSTs" each of elementary size. The latter construction is likely to
be sub-optimal, too, and so is our complexity for checking "equivalence". 
On the other hand, only a {\upshape\sc PSpace} lower bound is
known, which follows easily from a reduction of NFA equivalence~\cite{AhoHU74}. 
A better understanding of the
complexity of the "equivalence" problem for (sub)classes of "SSTs" is a
challenging question. Already for "deterministic" "SSTs", the complexity of the "equivalence" problem
is only known to lie between {\upshape\sc NLogSpace} and
{\upshape\sc PSpace}~\cite{DBLP:conf/popl/AlurC11}.

However, beyond the "finite-valued" setting there is little hope to 
find a natural restriction on "valuedness" which would preserve 
the decidability of the "equivalence" problem. Already for 
one-way transducers of linear "valuedness" (i.e.~where the number 
of outputs is linear in the "input" length), "equivalence" is
undecidable, as shown through a small modification of the 
proof of~\cite{iba78siam}.

"Deterministic" "SSTs" have been extended,
while preserving decidability of the "equivalence problem",
in several ways:
to copyful "SSTs"~\cite{DBLP:journals/siglog/Bojanczyk19a,DBLP:journals/fuin/FiliotR21},
which allow to copy the content of variables several times, 
to infinite strings~\cite{DBLP:conf/lics/AlurFT12}, and to
trees~\cite{DBLP:journals/jacm/AlurD17}. 
Generalizations of these results to the "finite-valued" setting 
yield interesting questions. On trees, similar questions 
(effective "finite valuedness", decomposition and "equivalence") 
have been answered positively for bottom-up tree
transducers~\cite{DBLP:journals/mst/Seidl94}.

Finally, "SSTs" have linear "input"-to-"output" growth (in the length of the
strings). There is a recent trend in extending transducer models 
to allow polynomial growth
\cite{DBLP:conf/icalp/BojanczykKL19,DBLP:conf/lics/Bojanczyk22,DBLP:conf/lics/Bojanczyk23a,GaetanPhD}.
"Finite valuedness" has not  been studied in this context yet.

    \printbibliography


\end{document}